\documentclass{article}

\usepackage[margin=1in]{geometry}
\usepackage[T1]{fontenc}
\usepackage{amsmath, amssymb, amsthm, mathtools}
\usepackage[cal=euler]{mathalpha}
\usepackage{cochineal}
\usepackage{microtype}
\usepackage{graphicx}
\usepackage[numbers,sort&compress]{natbib}
\usepackage{hyperref}
\usepackage{booktabs}
\usepackage{xcolor}

\newcommand{\akshay}[1]{}
\newcommand{\vac}[1]{}

\newtheorem{theorem}{Theorem}[section]
\newtheorem{proposition}[theorem]{Proposition}
\newtheorem{lemma}[theorem]{Lemma}
\newtheorem{corollary}[theorem]{Corollary}

\newcommand{\Div}{\mathrm{D}}
\newcommand{\Ent}{\mathrm{H}}
\newcommand{\TC}{\mathrm{TC}}
\newcommand{\JS}{\mathrm{JS}}
\newcommand{\EV}{\mathbb{E}}
\newcommand{\Prob}{\text{\upshape Pr}}
\newcommand{\cX}{\mathcal{X}}
\newcommand{\cA}{\mathcal{A}}
\newcommand{\cH}{\mathcal{H}}
\newcommand{\cT}{\mathcal{T}}
\newcommand{\cF}{\mathcal{F}}
\newcommand{\cI}{\mathcal{I}}
\newcommand{\Pemp}{\hat P_n}
\newcommand{\dkl}[2]{\Div(#1\,\|\,#2)}
\newcommand{\ip}[2]{\langle #1,\,#2\rangle}
\newcommand{\ind}{\mathbf{1}}
\newcommand{\om}{\omega_{\cA}}
\newcommand{\muA}{\mu_{\cA}}
\newcommand{\Qs}{Q^{\ast}}
\DeclareMathOperator{\supp}{supp}
\DeclareMathOperator*{\argmin}{arg\,min}

\makeatletter
\providecommand{\bvapx@title}{}
\AtBeginDocument{\@ifundefined{@title}{}{\global\let\bvapx@title\@title}}
\newcommand{\appendixtitle}{%
  \clearpage
  \phantomsection
  \begin{center}
    {\LARGE\bfseries Appendices to ``\bvapx@title''\par}%
  \end{center}
  \vspace{1.5\baselineskip}%
}
\makeatother

\title{An information identity reveals the geometry of deviation events}
\author{Akshay Balsubramani \\ {\small \texttt{akshay@vac.bio}}}
\date{}

\providecommand{\akshay}[1]{}\renewcommand{\akshay}[1]{}%
\providecommand{\vac}[1]{}\renewcommand{\vac}[1]{}%
\begin{document}
\maketitle

\begin{abstract}
In a deviation event, the empirical measure of $n$ independent draws lands in a set of distributions.
The exponent of its probability splits, at every $n$, into two nonnegative terms.
The first is the rate: $n$ times the relative entropy, from the population, of the distribution of a typical draw under the event.
The second is the dependence that conditioning induces among the draws, measured by their total correlation.
Their relative sizes depend on the geometry of the set.
This paper focuses on this exact split and on the geometry that the interplay of its two terms reveals.
The dependence vanishes exactly when the event confines every draw to a single set, and the classical rate of that constraint is exact.
On a set invariant under a compact group that fixes the population and leaves no invariant set of intermediate probability, the rate vanishes; uniform convergence over a hypothesis class that such a group permutes is one.
On a half-space whose threshold sits a fixed number of standard errors above the mean, the fraction of the exponent that is dependence tends to a function of the event's probability alone.
On a set split into pieces, the dependence is the pieces' average plus $n$ times the Jensen--Shannon divergence of their marginals, minus the entropy of their weights.
The split extends to a general reference law.
When the information projection is such a law, the split gives the exact value of the expectation in the exponential change of measure to the projection.
\end{abstract}

\section{Introduction}
\label{sec:intro}

Sanov's theorem gives the exponential rate at which the empirical measure $\Pemp$ of $n$ independent draws from a population $P$ falls into a set $\cA$ of distributions: the relative entropy from $P$ of the information projection of $P$ onto $\cA$ \cite{Sanov57,csiszar1984Sanov,dembo2010large}.
The finite-sample inequalities in daily use read this rate off at finite $n$: Chernoff's and Hoeffding's bounds for a single statistic, the union bound over a hypothesis class, and the Donsker--Varadhan change of measure behind PAC--Bayes bounds \cite{hoeffding1963,cover1999elements,vapnik1971,hellstrom2025generalizationpacbayes}.
Chernoff's bound reads the rate off with no correction, the union bound and the change of measure incur a polynomial or logarithmic one, and Hoeffding's bound relaxes the rate itself.
When performing statistical inference, the rate is read off the same way with an e-variable, a nonnegative statistic whose mean under the null hypothesis, here $P$, is at most one.
On a closed convex set that excludes $P$, the likelihood ratio of the information projection is, under regularity conditions, the growth-optimal e-variable, and Markov's inequality applied to it gives the Csisz\'ar--Sanov--Chernoff bound \cite{grunwald2024growthoptimal}.
These exact readings rest on the convexity of the set, through the Pythagorean inequality for information projections \cite{csisz1975idivergenceb}.
Convexity is also their obstacle: the events of uniform convergence over a class and of a two-sided tail are unions of half-spaces, which are generally not convex.

An exact form needs no convexity or closedness of $\cA$, and no structure on the space beyond measurability.
Write $\muA$ for the law of the sample conditioned on $\Pemp\in\cA$ and $\om$ for its one-coordinate marginal, the distribution of a typical draw under the event.
The total correlation of $\muA$ is $\TC(\cA) = \dkl{\muA}{\om^{\otimes n}}$, its relative entropy from the product of its own marginals \cite{watanabe1960information}; it depends on $P$ only through $\muA$.
Then at every $n$
\begin{equation}
\label{eq:lemma4}
  -\log \Prob_P(\Pemp \in \cA) \;=\; \underbrace{n\,\dkl{\om}{P}}_{\text{rate}} \;+\; \underbrace{\dkl{\muA}{\om^{\otimes n}}}_{\text{dependence}}
\end{equation}
with both terms nonnegative.
The first is the rate of the conditioned marginal, and the second is the dependence that conditioning induces among the coordinates. 
Replacing $\om$ by the information projection needs $\om\in\cA$, which convexity of $\cA$ guarantees on a finite alphabet \cite{balsubramani2020sharp}.
The exponential change of measure to the projection factors out the Sanov rate exactly and leaves an expectation over a sample from the projection \cite{cram1938,BahadurRao1960,ney1983dominating,dinwoodie1992mesures}.
Theorem~\ref{thm:identity} writes the same exponent at a general reference law.
At the reference $\om$ it is \eqref{eq:lemma4}.
When the projection $\Qs$ is a reference law and $P\ll\Qs$ (Section~\ref{sec:identity}), the theorem at the projection gives the exact value of that expectation.
Section~\ref{sec:related} gives the sources of \eqref{eq:lemma4}.

The relative sizes of the two terms depend on the geometry of $\cA$ (Figure~\ref{fig:regime-dial}).
It is instructive to consider two extreme cases.
At one extreme is a half-space, where the first term is the exponent up to a logarithm in $n$.
There the coordinates decouple as $n$ grows: any fixed number of them become independent in the limit, with common law the information projection.
This limit is Gibbs conditioning, the conditional limit theorem that accompanies Sanov's theorem.
The second term, the dependence of the whole sample, does not vanish in this limit.
Section~\ref{sec:special-cases} gives an example in which it rises with $n$, at most logarithmically, while the dependence of any two coordinates falls to zero.
At the other extreme is a union of half-spaces permuted by a symmetry that fixes $P$ and acts transitively on the alphabet.
Its conditioned marginal is $P$ itself, so the first term is zero and every nat of the exponent is dependence among the sampled points.
The rate of its most probable type still lies within $K\log(n+1)$ of the exponent, for an alphabet of $K$ letters, while the mechanism that rate suggests, a drift of the marginal toward the projection, is absent.

The dependence vanishes if and only if the event confines every draw to one measurable set, and the classical rate of that constraint is exact at every $n$ (Proposition~\ref{prop:face}).
On a discrete alphabet with $P$ of full support these are the events whose types form a face of the simplex.
The marginal rate vanishes on every set invariant under a compact group that fixes $P$ and leaves no invariant set of intermediate probability (Proposition~\ref{prop:symmetry}), so the whole exponent is dependence there.
Uniform convergence over a hypothesis class that such a group permutes is such an event, and so is a large empirical mean of uniform points on a sphere.
The dependence of a set split into pieces is the within-piece dependence plus $n$ times the Jensen--Shannon divergence of the piece marginals minus the entropy of the piece weights (Theorem~\ref{thm:partition}), so an event made of separated modes of comparable probability has dependence of order $n$.
Section~\ref{sec:results} states the identity and these results, which hold on any measurable space. 

Section~\ref{sec:finite} adds one hypothesis, unrelated to convexity: a finite alphabet with $P$ of full support.
On a finite alphabet the method of types supplies an envelope the identity does not.
On every set the dependence lies within $K\log(n+1)$ of $n$ times the gap between the rate of the most probable type and the rate of the conditioned marginal (Proposition~\ref{prop:bracket}).
On a half-space whose threshold sits a fixed number of standard errors above the mean, the fraction of the exponent that is dependence tends to a function of the event's probability alone, the same for every population and every nonconstant statistic (Proposition~\ref{prop:halfspace-gauss}).
On a convex set the log-probability gap against the tangent half-space at the projection is nonnegative at every $n$ and zero only when the two sets have the same types (Proposition~\ref{prop:tangent}).
On a curved set the gap converges to a curvature constant as $n$ grows, unless the log-likelihood ratio of the projection to the population takes its values on an arithmetic progression (Proposition~\ref{prop:curvature-limit}).
In that case the gap follows the position of the types at sample size $n$ relative to the half-space.
The gap has no limit unless that position is the same at every $n$.
Section~\ref{sec:special-cases} reads the classical results off the identity.
Section~\ref{sec:learning} finds the dependence share in the deviation events of learning.

\section{Setup}
\label{sec:setup}
\subsection{The conditioned sample and its total correlation}
\label{sec:setup-sample}

The sample $Z = (Z_1, \ldots, Z_n)$ is independent with law $P$, and its empirical measure is $\Pemp = \frac1n\sum_i \delta_{Z_i}$.
The space $\cX$ is measurable, and the distributions on it are given the $\sigma$-algebra generated by the evaluations $t\mapsto t(B)$ over measurable $B\subseteq\cX$, so that $\Pemp$ is measurable and $E_\cA$ below is an event.
For a measurable set $\cA$ of distributions on $\cX$ with $\Prob_P(\Pemp\in\cA) > 0$, the deviation event is $E_\cA = \{z \in \cX^n : \hat P_n(z) \in \cA\}$, the conditioned sample law is $\muA = P^{\otimes n}(\cdot \mid E_\cA)$, and $\om$ is its one-coordinate marginal, the same along every coordinate by exchangeability and equal to the barycenter $\EV_{\muA}[\Pemp]$.
Exchangeability is automatic because $E_\cA$ and $P^{\otimes n}$ are permutation invariant, and $\om\ll P$ because $\muA\ll P^{\otimes n}$.
The total correlation of $\muA$ is $\TC_P(\cA) := \dkl{\muA}{\om^{\otimes n}}$.
It is nonnegative, and zero exactly when $\muA$ is a product measure \cite{watanabe1960information}.
On a discrete alphabet with $\Ent(\om)$ finite it is the multi-information $n\,\Ent(\om) - \Ent(\muA)$; the divergence is the form that survives elsewhere, since entropies need not be finite while $\TC_P(\cA)$ is bounded by the exponent.
The subscript records that the population enters through $\muA$.
Every total correlation below is taken under $P$, so the subscript is dropped.

\subsection{The information projection and the defect}
\label{sec:setup-projection}

The information projection of $P$ onto $\cA$ is $\Qs \in \argmin_{Q\in\cA}\dkl{Q}{P}$, unique when $\cA$ is convex and closed in variation \cite{csisz1975idivergenceb}.
The inner product of a measurable function $f:\cX\to[-\infty,\infty)$ with a distribution $t$ on $\cX$ is $\ip{f}{t} = \ip{t}{f} := \int f\,dt$, whenever the integral is defined.
For two distributions $t$ and $s$, $\ip{f}{t - s} := \ip{f}{t} - \ip{f}{s}$ whenever the difference is defined.
At an empirical measure, $\ip{f}{\Pemp} = \frac1n\sum_i f(Z_i)$.
When $\Qs\ll P$, fix a version of $g := \log(d\Qs/dP)$, taking the value $-\infty$ where the density vanishes; the defect of a distribution $t$ on $\cX$ is
\begin{equation}
\label{eq:defect}
  \Delta(t) := \ip{g}{t - \Qs}
\end{equation}
which equals $\dkl{t}{P} - \dkl{\Qs}{P} - \dkl{t}{\Qs}$ whenever $\dkl{t}{P}$, $\dkl{t}{\Qs}$ and $\dkl{\Qs}{P}$ are finite (Lemma~\ref{lem:defect-lr}(i)).
The defect is defined by the inner product because an empirical measure on a nonatomic space has $\dkl{\Pemp}{P} = \infty$, while $\ip{g}{\Pemp}$ is almost surely finite for a sample from $\Qs$.
The Pythagorean inequality for information projections \cite{csisz1975idivergenceb} is the statement $\Delta \ge 0$ at every $t$ of a convex $\cA$ with $\dkl{t}{P} < \infty$, with equality throughout a linear family, a set of distributions cut out by finitely many linear equalities.

\subsection{The exponential change of measure}
\label{sec:setup-tilt}

Equation~\eqref{eq:lemma4} holds for every $\cA$.
An exponential change of measure gives another exact form of the deviation probability.
When $P$ and $\Qs$ are mutually absolutely continuous, $g$ is finite $P$-almost everywhere and $dP/d\Qs = e^{-g}$.
The likelihood ratio of the sample is then the product $dP^{\otimes n}/d(\Qs)^{\otimes n} = \prod_{i=1}^n e^{-g(Z_i)} = e^{-n\ip{g}{\Pemp}}$, a function of the empirical measure alone (Lemma~\ref{lem:defect-lr}(ii)).
Weighting a sample from $\Qs$ by this ratio gives, at every $n$ and for every $\cA$,
\begin{equation}
\label{eq:sharp-sanov}
  -\log\Prob_P(\Pemp\in\cA) = n\,\dkl{\Qs}{P} - \log \EV_{\Qs}\!\bigl[\ind\{\Pemp\in\cA\}\,e^{-n\Delta(\Pemp)}\bigr]
\end{equation}
with the expectation over a sample from $\Qs$.
It is classical on a half-space and at a dominating point or measure, where the defect is nonnegative on $\cA$ \cite{cram1938,BahadurRao1960,dembo2010large,ney1983dominating,dinwoodie1992mesures}. 

\begin{figure}[!t]
\centering
\includegraphics[width=\linewidth]{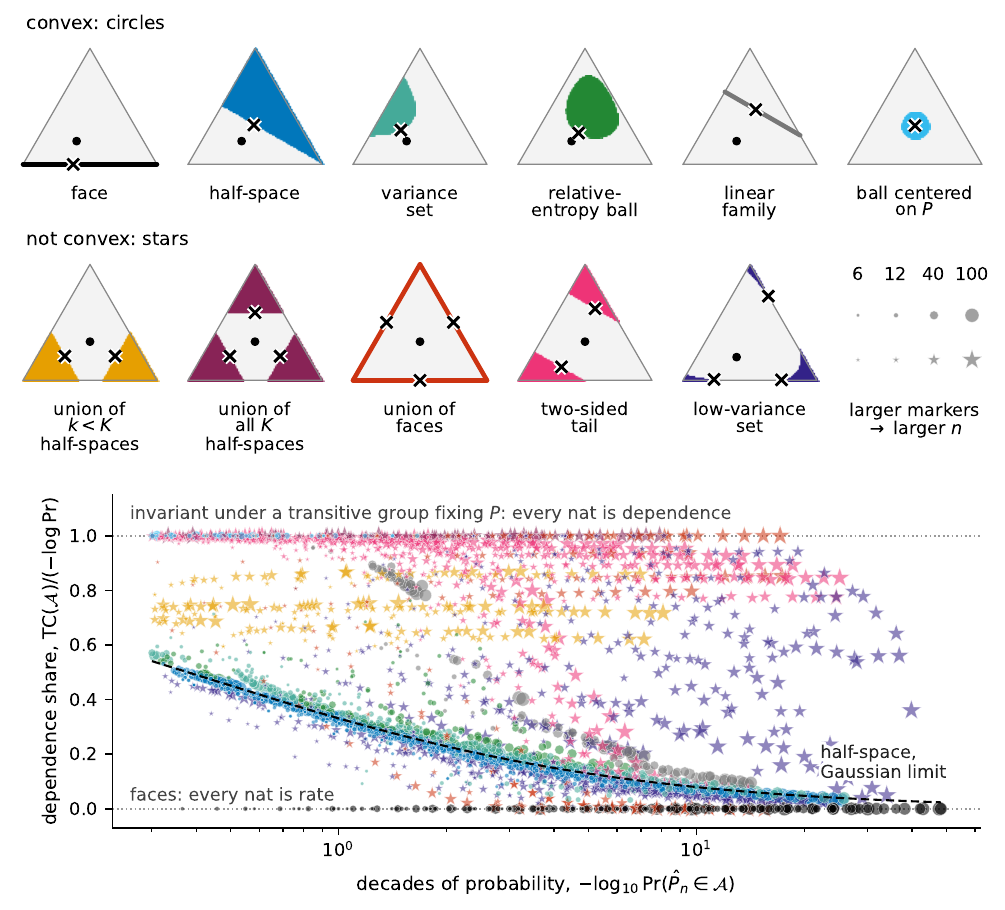}
\caption{\textbf{The fraction of the exponent that is dependence runs from zero on a face to one on a set invariant under a transitive group that fixes $P$, whether or not that set is convex.}
Top: one example set of each family's form on the simplex of three letters, the convex families in the first row, with the population $P$ (dot) and the information projection onto each piece of the set (crosses).
Bottom: the dependence share $\TC(\cA)/(-\log\Prob(\Pemp\in\cA))$ against the rarity of the event, for configurations of these families on alphabets of three and four letters at sample sizes from $6$ to $100$.
Circles mark the convex families and stars the others.
Larger markers mark larger $n$.
The markers are translucent, so the color deepens where they overlap.
The dependence vanishes on the faces and on no other configuration.
The marginal rate vanishes exactly on the configurations whose types at that $n$ form a set invariant under a transitive group that fixes $P$.
They are, on a uniform $P$, the unions of all $K$ half-spaces or all $K$ faces, the balls centered on $P$, and the low-variance sets whose types at that $n$ are the vertices of the simplex.
On the half-spaces, variance sets and linear families the share falls as the event gets rarer.
The dashed curve is the limit of a half-space's share when its threshold sits a fixed number of standard errors above the mean (Proposition~\ref{prop:halfspace-gauss}), a function of the probability alone.
The half-spaces lie below it or barely above it and approach it as $n$ grows.
On a uniform population the dependence share of the unions of fewer than $K$ half-spaces or faces and of the two-sided tails is above one half.
On a skewed population the most probable face comes to dominate a union of faces, so the share of the union falls toward zero.}
\label{fig:regime-dial}
\end{figure}

\section{Results on a measurable space}
\label{sec:results}

The results of this section hold on the measurable space of Section~\ref{sec:setup}, with no convexity or closedness of $\cA$.

\subsection{An information-theoretic identity}
\label{sec:identity}

Equation~\eqref{eq:lemma4} writes the exponent at the conditioned marginal $\om$, and \eqref{eq:sharp-sanov} writes it at the projection $\Qs$.
A reference law is a distribution $R$ on $\cX$ with $\dkl{R}{P}$ and $\dkl{\om}{R}$ finite, so that $R\ll P$ and $\om\ll R$.
One identity writes the exponent at every reference law.
At $R = \om$ it is \eqref{eq:lemma4}.
When the projection $\Qs$ is itself a reference law and $P\ll\Qs$, the identity at $R = \Qs$, compared with \eqref{eq:sharp-sanov}, gives the exact value of the expectation in \eqref{eq:sharp-sanov}.

\begin{theorem}[One identity, any reference]
\label{thm:identity}
Let $\cA$ satisfy $\Prob_P(\Pemp\in\cA) > 0$, and let $R$ be a distribution on $\cX$ with $R\ll P$ and $\om\ll R$, with $\dkl{R}{P}$ and $\dkl{\om}{R}$ both finite; set $g := \log(dR/dP)$.
Then
\begin{equation}
\label{eq:identity}
  -\log\Prob_P(\Pemp\in\cA) = n\,\dkl{R}{P} + n\,\ip{g}{\om - R} + n\,\dkl{\om}{R} + \TC(\cA)
\end{equation}
\end{theorem}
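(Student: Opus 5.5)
The plan is to first prove the base identity \eqref{eq:lemma4} and then pass to an arbitrary reference law $R$ by a chain-rule decomposition of $\dkl{\om}{P}$.

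\textbf{The base identity.} Write $p := \Prob_P(\Pemp\in\cA)$. Since $\muA = P^{\otimes n}(\cdot\mid E_\cA)$, its density with respect to $P^{\otimes n}$ is $\ind_{E_\cA}/p$. Hence
\[
\dkl{\muA}{P^{\otimes n}} = \EV_{\muA}\bigl[\log(\ind_{E_\cA}/p)\bigr] = -\log p .
\]
We have $\om\ll P$, and $\om\ll R\ll P$ under the hypotheses. The chain rule for relative entropy through the intermediate measure $\om^{\otimes n}$ then gives
\[
\dkl{\muA}{P^{\otimes n}} = \dkl{\muA}{\om^{\otimes n}} + \EV_{\muA}\Bigl[\log\tfrac{d\om^{\otimes n}}{dP^{\otimes n}}\Bigr].
\]
The last expectation is $\sum_i \EV_{\muA}[\log\tfrac{d\om}{dP}(Z_i)] = n\,\dkl{\om}{P}$, because every coordinate of $\muA$ has law $\om$. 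This is \eqref{eq:lemma4}.

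To make the chain rule rigorous, note that the left side $-\log p$ is finite. I would justify the split by writing $\log(d\muA/dP^{\otimes n}) = \log(d\muA/d\om^{\otimes n}) + \log(d\om^{\otimes n}/dP^{\otimes n})$ $\muA$-almost surely. The negative parts of both terms are integrable: for a log-density under its own numerator measure, the negative part has integral at most $1/e$. The positive parts are then finite because the total is finite. So each term lies in $[0,\infty)$, and there is no $\infty-\infty$ issue.

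\textbf{Passing to $R$.} Since $\om\ll R\ll P$, we have $d\om/dP = (d\om/dR)(dR/dP)$ $\om$-almost surely. Therefore
\[
\dkl{\om}{P} = \dkl{\om}{R} + \ip{g}{\om}.
\]
Here $\ip{g}{\om}$ is finite: it is the difference of the finite quantities $\dkl{\om}{P}$ and $\dkl{\om}{R}$, and $\dkl{\om}{P}\le -\log p/n$ is finite by the first step. Next split
\[
\ip{g}{\om} = \ip{g}{R} + \ip{g}{\om - R} = \dkl{R}{P} + \ip{g}{\om-R},
\]
which is legitimate because $\ip{g}{R} = \dkl{R}{P}$ is finite by hypothesis. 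Substituting into \eqref{eq:lemma4} yields \eqref{eq:identity}.

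\textbf{Main obstacle.} The main difficulty is integrability bookkeeping, not algebra. We must confirm that $g$ is $\om$-integrable and that the version of $g$ (possibly $-\infty$ off the support of $R$) causes no trouble. This is handled by noting that $\om\ll R$ makes $\{g=-\infty\}$ an $\om$-null set, so $g$ is finite $\om$-almost surely. Its $\om$-integrability then follows from the decomposition $\log(d\om/dP) = \log(d\om/dR) + g$: the left side and $\log(d\om/dR)$ are both $\om$-integrable, by finiteness of $\dkl{\om}{P}$ and $\dkl{\om}{R}$ together with the $1/e$ bound on negative parts.
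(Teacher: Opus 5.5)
Your proof is correct and uses the same ingredients as the paper's (Lemma~\ref{lem:conditioned}, the chain rule for an exchangeable law, and the split $\ip{g}{\om} = \dkl{R}{P} + \ip{g}{\om - R}$), but it takes the two changes of reference in the opposite order. You apply the chain rule at $Q = P$ to get \eqref{eq:lemma4} and then change reference on the marginal alone, which is Lemma~\ref{lem:defect-lr}(i) at $t = \om$ with $R$ in place of $\Qs$; the paper instead passes from $P^{\otimes n}$ to $R^{\otimes n}$ on the whole sample and then applies Lemma~\ref{lem:chain} at $Q = R$. Two implicit steps deserve a line each: your log-density split needs $\muA\ll\om^{\otimes n}$, which holds because $\muA$ is concentrated on $G^n$ with $G = \{d\om/dP > 0\}$, where $\om^{\otimes n}$ and $P^{\otimes n}$ are equivalent; and the $1/e$ bound for $\log(d\om^{\otimes n}/dP^{\otimes n})$ under $\muA$ applies coordinatewise through the marginal $\om$, giving $n/e$.
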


The four terms are the rate of the reference, two transport terms and the dependence.
The transport terms are the first-order and second-order costs of moving the reference to the conditioned marginal.
The dependence is the only term that does not depend on the reference.
Theorem~\ref{thm:identity} at $R = \Qs$ writes the exponent as $n\,\dkl{\Qs}{P} + n\,\Delta(\om) + n\,\dkl{\om}{\Qs} + \TC(\cA)$.
Equation~\eqref{eq:sharp-sanov} writes it as $n\,\dkl{\Qs}{P}$ minus the logarithm of the expectation over a sample from $\Qs$.
Corollary~\ref{cor:two-references}(ii) equates the two.

\begin{corollary}[Two references]
\label{cor:two-references}
Under the hypotheses of Theorem~\ref{thm:identity}:
\begin{enumerate}
  \item[(i)] at $R = \om$ both transport terms vanish and \eqref{eq:identity} is \eqref{eq:lemma4};
  \item[(ii)] if $\cA$ has an information projection $\Qs$ with $P$ and $\Qs$ mutually absolutely continuous, at $R = \Qs$ the first-order term is $n\,\Delta(\om)$ and
  \begin{equation}
  \label{eq:three-way}
    -\log\EV_{\Qs}\!\bigl[\ind\{\Pemp\in\cA\}\,e^{-n\Delta(\Pemp)}\bigr]
    = n\,\dkl{\om}{\Qs} + n\,\Delta(\om) + \TC(\cA)
  \end{equation}
\end{enumerate}
\end{corollary}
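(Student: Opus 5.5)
Part (i) follows directly from Theorem~\ref{thm:identity}, and part (ii) compares that theorem with \eqref{eq:sharp-sanov}. First check that $R=\om$ is a reference law. We need $\dkl{\om}{P}<\infty$. By the chain rule, or the data-processing inequality applied to one coordinate,
\[
n\,\dkl{\om}{P}\le \dkl{\muA}{P^{\otimes n}} = -\log\Prob_P(\Pemp\in\cA)<\infty,
\]
since $\muA$ is $P^{\otimes n}$ conditioned on an event of positive probability. Also $\dkl{\om}{\om}=0$, so both finiteness hypotheses hold.

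\textbf{Part (i).} Take $g=\log(d\om/dP)$. The first-order term $n\,\ip{g}{\om-\om}$ is the difference of the same integral with itself. That integral is finite, because $\ip{g}{\om}=\dkl{\om}{P}<\infty$. So the term vanishes. The second-order term is $n\,\dkl{\om}{\om}=0$. Then \eqref{eq:identity} reduces to $n\,\dkl{\om}{P}+\TC(\cA)$, which is \eqref{eq:lemma4}.

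\textbf{Part (ii).} Suppose $\Qs$ is a reference law, i.e.\ $\dkl{\Qs}{P}<\infty$ and $\dkl{\om}{\Qs}<\infty$. I read this as part of "under the hypotheses of Theorem~\ref{thm:identity}". With $R=\Qs$, the function $g$ is the same $g$ as in \eqref{eq:defect}. So the first-order term is $n\,\ip{g}{\om-\Qs}=n\,\Delta(\om)$ by definition.

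Theorem~\ref{thm:identity} then gives
\[
-\log\Prob_P(\Pemp\in\cA)=n\,\dkl{\Qs}{P}+n\,\Delta(\om)+n\,\dkl{\om}{\Qs}+\TC(\cA).
\]
Mutual absolute continuity is exactly the hypothesis under which \eqref{eq:sharp-sanov} holds, via Lemma~\ref{lem:defect-lr}(ii). That lemma gives
\[
-\log\Prob_P(\Pemp\in\cA)=n\,\dkl{\Qs}{P}-\log\EV_{\Qs}\bigl[\ind\{\Pemp\in\cA\}\,e^{-n\Delta(\Pemp)}\bigr].
\]
Subtract the finite quantity $n\,\dkl{\Qs}{P}$ from both right-hand sides and equate them. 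This yields \eqref{eq:three-way}.

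\textbf{Main obstacle.} The only delicate step is making sure every term is finite and well defined, so that the subtraction and the splitting of $\ip{g}{\om-\Qs}$ are legitimate. Three facts are needed:
\begin{itemize}
\item $\ip{g}{\Qs}=\dkl{\Qs}{P}$ is finite.
\item $\ip{g}{\om}$ is finite. Write $\ip{g}{\om}=\ip{\log(d\om/dP)}{\om}-\ip{\log(d\om/d\Qs)}{\om}=\dkl{\om}{P}-\dkl{\om}{\Qs}$. Both pieces are finite, by the bound above and by the reference-law hypothesis.
\item The expectation in \eqref{eq:sharp-sanov} is positive, because it equals $e^{n\dkl{\Qs}{P}}\Prob_P(\Pemp\in\cA)>0$.
\end{itemize}
With these checks the argument is pure bookkeeping.
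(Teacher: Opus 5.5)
Your proof is correct and follows the paper's argument. Part (i) is immediate from Theorem~\ref{thm:identity} at $R=\om$. Part (ii) equates that theorem at $R=\Qs$ with \eqref{eq:sharp-sanov}, which comes from Lemma~\ref{lem:defect-lr}(ii) and the factorization $e^{-n\ip{g}{\Pemp}} = e^{-n\dkl{\Qs}{P}}e^{-n\Delta(\Pemp)}$, and then cancels $n\,\dkl{\Qs}{P}$.

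Your explicit checks, that $\om$ is a reference law and that $\ip{g}{\om}$ is finite, are ones the paper leaves implicit. One small correction: the factor $n$ in your first display comes from Lemma~\ref{lem:chain} together with $\TC(\cA)\ge 0$. Data processing on a single coordinate gives the bound without that factor, though it still suffices for finiteness.
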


Equation~\eqref{eq:three-way} is the exact difference between \eqref{eq:lemma4} and \eqref{eq:sharp-sanov}.
The defect term $n\,\Delta(\om)$ vanishes on a linear family.
It is nonnegative on a convex set that contains $\om$, as every convex set on a finite alphabet does.
On a half-space that excludes $P$ and is not a face it is positive at every $n$, because the barycenter lies strictly inside the half-space while the projection lies on its boundary.
Off convexity it can be negative.
On a union of half-spaces whose barycenter is $P$, as when a transitive symmetry permutes the pieces, it equals $n\,\ip{g}{P - \Qs} = -n\,[\dkl{P}{\Qs} + \dkl{\Qs}{P}]$, because $\ip{g}{P} = -\dkl{P}{\Qs}$ and $\ip{g}{\Qs} = \dkl{\Qs}{P}$.

\subsection{Faces and symmetric sets}
\label{sec:taxonomy}

By \eqref{eq:lemma4} the dependence share $\TC(\cA)/(-\log\Prob_P(\Pemp\in\cA))$, the fraction of the exponent that is dependence, lies in $[0,1]$ for every event of probability below one.
The share is zero on a face and one on a set invariant under a transitive group that fixes $P$ (Figure~\ref{fig:regime-dial}).

\begin{proposition}[Zero dependence]
\label{prop:face}
$\TC(\cA) = 0$ if and only if $E_\cA = F^n$ up to a $P^{\otimes n}$-null set, for some measurable $F\subseteq\cX$ with $P(F) > 0$.
In that case $\muA = P(\cdot\mid F)^{\otimes n}$, $\om = P(\cdot\mid F)$ and $-\log\Prob_P(\Pemp\in\cA) = -n\log P(F)$.
If $\cA$ is the face $\{Q : Q(F) = 1\}$ itself, then also $\Qs = \om$, $\Delta\equiv 0$ on $\cA$, and the classical rate $n\,\dkl{\Qs}{P}$ equals the exponent at every $n$.
\end{proposition}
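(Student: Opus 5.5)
The plan is to split the proof into the two directions of the equivalence, then read off the formulas and treat the face case.

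The easy direction comes first. Suppose $E_\cA = F^n$ up to a $P^{\otimes n}$-null set, with $P(F)>0$. Then $\muA = P^{\otimes n}(\cdot\mid F^n) = P(\cdot\mid F)^{\otimes n}$, because the density of $\muA$ with respect to $P^{\otimes n}$ is $\prod_i \ind_F(z_i)/P(F)$. Its marginal is $\om = P(\cdot\mid F)$, so $\muA = \om^{\otimes n}$ and $\TC(\cA)=0$. The probability is $P(F)^n$, which gives the exponent $-n\log P(F)$. As a check, \eqref{eq:lemma4} is consistent with this, since $\dkl{P(\cdot\mid F)}{P} = -\log P(F)$.

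For the converse, $\TC(\cA)=0$ means $\muA = \om^{\otimes n}$. Write $h = d\om/dP$. On one side, $d\muA/dP^{\otimes n} = \ind_{E_\cA}/p$ with $p = \Prob_P(\Pemp\in\cA)$. On the other side, $d\om^{\otimes n}/dP^{\otimes n} = \prod_i h(z_i)$. Hence $\prod_i h(z_i) = \ind_{E_\cA}(z)/p$ for $P^{\otimes n}$-almost every $z$. The product takes only the values $0$ and $1/p$, and this is where a product structure forces $h$ to be constant on its support. Set $F = \{h>0\}$. On $F^n$ the product is positive, so it equals $1/p$ almost everywhere. Fix two coordinates $z_1, z_1'$ and use Fubini on a set of positive measure of the remaining coordinates in $F^{n-1}$ to get $h(z_1) = h(z_1')$ for almost every pair in $F\times F$. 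So $h = c$ $P$-a.e.\ on $F$, and $\int h\,dP = 1$ gives $c = 1/P(F)$. Outside $F^n$ some factor vanishes, so the product is $0$. Therefore $E_\cA = F^n$ up to a null set, and $P(F)>0$ because $\om$ is a probability measure. This Fubini/measure-theoretic step, which passes from an a.e.\ product identity to constancy of $h$ on $F$, is the main technical point. An alternative route avoids it: integrate $\prod_i h(z_i)^2$ to get $(\int h^2dP)^n = 1/p$ and compare with $\int h\,dP = 1$ and $p = P(F)^n$, an equality case of Cauchy--Schwarz. The direct Fubini argument seems cleaner, so I will use that.

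Finally, the face case $\cA = \{Q: Q(F)=1\}$. Here $E_\cA = F^n$ exactly, since $\Pemp(F)=1$ if and only if every $Z_i\in F$. Every $Q\in\cA$ with $Q\ll P$ satisfies $\dkl{Q}{P} = \dkl{Q}{P(\cdot\mid F)} - \log P(F)$. This is minimized uniquely at $Q = P(\cdot\mid F)$, so $\Qs = \om$ and $n\dkl{\Qs}{P} = -n\log P(F)$ is exactly the exponent. For the defect, $g = \log(d\Qs/dP) = -\log P(F)$ on $F$ and $-\infty$ off $F$. Every $t\in\cA$ puts full mass on $F$, so $\ip{g}{t} = -\log P(F) = \ip{g}{\Qs}$, and hence $\Delta\equiv 0$ on $\cA$.
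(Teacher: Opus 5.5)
Your proof is correct and follows essentially the same route as the paper's. The easy direction goes through $P^{\otimes n}(\cdot\mid F^n) = P(\cdot\mid F)^{\otimes n}$. The converse matches $\prod_i h(z_i)$ with $\ind_{E_\cA}/p$ and uses Fubini to make $h$ constant on $F = \{h>0\}$, and the face case uses $\dkl{Q}{P} = \dkl{Q}{P(\cdot\mid F)} - \log P(F)$ with $g = -\log P(F)$ on $F$.
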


A face event conditions every point to land in $F$, and points conditioned separately stay independent.
Zero training loss of a predictor fixed before the sample, which confines every point to the set that predictor classifies correctly, is a face event.

\begin{proposition}[Symmetric events]
\label{prop:symmetry}
Let a compact group $G$ act measurably on $\cX$ with induced action on distributions, let $P$ and $\cA$ be $G$-invariant, and let $\cI$ be the $\sigma$-algebra of $G$-invariant sets.
Then $\om$ is $G$-invariant and $\dkl{\om}{P} = \dkl{\om|_\cI}{P|_\cI}$.
If $\cI$ is $P$-trivial then $\om = P$ and $-\log\Prob_P(\Pemp\in\cA) = \TC(\cA)$.
\end{proposition}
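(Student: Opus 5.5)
The plan is to get $G$-invariance of $\om$ first, then use the data-processing equality for relative entropy under conditioning on invariant sets, and finally specialize to trivial $\cI$ using \eqref{eq:lemma4}.

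\emph{Step 1: invariance.} For $h\in G$, the diagonal action $z\mapsto hz=(hz_1,\dots,hz_n)$ preserves $P^{\otimes n}$ because $P$ is invariant. It also satisfies $\Pemp(hz)=h_*\Pemp(z)$, so it preserves $E_\cA$ because $\cA$ is invariant. Hence it preserves $\muA=P^{\otimes n}(\cdot\mid E_\cA)$. Taking one-coordinate marginals gives $h_*\om=\om$ for every $h$.

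\emph{Step 2: the divergence identity.} Let $H$ be normalized Haar measure on $G$. For an invariant $Q\ll P$ with density $f=dQ/dP$, I will show that $\bar f(x):=\int_G f(hx)\,dH(h)$ is also a density of $Q$ with respect to $P$. The argument uses invariance of both measures and Fubini, which needs joint measurability of $(h,x)\mapsto hx$ (this is what measurability of the action and compactness supply). For measurable $B$,
\[
\int_B \bar f\,dP = \int_G\int \ind_B(x) f(hx)\,dP(x)\,dH(h) = \int_G\int \ind_B(h^{-1}y) f(y)\,dP(y)\,dH(h) = \int_G Q(hB)\,dH(h) = Q(B).
\]
Since $\bar f$ is $G$-invariant, and $G$-invariant functions are $\cI$-measurable, $dQ/dP$ has an $\cI$-measurable version. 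It therefore equals $d(Q|_\cI)/d(P|_\cI)$, and
\[
\dkl{Q}{P}=\int \log\bar f\,dQ=\dkl{Q|_\cI}{P|_\cI}.
\]
Apply this with $Q=\om$, which is invariant by Step 1 and satisfies $\om\ll P$ by Section~\ref{sec:setup-sample}.

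\emph{Step 3: trivial $\cI$.} If $\cI$ is $P$-trivial, then the $\cI$-measurable density $\bar f$ is $P$-a.s. constant, and that constant is $1$ since $\int\bar f\,dP=1$. So $\om=P$, $\dkl{\om}{P}=0$, and \eqref{eq:lemma4} reduces to $-\log\Prob_P(\Pemp\in\cA)=\TC(\cA)$.

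The main obstacle is the measure-theoretic step asserting that the invariant function $\bar f$ is $\cI$-measurable and that the Haar averaging is legitimate. The subtle point is that $G$-invariant sets may be only $P$-a.e. invariant, which would break the identification. If this fails, I would fall back on defining $\cI$ as the sets $B$ with $hB=B$ for all $h$, and checking directly that $\{\bar f>c\}$ lies in it. That check holds because $\bar f(kx)=\int f(hkx)\,dH(h)=\bar f(x)$ by right-invariance of $H$.
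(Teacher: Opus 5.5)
Your proposal is correct and follows essentially the paper's route: invariance of $\muA$ under the diagonal action, then a Haar average of $d\om/dP$ that is invariant at every point and hence an $\cI$-measurable version of both $d\om/dP$ and $d\om|_\cI/dP|_\cI$, then \eqref{eq:lemma4} when $\cI$ is $P$-trivial. The difference is minor and lies only in how the average is shown to be a density. You check $\int_B \bar f\,dP = \om(B)$ directly by the change of variables $y = hx$, whereas the paper first shows that $d\om/dP$ is $P$-a.e. invariant under each group element and then uses Fubini to identify the average with it; your pointwise check via right-invariance of Haar measure already settles the $\cI$-measurability you were worried about.
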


The marginal rate of a symmetric set sees only the invariant sets.
For a finite group on a discrete alphabet these are the unions of orbits, and the restriction to $\cI$ is the distribution of orbit masses.
The triviality of $\cI$ holds when $P$ is concentrated on a single orbit, which a transitive action gives at once; on a discrete alphabet the two are equivalent, and where $P$ has full support both amount to transitivity.
Triviality is the weaker hypothesis and the one that survives on a continuous space, where the rotations of a sphere act transitively and a smaller compact group can still leave $P$ concentrated on one orbit.
Under a transitive action the whole exponent is dependence at every $n$; the inequality $\dkl{\om}{P}\ge\dkl{\Qs}{P}$ of Theorem 3 in \cite{balsubramani2020sharp}, which needs $\om\in\cA$, fails by the full Sanov rate whenever $P\notin\cA$.

\subsection{The size of the dependence term}
\label{sec:union}

\begin{theorem}[Partition identity]
\label{thm:partition}
Let $\cA = \cA_1 \sqcup \cdots \sqcup \cA_k$ with $\Prob_P(\Pemp\in\cA_j) > 0$ for each $j$, and let $w_j := \Prob_P(\Pemp\in\cA_j \mid \Pemp\in\cA)$.
Then $\om = \sum_j w_j\,\omega_{\cA_j}$ and
\begin{equation}
\label{eq:partition}
  \TC(\cA) = \sum_{j=1}^k w_j\,\TC(\cA_j) + n\,\JS_w(\omega_{\cA_1},\ldots,\omega_{\cA_k}) - \Ent(w)
\end{equation}
where $\JS_w(\omega_1,\ldots,\omega_k) := \sum_j w_j\,\dkl{\omega_j}{\sum_i w_i\omega_i}$ and $\Ent(w) = -\sum_j w_j\log w_j$.
\end{theorem}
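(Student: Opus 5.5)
The plan is to apply the identity \eqref{eq:lemma4} to each piece and to the whole set, and then compare. Write $p := \Prob_P(\Pemp\in\cA)$ and $p_j := \Prob_P(\Pemp\in\cA_j)$. Disjointness gives $p = \sum_j p_j$, so $w_j = p_j/p$.

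First, the mixture structure. The events $E_{\cA_j}$ partition $E_\cA$, so
\[
\muA = \sum_j w_j\,\mu_{\cA_j}.
\]
Taking one-coordinate marginals, which is a linear operation, gives $\om = \sum_j w_j\,\omega_{\cA_j}$. This is the first claim.

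Next, apply \eqref{eq:lemma4} to $\cA$ and to each $\cA_j$:
\[
-\log p = n\,\dkl{\om}{P} + \TC(\cA), \qquad -\log p_j = n\,\dkl{\omega_{\cA_j}}{P} + \TC(\cA_j).
\]
Average the second equation with weights $w_j$ and use $-\log p_j = -\log p - \log w_j$. This gives
\[
-\log p + \Ent(w) = n\sum_j w_j\,\dkl{\omega_{\cA_j}}{P} + \sum_j w_j\,\TC(\cA_j).
\]
Subtracting the equation for $\cA$ leaves
\[
\TC(\cA) = \sum_j w_j\,\TC(\cA_j) + n\Bigl[\sum_j w_j\,\dkl{\omega_{\cA_j}}{P} - \dkl{\om}{P}\Bigr] - \Ent(w).
\]

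It remains to show that the bracket equals $\JS_w$. This is the standard compensation identity
\[
\sum_j w_j\,\dkl{\omega_j}{P} = \dkl{\bar\omega}{P} + \sum_j w_j\,\dkl{\omega_j}{\bar\omega}, \qquad \bar\omega = \sum_j w_j\,\omega_j.
\]
To prove it, use $\omega_j\ll\bar\omega\ll P$ and the chain rule for densities, $\log(d\omega_j/dP) = \log(d\omega_j/d\bar\omega) + \log(d\bar\omega/dP)$. Integrate against $\omega_j$ and average over $j$; the cross term integrates to $\dkl{\bar\omega}{P}$ because $\sum_j w_j\omega_j = \bar\omega$.

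The main obstacle is finiteness, since the subtractions above must not produce $\infty-\infty$. Every $p_j>0$, so each $-\log p_j$ is finite, and by \eqref{eq:lemma4} each $\dkl{\omega_{\cA_j}}{P}$ is finite. Then $\dkl{\om}{P}$ is finite by convexity of relative entropy. Each $\dkl{\omega_j}{\bar\omega}\le\log(1/w_j)$ is finite as well, because $d\omega_j/d\bar\omega\le 1/w_j$. So every term is finite and the integrals in the compensation identity split legitimately. Here one must check that $\log(d\bar\omega/dP)$ is integrable against each $\omega_j$, which follows from the finite divergences via the bound $\log(d\omega_j/d\bar\omega)\le\log(1/w_j)$ together with the finiteness of $\dkl{\omega_j}{P}$.
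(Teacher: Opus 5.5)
Your proof is correct. It differs from the paper's in where the compensation identity is applied and against which reference.

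The paper never passes through $P$. It applies the compensation identity once on $\cX^n$, to $\muA = \sum_j w_j\,\mu_{\cA_j}$ with reference $\om^{\otimes n}$. It reads $\sum_j w_j\,\dkl{\mu_{\cA_j}}{\muA} = \Ent(w)$ off the density $d\mu_{\cA_j}/d\muA = \ind\{J = j\}/w_j$. It then splits each $\dkl{\mu_{\cA_j}}{\om^{\otimes n}}$ by Lemma~\ref{lem:chain} at $Q = \om$ into $\TC(\cA_j) + n\,\dkl{\omega_{\cA_j}}{\om}$, so the Jensen--Shannon term appears directly.

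You take $P^{\otimes n}$ as the reference instead. There the sample-level step collapses to the arithmetic $-\log p_j = -\log p - \log w_j$, once \eqref{eq:lemma4} is used as a black box for each piece and for the union. The price is a second compensation identity, on one coordinate at reference $P$, to turn $\sum_j w_j\,\dkl{\omega_{\cA_j}}{P} - \dkl{\om}{P}$ into $\JS_w$. That step also needs the finiteness bookkeeping for divergences from $P$, which you supply correctly.

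Each route buys something different. Yours needs nothing beyond \eqref{eq:lemma4} and a standard fact about the Jensen--Shannon divergence. The paper's stays intrinsic to $\muA$, in keeping with $\TC(\cA)$ depending on $P$ only through $\muA$. It exhibits $\Ent(w)$ as the information the whole sample carries about the label $J$, which is the interpretation the paper gives after the theorem. It also uses $P$ only to guarantee finiteness, so the same computation decomposes the total correlation of any finite mixture of mutually singular exchangeable laws, not only conditionings of a product measure.
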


The label $J$ of the piece containing $\Pemp$ is a function of the sample, and \eqref{eq:partition} decomposes the total correlation through it.
Its three terms are the dependence within the pieces, the information $n\,\JS_w$ that the coordinates have about $J$ one at a time, and the information $\Ent(w)$ that the whole sample has about $J$.
On a finite alphabet this is the known decomposition of a total correlation through a conditioning variable \cite{versteeg2014discovering}, taken at $J$; the proof works with relative entropies and needs no finite alphabet.
The middle term is of order $n$ when the piece marginals stay a fixed distance apart and the weights stay bounded away from zero.

\section{Results on a finite alphabet}
\label{sec:finite}

Everything to this point holds on a measurable space.
The results of this section add one hypothesis, which stands for the rest of the paper: $\cX$ has $K = |\cX|$ letters and $P$ has full support.
The empirical measure $\Pemp$ then lies in the set $\cT_n = \{t : nt \in \mathbb{Z}_{\ge 0}^{\cX}\}$ of types at sample size $n$.
No other hypothesis stands for the whole section.
Only Corollary~\ref{cor:convex-log} and Propositions~\ref{prop:tangent} and~\ref{prop:curvature-limit} assume $\cA$ convex, and Corollary~\ref{cor:mixture-limit} assumes each of its pieces convex.
On a finite alphabet a function and a signed measure are both vectors indexed by the letters, and for two such vectors $f$ and $v$ with $f$ finite, $\ip{f}{v} = \sum_x f(x)\,v(x)$ is the Euclidean inner product.

\subsection{The method-of-types envelope}
\label{sec:envelope}

A most probable type in $\cA$ is $\Qs_n \in \argmin_{t \in \cA\cap\cT_n}\dkl{t}{P}$.

\begin{proposition}[Dependence bracket]
\label{prop:bracket}
For every $\cA$ with $\Prob_P(\Pemp\in\cA)>0$ and $\Qs_n$ a most probable type in $\cA$,
\begin{equation}
\label{eq:bracket}
  \bigl|\,\TC(\cA) - n\,[\dkl{\Qs_n}{P} - \dkl{\om}{P}]\,\bigr| \le K\log(n+1)
\end{equation}
\end{proposition}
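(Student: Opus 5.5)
The plan is to combine the exact identity \eqref{eq:lemma4} with the classical method-of-types sandwich on $\Prob_P(\Pemp\in\cA)$; the bracket then follows by subtraction. By \eqref{eq:lemma4}, which needs no hypothesis beyond $\Prob_P(\Pemp\in\cA)>0$,
\[
\TC(\cA) - n\,[\dkl{\Qs_n}{P} - \dkl{\om}{P}] \;=\; -\log\Prob_P(\Pemp\in\cA) - n\,\dkl{\Qs_n}{P}.
\]
So it suffices to show that $\bigl|-\log\Prob_P(\Pemp\in\cA) - n\,\dkl{\Qs_n}{P}\bigr|\le K\log(n+1)$. This is a statement about the event alone, and the conditioned law $\muA$ drops out entirely.

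The first step is to decompose the event by type. Since $\Pemp\in\cT_n$, we have $\Prob_P(\Pemp\in\cA)=\sum_{t\in\cA\cap\cT_n}P^{\otimes n}(T(t))$, where $T(t)$ is the type class of $t$. The nonempty classes are those with $\supp t\subseteq\supp P=\cX$, so every $\dkl{t}{P}$ is finite. I then invoke the standard type-class bounds \cite{cover1999elements}:
\[
(n+1)^{-K}e^{-n\dkl{t}{P}}\le P^{\otimes n}(T(t))\le e^{-n\dkl{t}{P}},
\qquad |\cT_n|\le (n+1)^K .
\]
Both bounds follow from the exact formula $P^{\otimes n}(z)=e^{-n[\Ent(t)+\dkl{t}{P}]}$ for $z\in T(t)$, together with the fact that the type class of $t$ is the most probable class under $t^{\otimes n}$.

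The second step is the two-sided estimate, using that $\Qs_n$ minimizes $\dkl{t}{P}$ over the finite, nonempty set $\cA\cap\cT_n$. This set is nonempty because the event has positive probability, so the minimizer exists.
\begin{itemize}
  \item For the upper bound, each summand is at most $e^{-n\dkl{\Qs_n}{P}}$ and there are at most $(n+1)^K$ of them, so $\Prob_P(\Pemp\in\cA)\le(n+1)^K e^{-n\dkl{\Qs_n}{P}}$.
  \item For the lower bound, keep only the summand at $\Qs_n$, giving $\Prob_P(\Pemp\in\cA)\ge(n+1)^{-K}e^{-n\dkl{\Qs_n}{P}}$.
\end{itemize}
Taking logarithms gives the required inequality, and substituting into the display above gives \eqref{eq:bracket}.

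There is no real obstacle here. The only care needed is in the type-class lower bound, where the $(n+1)^{-K}$ factor (or the sharper $(n+1)^{-(K-1)}$) is the standard argument that $t^{\otimes n}$ assigns its largest type-class probability to $T(t)$. One should also note that full support of $P$ keeps every relevant divergence finite, so all terms in the bracket are finite.
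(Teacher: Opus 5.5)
Your proof is correct and is essentially the paper's own argument. The paper likewise bounds $\Prob_P(\Pemp\in\cA)$ below by $(n+1)^{-K}e^{-n\dkl{\Qs_n}{P}}$, keeping only the most probable type, and above by $(n+1)^{K}e^{-n\dkl{\Qs_n}{P}}$, counting at most $(n+1)^K$ types, and then subtracts \eqref{eq:lemma4}; your extra justification of the type-class bounds and of the finiteness of every term under full support is sound.
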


\begin{corollary}[Convex events]
\label{cor:convex-log}
If $\cA$ is convex with information projection $\Qs$, then $\dkl{\om}{P}\ge\dkl{\Qs}{P}$ and
\[
  \TC(\cA) \le n\,[\dkl{\Qs_n}{P} - \dkl{\Qs}{P}] + K\log(n+1)
\]
\end{corollary}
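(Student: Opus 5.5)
Two facts drive the argument: on a finite alphabet with $\cA$ convex, the conditioned marginal $\om$ lies in $\cA$, and the information projection minimizes $\dkl{\cdot}{P}$ over $\cA$. The barycenter $\om = \EV_{\muA}[\Pemp]$ is an average of empirical measures that all lie in $\cA$. On a finite alphabet the law of $\Pemp$ under $\muA$ is supported on finitely many types in $\cA\cap\cT_n$. Hence $\om$ is a finite convex combination of points of $\cA$, and convexity gives $\om\in\cA$, as the paper notes after Corollary~\ref{cor:two-references}. Since $\Qs$ minimizes $\dkl{Q}{P}$ over $Q\in\cA$, it follows at once that $\dkl{\om}{P}\ge\dkl{\Qs}{P}$. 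No closedness is needed: a finite convex combination stays in a convex set. The only possible subtlety is whether the projection is attained, and it is, since the statement supposes $\cA$ has an information projection $\Qs$.

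For the bound on $\TC(\cA)$, apply Proposition~\ref{prop:bracket} and keep only its upper side:
\[
\TC(\cA) \le n\,[\dkl{\Qs_n}{P} - \dkl{\om}{P}] + K\log(n+1).
\]
The first part gives $-\dkl{\om}{P}\le-\dkl{\Qs}{P}$. Multiplying by $n\ge 1$ and substituting yields the claimed inequality. Every quantity here is finite: $\dkl{\Qs_n}{P}$ and $\dkl{\om}{P}$ are finite because $P$ has full support on a finite alphabet.

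There is no real obstacle. The one step that needs care is checking that $\om\in\cA$ uses only convexity and the finiteness of $\cT_n$, with no topological hypothesis. Everything else is a direct substitution into Proposition~\ref{prop:bracket}.
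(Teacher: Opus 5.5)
Your proposal is correct and follows the paper's proof essentially verbatim. The barycenter $\om$ is a convex combination of types in $\cA$, so $\om\in\cA$ and $\dkl{\om}{P}\ge\dkl{\Qs}{P}$, and substituting this into the upper half of the bracket of Proposition~\ref{prop:bracket} gives the bound.
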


The dependence is $n$ times the gap between the rate of the most probable type and the rate of the conditioned marginal, to within $K\log(n+1)$, the logarithm of the method-of-types count of at most $(n+1)^K$ types \cite{cover1999elements}.
That count enters the proof as a hypothesis, so the envelope is inherited from the method of types and not read off the identity.
On a convex set the marginal is feasible, so the gap is at most $\dkl{\Qs_n}{P} - \dkl{\Qs}{P}$, the error of approximating the projection by a type.
Off convexity the marginal rate can lie below the projection's by an amount that does not shrink with $n$, so the dependence is then of order $n$.
The same bound puts the rate $n\,\dkl{\Qs_n}{P}$ of the most probable type within $K\log(n+1)$ of the exponent on every set, whatever the dependence share.

\begin{corollary}[Mixture limit]
\label{cor:mixture-limit}
Let $\cA = \cA_1\sqcup\cdots\sqcup\cA_k$ with each $\cA_j$ convex, with information projection $\Qs_j$ of full support and most probable type $\Qs_{j,n}$, and suppose $n\,[\dkl{\Qs_{j,n}}{P} - \dkl{\Qs_j}{P}] \le c$ for all $n$ and $j$.
Then $\dkl{\omega_{\cA_j}}{\Qs_j} \le (c + K\log(n+1))/n$ and $\TC(\cA_j) \le c + K\log(n+1)$ for each $j$, and
\[
  \frac{\TC(\cA)}{n} - \JS_{w(n)}(\Qs_1,\ldots,\Qs_k) \longrightarrow 0
\]
\end{corollary}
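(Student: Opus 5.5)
Apply Proposition~\ref{prop:bracket} and Corollary~\ref{cor:convex-log} to each convex piece $\cA_j$ separately, then feed the resulting bounds into the partition identity, Theorem~\ref{thm:partition}.

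\textbf{Per-piece bounds.} Write $L_j := -\log\Prob_P(\Pemp\in\cA_j)$.
\begin{itemize}
\item The method-of-types bounds give $n\,\dkl{\Qs_{j,n}}{P} - K\log(n+1) \le L_j \le n\,\dkl{\Qs_{j,n}}{P}$. This is the content behind Proposition~\ref{prop:bracket}.
\item By \eqref{eq:lemma4}, $L_j = n\,\dkl{\omega_{\cA_j}}{P} + \TC(\cA_j)$.
\item Since $\cA_j$ is convex on a finite alphabet, $\omega_{\cA_j}\in\cA_j$. Theorem~\ref{thm:identity} at $R=\Qs_j$ (a reference law, since $\Qs_j$ has full support and $P$ does) then gives
\[
L_j = n\,\dkl{\Qs_j}{P} + n\,\Delta_j(\omega_{\cA_j}) + n\,\dkl{\omega_{\cA_j}}{\Qs_j} + \TC(\cA_j).
\]
\item By the Pythagorean inequality, $\Delta_j(\omega_{\cA_j})\ge 0$, so the last three terms are all nonnegative.
\end{itemize}
Combining these with the upper bound on $L_j$ gives
\[
n\,\Delta_j + n\,\dkl{\omega_{\cA_j}}{\Qs_j} + \TC(\cA_j) \le n\,[\dkl{\Qs_{j,n}}{P} - \dkl{\Qs_j}{P}] \le c.
\]
This yields both stated bounds, in fact without the $K\log(n+1)$. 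I would still state them with the slack, which is what routing through Corollary~\ref{cor:convex-log} produces.

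\textbf{Assembling via Theorem~\ref{thm:partition}.} Divide \eqref{eq:partition} by $n$. The within-piece term $\sum_j w_j\TC(\cA_j)/n$ is at most $(c+K\log(n+1))/n\to0$, and $\Ent(w)/n\le \log k/n\to 0$. It remains to show
\[
\JS_{w}(\omega_{\cA_1},\ldots,\omega_{\cA_k}) - \JS_{w}(\Qs_1,\ldots,\Qs_k)\to 0,
\]
uniformly in the weights $w=w(n)$, which may vary with $n$ arbitrarily.
\begin{itemize}
\item By Pinsker, $\|\omega_{\cA_j}-\Qs_j\|_1 \le \sqrt{2(c+K\log(n+1))/n}\to0$.
\item On a finite alphabet, $\JS_w(\omega_1,\ldots,\omega_k) = \Ent(\sum_j w_j\omega_j) - \sum_j w_j\Ent(\omega_j)$.
\item Shannon entropy is uniformly continuous on the simplex. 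The mixture moves in $\ell_1$ by at most $\max_j\|\omega_{\cA_j}-\Qs_j\|_1$, and each $\Ent(\omega_j)$ moves by at most the modulus of continuity evaluated at that distance.
\item Hence the difference is bounded by twice the entropy's modulus of continuity at $\max_j\|\omega_{\cA_j}-\Qs_j\|_1$. This bound is independent of $w$ and tends to zero.
\end{itemize}

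\textbf{Main obstacle.} The delicate step is this last continuity argument. Writing JS as an entropy difference avoids any need for positivity of the marginals, and uniformity in $w(n)$ comes from that form, where the weights enter only through convex combinations. An alternative is to bound $\JS_w$ directly through divergences relative to the mixture. That route would need mixture densities bounded below, which the full support of each $\Qs_j$ provides asymptotically, but it is messier.
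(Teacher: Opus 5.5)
Your route is the paper's: Theorem~\ref{thm:identity} at $R=\Qs_j$, the Pythagorean sign of $\Delta_j(\omega_{\cA_j})$, a method-of-types upper bound on $L_j$, Pinsker, then \eqref{eq:partition} divided by $n$. The entropy form of $\JS_w$ gives uniformity in $w(n)$; your modulus-of-continuity bound is a quantitative version of the paper's uniform continuity on a compact set.

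The gap is in your first bullet, which puts the slack on the wrong side. The type-class estimate is $(n+1)^{-K}e^{-n\dkl{t}{P}}\le\Prob_P(\Pemp=t)\le e^{-n\dkl{t}{P}}$. So the upper bound available comes from $\Prob_P(\Pemp\in\cA_j)\ge\Prob_P(\Pemp=\Qs_{j,n})$ and reads $L_j\le n\,\dkl{\Qs_{j,n}}{P}+K\log(n+1)$. The slack-free inequality $L_j\le n\,\dkl{\Qs_{j,n}}{P}$ is false.

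For a concrete failure, take $P=(\tfrac12,\tfrac12)$, $n=4$ and the linear family $\{t : t(1)=\tfrac34\}$. Its only type is $(\tfrac34,\tfrac14)$, which is also $\Qs_j$, $\Qs_{j,n}$ and $\omega_{\cA_j}$, so your displayed chain would force $\TC(\cA_j)=0$. In fact $L_j=\log 4\approx 1.386$ while $n\,\dkl{\Qs_{j,n}}{P}\approx 0.523$, and $\TC(\cA_j)\approx 0.863$. The chain also fails asymptotically. Its left side equals $L_j-n\,\dkl{\Qs_j}{P}$ exactly, and on a half-space piece this is $\tfrac12\log n+O(1)$ by Bahadur--Rao, so it exceeds any fixed $c$. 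The claimed strengthening without $K\log(n+1)$ is therefore unfounded.

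This is not only cosmetic. Corollary~\ref{cor:convex-log} would recover your bound on $\TC(\cA_j)$, but it says nothing about $\dkl{\omega_{\cA_j}}{\Qs_j}$, so as written that bound rests on the false inequality. Replace the first bullet with the correct bound. The same chain then gives $n\,\Delta_j(\omega_{\cA_j})+n\,\dkl{\omega_{\cA_j}}{\Qs_j}+\TC(\cA_j)\le c+K\log(n+1)$, which yields both stated bounds. From there the rest of your argument goes through unchanged and coincides with the paper's proof.
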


The weights $w(n)$ are the conditional probabilities of the pieces.
A piece whose projection has a rate above the minimum receives exponentially small weight, so the Jensen--Shannon term concentrates on the modes of minimal rate, whose relative weights are set by the sub-exponential prefactors of their probabilities.
For a half-space piece whose statistic takes values on an arithmetic progression and whose threshold some type attains, the prefactor is the Bahadur--Rao constant for such a statistic \cite{BahadurRao1960}.
In the tilt $\lambda_j$ and the tilted standard deviation $\sigma_j$ of the piece's statistic, measured in units of the step of the progression, the constant is $1/((1-e^{-\lambda_j})\sigma_j)$.
Two such modes of equal rate then receive weights in the ratio of these constants (Appendix~\ref{sec:eval-union-tax}).
The equal-weight mixture over the projections asserted for finite alphabets \cite{grendar2006conditional} holds when a symmetry of the event permutes the modes and equalizes their prefactors.
The asymptotic dependence share is the Jensen--Shannon divergence of the minimal-rate modes at their limiting weights divided by the Sanov rate.

A union of half-spaces is the deviation event of uniform convergence over a finite class (Section~\ref{sec:learning}); its projection is the projection of a piece of smallest rate, and its Sanov rate is that rate.
When a symmetry that fixes $P$ and acts transitively on the alphabet permutes the pieces, Proposition~\ref{prop:symmetry} gives $\om = P$ and the whole exponent is dependence.
If the symmetry permutes the pieces transitively and no type lies in two pieces, their marginals are permutations of one another, $\JS_w = \dkl{\omega_{\cA_1}}{P}$, and \eqref{eq:partition} reduces to $-\log\Prob_P(\Pemp\in\cA) = -\log\Prob_P(\Pemp\in\cA_1) - \log k$, so the union bound is exact and the marginal rate of one piece has become dependence of the union.

Figure~\ref{fig:regime-signatures} in Appendix~\ref{sec:eval-regime-map} separates the two terms over the configurations of Figure~\ref{fig:regime-dial} with $n \le 40$ on three letters and $n \le 24$ on four.
On a convex set that excludes $P$ the dependence grows on a shallower slope than the exponent, within the bound of Corollary~\ref{cor:convex-log}, while on a union of separated modes of comparable probability it grows in proportion to the exponent, through the Jensen--Shannon term of Theorem~\ref{thm:partition}.
Where one mode comes to dominate, as on a union of faces of a skewed population, the dependence eventually falls instead.

\subsection{The half-space}
\label{sec:halfspace}

In the limit of large samples, a half-space whose threshold sits a fixed number of standard errors above the mean splits its exponent in a way that depends only on the probability of the event.
Write $\mu$ and $\sigma^2$ for the mean and variance of a statistic $\ell$ under $P$, $\bar\Phi$ for the standard normal tail, and $m(z) := e^{-z^2/2}/(\sqrt{2\pi}\,\bar\Phi(z))$.

\begin{proposition}[The half-space in the Gaussian regime]
\label{prop:halfspace-gauss}
Let $\sigma^2 > 0$, $z\in\mathbb{R}$ and $\cA_n := \{t : \ip{\ell}{t}\ge\mu + z\sigma/\sqrt n\}$.
As $n\to\infty$,
\begin{equation}
\label{eq:halfspace-gauss}
  -\log\Prob_P(\Pemp\in\cA_n)\to\log\frac{1}{\bar\Phi(z)},\qquad
  n\,\dkl{\omega_{\cA_n}}{P}\to\frac{m(z)^2}{2},\qquad
  \TC(\cA_n)\to\log\frac{1}{\bar\Phi(z)} - \frac{m(z)^2}{2}
\end{equation}
\end{proposition}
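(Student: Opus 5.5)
The plan is to prove the first two limits directly and then read off the third from the identity \eqref{eq:lemma4}: $\TC(\cA_n) = -\log\Prob_P(\Pemp\in\cA_n) - n\,\dkl{\omega_{\cA_n}}{P}$.

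\emph{First limit.} Since $\cX$ is finite, $\ell$ is bounded, and $\sigma^2>0$. Put $S_n := \sqrt n(\ip{\ell}{\Pemp}-\mu)/\sigma$. The central limit theorem gives $S_n\Rightarrow N(0,1)$. The normal law is continuous at $z$, so $\Prob_P(S_n\ge z)\to\bar\Phi(z)$. This is exactly $\Prob_P(\Pemp\in\cA_n)$, and taking logarithms gives the first limit. The lattice case, where $\ell$ takes values on an arithmetic progression, causes no trouble. The limit law is continuous, so the limit of the probability of the closed event $\{S_n\ge z\}$ is unaffected by atoms of $S_n$.

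\emph{Second limit.} Write $\omega_n := \omega_{\cA_n} = \EV_{\mu_{\cA_n}}[\Pemp]$. For each letter $x$,
\[
\omega_n(x) = P(x)\,\frac{\Prob(E\mid Z_1=x)}{\Prob(E)}.
\]
The key step is to expand this ratio to order $n^{-1/2}$. Conditioning on $Z_1=x$ replaces $S_n$ by $(\ell(x)-\mu)/(\sigma\sqrt n) + S'_{n}$, where $S'_n$ is built from the other $n-1$ draws and is again asymptotically standard normal. I expect
\[
\frac{\omega_n(x)}{P(x)} = 1 + \frac{(\ell(x)-\mu)}{\sigma\sqrt n}\,m(z) + o(n^{-1/2}),
\]
because $\tfrac{d}{dz}\log\bar\Phi(z) = -\phi(z)/\bar\Phi(z) = -m(z)$.

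Rather than rely on a local limit theorem, which would break in the lattice case, I would obtain this expansion through the barycenter. Exchangeability gives
\[
\ip{\ell}{\omega_n}-\mu = \frac{\sigma}{\sqrt n}\,\EV[S_n\mid S_n\ge z].
\]
Uniform integrability of $S_n^2$, for example from a bound on the fourth moment, gives $\EV[S_n\mid S_n\ge z]\to \EV[N\mid N\ge z] = m(z)$. Hence $\ip{\ell}{\omega_n}-\mu \sim \sigma m(z)/\sqrt n$.

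With the mean shift in hand, the divergence needs a two-sided bound.

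\emph{Lower bound.} The variational inequality $\dkl{\omega}{P}\ge \lambda\ip{\ell}{\omega-P} - \log\EV_P e^{\lambda(\ell-\mu)}$ holds for every $\lambda$. Taking $\lambda = m(z)/(\sigma\sqrt n)$ gives $\liminf_n n\,\dkl{\omega_n}{P}\ge m(z)^2/2$.

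\emph{Upper bound.} I need the density ratio $\omega_n/P$ itself to have the linear form above. I would get it by the same conditional-expectation argument applied to each letter: $\omega_n(x)/P(x) = \Prob(S'_n \ge z - c_x/\sqrt n)/\Prob(S_n\ge z)$ with $c_x = (\ell(x)-\mu)/\sigma$. Showing this ratio equals $1 + c_x m(z)/\sqrt n + o(n^{-1/2})$ needs a Gaussian approximation of $S'_n$ that is accurate to $o(n^{-1/2})$ in a window of width $n^{-1/2}$. That is the hard part. Berry--Esseen is only $O(n^{-1/2})$, and in the lattice case it fails outright.

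The remedy I would try first is to average. Write $\omega_n(x)/P(x)-1$ as an expectation over $S'_n$ of the indicator of a window, then pair it against $\ell$ and against the constant function. Because the alphabet is finite, the ratio $\omega_n/P$ lives in a finite-dimensional space. Exchangeability forces its centred part to be determined by the conditional law of $\Pemp$ given the event, and by the central limit theorem that part is asymptotically spanned by $\ell-\mu$. Concretely, the multivariate central limit theorem for the vector $\sqrt n(\Pemp-P)$ conditioned on the event gives
\[
\sqrt n(\omega_n-P) \to \EV[Y \mid \ip{\ell}{Y}\ge z\sigma],
\]
where $Y$ is the Gaussian limit of $\sqrt n(\Pemp-P)$. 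This conditional mean is $P(\ell-\mu)\,m(z)/\sigma$ by Gaussian regression, which is the expansion above. The quadratic expansion $n\,\dkl{\omega_n}{P}\approx \tfrac n2\sum_x(\omega_n(x)-P(x))^2/P(x)$ then gives $m(z)^2\operatorname{Var}_P(\ell)/(2\sigma^2) = m(z)^2/2$. This multivariate conditional central limit theorem, made rigorous via uniform integrability, is the main obstacle, and it also makes the separate scalar barycenter estimate redundant.

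The third limit follows by subtraction, using \eqref{eq:lemma4}.
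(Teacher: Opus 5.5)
Your proof is correct, and its key step takes a different route from the paper's.

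\textbf{The paper's route.} The paper writes $\omega_{\cA_n}(x)/P(x)-1$, after subtracting the $P$-average of the numerator, as a weighted sum of probabilities. Each is the probability that the centered sum $S_{n-1}$ of $n-1$ draws lands in a window of width $|\ell(x)-\ell(y)|$ at distance $O(1)$ from the threshold. It evaluates each window to $o(n^{-1/2})$ by the local limit theorem, separately in its non-lattice and lattice forms.

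\textbf{Your route.} You use $\omega_{\cA_n}=\EV[\Pemp\mid \Pemp\in\cA_n]$, so that $\sqrt n(\omega_{\cA_n}-P)$ is the conditional mean of $Y_n:=\sqrt n(\Pemp-P)$ given $\ip{\ell}{Y_n}\ge z\sigma$. The step you call the main obstacle is routine:
\begin{itemize}
\item $Y_n\Rightarrow Y$ by the multinomial central limit theorem.
\item The map $y\mapsto y\,\ind\{\ip{\ell}{y}\ge z\sigma\}$ is continuous off a hyperplane that is $Y$-null, because $\ip{\ell}{Y}$ has variance $\sigma^2>0$.
\item $\EV|Y_n|^2\le 1$, together with $\Prob_P(\Pemp\in\cA_n)\to\bar\Phi(z)>0$, gives uniform integrability of the conditioned vectors.
\end{itemize}
Gaussian regression then gives the limit $P(\ell-\mu)\,m(z)/\sigma$, which is exactly the paper's per-letter expansion. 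The cubic remainder in the expansion of the divergence is $O(n\max_x|\omega_{\cA_n}(x)/P(x)-1|^3)=O(n^{-1/2})$ on the full-support finite alphabet, and this finishes the second limit. As you note, the scalar barycenter estimate and the Donsker--Varadhan lower bound are then unnecessary.

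\textbf{What each buys.} Your route needs only the central limit theorem. It treats lattice and non-lattice statistics alike, since the continuous-mapping step only asks that the boundary hyperplane be null for the Gaussian limit. The paper's route stays with a one-dimensional sum and exhibits the density ratio explicitly as a function of $\ell(x)$, at the price of two local limit theorems.

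\textbf{A correction to your motivation.} The local limit theorem does not break in the lattice case. The paper uses its lattice form, which suffices because every difference $\ell(x)-\ell(y)$ is a multiple of the span $h$. Each window therefore contains exactly $|\ell(x)-\ell(y)|/h$ lattice points, and each carries probability $h\,e^{-z^2/2}/(\sigma\sqrt{2\pi n})+o(n^{-1/2})$. Likewise, Berry--Esseen holds for lattice sums; it is merely too coarse at $O(n^{-1/2})$.
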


With $p = \bar\Phi(z)$ the limiting probability, the dependence share of $\cA_n$ therefore tends to $1 - m(z)^2/(2\log(1/p))$, a function of $p$ alone, the same for every population and every statistic with positive variance.
It is the dashed curve of Figure~\ref{fig:regime-dial}.
At $p = 1/2$ the share is $1 - 1/(\pi\log 2)\approx 0.54$.
As $z$ grows, two integrations by parts give
\[
  \bar\Phi(z) = \frac{e^{-z^2/2}}{z\sqrt{2\pi}}\,\bigl(1 - z^{-2} + O(z^{-4})\bigr)
\]
Then $m(z) = z + z^{-1} + O(z^{-3})$, so the limit of $\TC(\cA_n)$ is $\log(z\sqrt{2\pi}) - 1 + O(z^{-2})$.
The dependence of a half-space therefore grows as $\log z$, while the exponent grows as $z^2/2$.
For fair signs and their sum, the second limit is equivalent to a limit known in the analysis of Boolean functions.
The level-1 Fourier weight of the indicator of $\cA_n$ is $n\,\Prob_P(\Pemp\in\cA_n)^2$ times the chi-square divergence of $\omega_{\cA_n}$ from $P$, which is twice the relative entropy to first order.
By Proposition~5.25 of \cite{odonnell2014analysis}, that weight tends to $p^2 m(z)^2$.
The half-spaces of the regime map lie below the curve or barely above it and approach it as $n$ grows (Appendix~\ref{sec:eval-regime-map}).
The few far below it have small $n$ and a threshold near the largest value of $\ell$, where the Gaussian approximation to the tail of a bounded statistic is poor.
At finite $n$ the curve is not a bound.
For $\ell = (0,1,2)$ under $P = (0.8,0.15,0.05)$ at $n = 7$, the half-space $\{t : \ip{\ell}{t}\ge 2/7\}$ has probability $0.515$.
Its dependence share is $0.600$, against $0.549$ on the curve.
The probability and the dependence are exact sums over the $36$ types at $n = 7$, by the formulas of Appendix~\ref{sec:eval-regime-map}.
The curve is evaluated at that probability.

\subsection{Curvature}
\label{sec:curvature}

\begin{proposition}[Tangent half-space and the curvature gap]
\label{prop:tangent}
Let $\cA$ be convex with information projection $\Qs$ of full support, $g = \log(d\Qs/dP)$, and $\cH := \{t : \ip{g}{t}\ge\ip{g}{\Qs}\}$.
Then $\cA\subseteq\cH$, $\Qs$ is the information projection of $P$ onto $\cH$, the zero set of $\Delta$ on $\cA$ is $\cA\cap\partial\cH$, and
\begin{equation}
\label{eq:curvature-tax}
  \tau_n(\cA) := \log\Prob_P(\Pemp\in\cH) - \log\Prob_P(\Pemp\in\cA) \ge 0
\end{equation}
with equality if and only if $\cA\cap\cT_n = \cH\cap\cT_n$.
\end{proposition}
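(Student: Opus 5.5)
The plan is to reduce everything to the Pythagorean inequality on a finite alphabet, stated in Section~\ref{sec:setup-projection} as $\Delta\ge 0$ on a convex $\cA$, where the defect is $\Delta(t)=\ip{g}{t-\Qs}$. Since $\Qs$ has full support and $P$ has full support, $g$ is finite, and $\dkl{t}{P}<\infty$ for every distribution $t$. Hence $\Delta(t)\ge0$ holds for every $t\in\cA$. This is the statement $\ip{g}{t}\ge\ip{g}{\Qs}$, that is, $t\in\cH$, so $\cA\subseteq\cH$.

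Next I show that $\Qs$ projects onto $\cH$. The set $\cH$ is a half-space containing $\Qs$, and for $t\in\cH$ the defect identity of Lemma~\ref{lem:defect-lr}(i) gives
\[
\dkl{t}{P}=\dkl{\Qs}{P}+\dkl{t}{\Qs}+\Delta(t)\ge \dkl{\Qs}{P}+\dkl{t}{\Qs}.
\]
The right side is strictly greater than $\dkl{\Qs}{P}$ unless $t=\Qs$. So $\Qs$ is the unique minimizer over $\cH$.

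For the zero set, take $t\in\cA$. Then $\Delta(t)=0$ exactly when $\ip{g}{t}=\ip{g}{\Qs}$, which says $t$ lies on the hyperplane $\partial\cH=\{t:\ip{g}{t}=\ip{g}{\Qs}\}$. One point needs checking: $\partial\cH$ must really be the boundary hyperplane. This requires $g$ to be nonconstant, which holds when $\Qs\neq P$. If $\Qs=P$, then $g\equiv0$, $\cH$ is everything, and $\partial\cH$ read as the defining hyperplane is also everything, so the claim holds trivially.

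For \eqref{eq:curvature-tax}, the containment $\cA\subseteq\cH$ gives $E_\cA\subseteq E_\cH$, so the probabilities are ordered and $\tau_n\ge0$. For the equality case, on a finite alphabet $\Pemp$ takes values only in $\cT_n$, and every type has positive probability because $P$ has full support. Therefore
\[
\Prob_P(\Pemp\in\cH)-\Prob_P(\Pemp\in\cA)=\sum_{t\in(\cH\setminus\cA)\cap\cT_n}\Prob_P(\Pemp=t).
\]
This sum is zero if and only if $(\cH\setminus\cA)\cap\cT_n=\emptyset$. Given $\cA\subseteq\cH$, that is the same as $\cA\cap\cT_n=\cH\cap\cT_n$.

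The only step that needs care is the Pythagorean inequality itself, which requires $\Qs$ to be the projection onto a convex set. I will cite it from \cite{csisz1975idivergenceb}, as the paper does. If a self-contained argument is wanted, I would take the directional derivative of $s\mapsto\dkl{(1-s)\Qs+st}{P}$ at $s=0^+$. Full support of $\Qs$ makes this derivative finite and equal to $\ip{g}{t-\Qs}$, and minimality of $\Qs$ over the convex set forces it to be nonnegative.
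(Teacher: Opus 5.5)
Your proposal is correct and follows the paper's proof step for step. The Pythagorean inequality gives $\cA\subseteq\cH$ and identifies the zero set of $\Delta$ on $\cA$ with $\cA\cap\partial\cH$; the defect identity of Lemma~\ref{lem:defect-lr}(i) makes $\Qs$ the unique minimizer over $\cH$; and monotonicity, together with the positive probability of every type under a full-support $P$, gives $\tau_n\ge 0$ with the stated equality case. Your additions, the degenerate case $\Qs = P$ and the directional-derivative proof of the Pythagorean inequality, are correct and fill in points the paper leaves implicit.
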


The set $\cA\cap\partial\cH$ is the face of $\cA$ exposed by the supporting hyperplane $\partial\cH$.
It is all of $\cA$ for a linear family, the bounding hyperplane for a half-space, a lower-dimensional face for a polytope, and the single point $\Qs$ where the boundary of $\cA$ is strictly convex, as on a relative-entropy ball.
Two events with the same projection and Sanov rate, one flat and one curved, differ in log-probability by $\tau_n$.
Equation~\eqref{eq:three-way} attributes that difference to their transport and dependence terms.

Fix a vector $\gamma$ with $\sum_x\gamma(x) = 0$ and $\ip{g}{\gamma} = 1$, let $V := \{v : \sum_x v(x) = 0,\ \ip{g}{v} = 0\}$, and write $\Pi v$ for the component in $V$ of a vector $v$ with $\sum_x v(x) = 0$, so that $v = \ip{g}{v}\,\gamma + \Pi v$.
Say that $\cA$ is \emph{curved at $\Qs$ with Hessian $M$} if some function $\phi$ on $V$, twice differentiable at $0$ with $\phi(0) = 0$, zero gradient and positive definite Hessian $M$ there, has the property that a distribution $t$ near $\Qs$ lies in $\cA$ exactly when $\ip{g}{t-\Qs}\ge\phi(\Pi(t-\Qs))$.
The Hessian does not depend on $\gamma$: another choice moves the tangential coordinate of a boundary point by a multiple of $\phi(u) = O(|u|^2)$.
Let $\zeta$ be a centered Gaussian vector on $V$ with the law of $\xi\sim N(0,\mathrm{diag}(\Qs)-\Qs{\Qs}^\top)$ conditioned on $\ip{g}{\xi} = 0$, let $\Sigma_\zeta$ be its covariance, and let $q(z) := \tfrac12 z^\top Mz$.
As a quadratic form on $V$, the inverse of $\Sigma_\zeta$ is $v\mapsto\sum_x v(x)^2/\Qs(x)$, the Hessian of the rate function $\dkl{\cdot}{P}$ at $\Qs$.
The form is that Hessian because the second derivative of $s\mapsto s\log(s/P(x))$ is $1/s$, which is $1/\Qs(x)$ at $s=\Qs(x)$.
On $\{v : \sum_x v(x) = 0\}$ it is the inverse of the covariance $\mathrm{diag}(\Qs)-\Qs{\Qs}^\top$ of $\xi$.
Conditioning a Gaussian vector on $\ip{g}{\xi} = 0$ restricts its inverse covariance to $V$.
When $P\notin\cA$ the function $g$ is not constant, and the differences $g(x)-g(y)$ generate a subgroup of $\mathbb{R}$ that is either dense or equal to $\eta\mathbb{Z}$ for some $\eta>0$.
In the second case $g = g(x_0) + \eta s$ with $s$ integer valued, and for all large $n$ the phase $x_n := \lceil n\mu\rceil - n\mu$, with $\mu := \EV_{\Qs}[s]$, places the first layer of types in $\cH$ at height $\eta x_n/n$ above $\partial\cH$ as measured by $\ip{g}{\cdot}$.

\begin{proposition}[The limit of the curvature gap]
\label{prop:curvature-limit}
Let $K\ge 3$, and let $\cA$ be closed, convex and curved at its information projection $\Qs$ with Hessian $M$, where $P\notin\cA$ and $\Qs$ has full support.
\begin{enumerate}
  \item[(i)] If the differences of the values of $g$ generate a dense subgroup of $\mathbb{R}$, then $\tau_n(\cA)\to -\log\EV e^{-q(\zeta)} = \tfrac12\log\det(I + M\Sigma_\zeta)$.
  \item[(ii)] If they generate $\eta\mathbb{Z}$, then $\tau_n(\cA) - F(x_n)\to 0$, where $F(x) := -\log\EV\exp(-\eta\lceil q(\zeta)/\eta - x\rceil)$ is continuous and strictly decreasing on $[0,1)$ and tends to $F(0)-\eta$ as $x\to 1$; so $\tau_n(\cA)$ converges if and only if $\mu$ is an integer.
\end{enumerate}
\end{proposition}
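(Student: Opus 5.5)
The plan is to write both probabilities through the exponential change of measure \eqref{eq:sharp-sanov} at the common projection $\Qs$. By Proposition~\ref{prop:tangent}, $\Qs$ is also the projection onto $\cH$. On $\cH$ the defect $\Delta(t)=\ip{g}{t-\Qs}$ is exactly the tilt coordinate. Subtracting the two instances of \eqref{eq:sharp-sanov} cancels the Sanov rate $n\,\dkl{\Qs}{P}$ and leaves
\[
  \tau_n(\cA) = \log\EV_{\Qs}\bigl[\ind\{\Pemp\in\cH\}e^{-n\Delta(\Pemp)}\bigr] - \log\EV_{\Qs}\bigl[\ind\{\Pemp\in\cA\}e^{-n\Delta(\Pemp)}\bigr].
\]
Under $\Qs^{\otimes n}$, write $\Pemp-\Qs = n^{-1/2}\xi_n$ and decompose $\xi_n = \ip{g}{\xi_n}\gamma + \Pi\xi_n$. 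The weight is $e^{-\sqrt n\,\ip{g}{\xi_n}}$, so only types with $\ip{g}{\Pemp-\Qs}=O(1/n)$ contribute, that is, those at tangential height $h := n\,\Delta(\Pemp) = O(1)$. For these, the curvature hypothesis says $\Pemp\in\cA$ iff $h\ge n\phi(\Pi\xi_n/\sqrt n) = q(\Pi\xi_n)+o(1)$ on $|\Pi\xi_n|=O(1)$. The task is therefore a joint local limit: the tangential coordinate is a lattice or nonlattice variable at scale $1/n$, and the transverse coordinate is Gaussian at scale $n^{-1/2}$.

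In case (i) I would apply a Stone-type local limit theorem for $\ip{g}{\Pemp}$ jointly with a central limit theorem for $\Pi\xi_n$ conditioned on the tangential coordinate. The density of $n\ip{g}{\Pemp-\Qs}$ near $0$ is asymptotically $c_n\,dh$ with the same $c_n\asymp n^{-1/2}$ for both sets, and the conditional law of $\Pi\xi_n$ given $h=O(1)$ tends to that of $\zeta$. The $\cH$ term is then $c_n\int_0^\infty e^{-h}\,dh$, and the $\cA$ term is $c_n\,\EV\int_{q(\zeta)}^\infty e^{-h}\,dh = c_n\,\EV e^{-q(\zeta)}$, so their ratio gives the limit $-\log\EV e^{-q(\zeta)}$. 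The Gaussian integral $\EV e^{-\frac12\zeta^\top M\zeta} = \det(I+M\Sigma_\zeta)^{-1/2}$ gives the second form. The density hypothesis is what makes the nonlattice local limit valid. Because the types form a lattice in the full simplex, I would need a multivariate local limit on the type lattice, with the tangential coordinate equidistributed modulo $1/n$ and uniformly spread in $h$. This is where $K\ge 3$ enters, since $V$ must be nontrivial for $q$ to matter.

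In case (ii), $h$ lives on the lattice $\eta(\mathbb{Z}_{\ge0}+x_n)$ beginning at the first layer. Summing $e^{-h}$ over lattice points $h\ge q$ gives
\[
  \frac{e^{-\eta\lceil q/\eta - x_n\rceil - \eta x_n}}{1-e^{-\eta}},
\]
and the $\cH$ term is the same expression at $q=0$. In the ratio the common factor $e^{-\eta x_n}/(1-e^{-\eta})$ cancels, which leaves $F(x_n)$ using $\lceil -x_n\rceil=0$ on $[0,1)$. The stated properties of $F$ follow from the Gaussian law of $q(\zeta)$, which has a continuous distribution on $(0,\infty)$. That distribution makes $F$ continuous and strictly decreasing, and it gives the jump of $\eta$ as $x\to1$. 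Hence $\tau_n$ converges iff $x_n$ is eventually constant, which happens iff $\mu\in\mathbb{Z}$; otherwise $x_n=\lceil n\mu\rceil-n\mu$ does not settle.

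I expect the main obstacle to be the uniform local limit theorem itself. It must hold on the degenerate type lattice, jointly in the lattice tangential direction and the Gaussian transverse directions. It must also have error terms good enough to pass to the curved boundary, replacing $n\phi(u/\sqrt n)$ by $q(u)$ uniformly on compacts. Finally, I need tail control so that large $|\Pi\xi_n|$ and large $h$ contribute negligibly; convexity together with $e^{-h}$ decay handles this. I would try first to reduce to the one-dimensional Bahadur--Rao local theorem for $\ip{g}{\Pemp}$ combined with a conditional CLT for $\Pi\xi_n$.
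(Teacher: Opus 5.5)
Your architecture matches the paper's proof. You change measure to $\Qs$ for both $\cA$ and $\cH$, and work with the tangential coordinate $W_n = n\ip{g}{\Pemp-\Qs}$ and the transverse coordinate $\zeta_n=\sqrt n\,\Pi(\Pemp-\Qs)$ under $(\Qs)^{\otimes n}$. You rewrite membership in $\cA$ as $W_n\ge n\phi(\zeta_n/\sqrt n)$. You then take the geometric sum over the layers $\eta(x_n+j)$ in case (ii) and the integral of $e^{-w}$ over $w\ge q(\zeta)$ in case (i), and you read the properties of $F$ off the atomless law of $q(\zeta)$.

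\textbf{The gap.} The step you defer is the real content of the proof. Let $Y_1 := g(Z_1)-\ip{g}{\Qs}$ with $Z_1\sim\Qs$, and let $\sigma^2$ be its variance. In case (ii) you need the following: along every sequence $w_n\in\eta(\mathbb{Z}-n\mu)\cap[-L,L]$, the measures $B\mapsto\sqrt n\,\Prob(\zeta_n\in B,\ W_n=w_n)$ converge weakly to $\eta/(\sigma\sqrt{2\pi})$ times the law of $\zeta$. In case (i) you need $\sqrt n\,\EV f(W_n,\zeta_n)\to(\sigma\sqrt{2\pi})^{-1}\int\EV f(w,\zeta)\,dw$ for every bounded $f$ that vanishes off $|w|\le L$ and is continuous off a null set.

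A one-dimensional Bahadur--Rao or Stone theorem for $W_n$, combined with an unconditional CLT for $\zeta_n$, does not yield this. The conditioning event has probability of order $n^{-1/2}$, so the unconditional CLT says nothing about $\zeta_n$ on it. A full multivariate local limit theorem on the type lattice is more than you need. In case (i) it would also commit you to an equidistribution argument for $n\ip{g}{t}$ over the types in a slab of width $O(1/n)$.

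\textbf{How the paper does it.} Write $\EV e^{i\ip{a}{\zeta_n}+iuW_n}=\psi(a/\sqrt n,u)^n$, where $\psi$ is the joint characteristic function of $(\Pi(\delta_{Z_1}-\Qs),Y_1)$. The paper inverts in the tangential variable $u$ alone. In case (ii) this is Fourier inversion on $\eta\mathbb{Z}$. In case (i) it uses Stone's kernel $(\sin(w/b)/(w/b))^2$, whose transform has bounded support. Dominated convergence runs on two bounds:
\begin{itemize}
\item $|\psi(a,u)|\le e^{-c(|a|^2+u^2)}$ near the origin;
\item $|\psi|\le\theta<1$ away from it, which comes from $|\EV e^{iuY_1}|<1$ for $0<|u|\le\pi/\eta$ in case (ii) and for all $u\neq0$ in case (i).
\end{itemize}
L\'evy's continuity theorem in $a$ then gives the transverse weak limit conditionally on the tangential value.

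\textbf{Two smaller points.}
\begin{itemize}
\item The tail $W_n>L$ is not controlled by the decay of $e^{-W_n}$ alone. That decay gives only $\EV[\ind\{W_n>L\}e^{-W_n}]\le e^{-L}$, which is not small against a main term of order $n^{-1/2}$. You need the uniform bound $\sqrt n\,\Prob(W_n\in[w,w+1])\le C$ for all $w$ and $n$. This is a by-product of the same Fourier argument, and it makes the tail $O(e^{-L}n^{-1/2})$.
\item In case (ii) you cannot simply replace $n\phi(\zeta_n/\sqrt n)$ by $q(\zeta_n)$, because an $o(1)$ error can move the threshold across a layer $\eta(x_n+j)$. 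The paper uses a sandwich on $\{0\le W_n\le L\}$: $\{q(\zeta_n)\le W_n-\varepsilon\}\subseteq\{\Pemp\in\cA\}\subseteq\{q(\zeta_n)\le W_n+\varepsilon\}$. It then lets $\varepsilon\downarrow0$, using the continuity of the distribution function of $q(\zeta)$.
\end{itemize}

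Separately, the confinement $|\zeta_n|^2\le 4L/\lambda$ on $\{\Pemp\in\cA,\ W_n\le L\}$, with $\lambda$ the smallest eigenvalue of $M$, does not come from convexity and the decay of $e^{-W_n}$. It comes from $\phi(u)\ge\lambda|u|^2/4$ near $0$, together with $\cA\cap\partial\cH=\{\Qs\}$ and the closedness of $\cA$. Tightness of $\zeta_n$ from the joint limit would serve equally well.
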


In case (ii) the curvature penalty $q(\zeta)$ is incurred in whole steps of $\eta$, counted from the first layer of types in $\cH$.
When $\mu$ is irrational the phases $x_n$ are dense in $[0,1)$, so the limit points of $\tau_n$ fill $[F(0)-\eta, F(0)]$, an interval of length $\eta$ whatever the curvature.
Asymptotically the probability of the half-space is a factor common to both sets times $e^{-\eta x_n}$.
That factor, the phase-dependent part of the Bahadur--Rao constant for a statistic on an arithmetic progression, oscillates with $n$ unless $\mu$ is an integer \cite{BahadurRao1960}.
The probability of the curved set is the common factor times $\EV e^{-\eta(x_n + \lceil q(\zeta)/\eta - x_n\rceil)}$, so their ratio is $e^{-F(x_n)}$.
The geometry of the set near its projection is the difficult problem behind sharper estimates from \eqref{eq:sharp-sanov} on an open convex set \cite{dinwoodie1992mesures}.
For a sample mean in $\mathbb{R}^d$ with a smooth boundary and a bounded density, sharp asymptotics identify the limiting constant through the second fundamental forms of the boundary and of the level set of the rate function at their point of contact \cite{andriani1997sharp}.
In a similar setting the power of $n$ multiplying the exponential is set by the geometry \cite{iltis1995sharp}.
The Bahadur--Rao constant of a half-space \cite{BahadurRao1960} is the flat case.
For a scalar sequence, strong asymptotics cover both cases, whether or not the values lie on an arithmetic progression \cite{chaganty1993strong}.
For sums of random vectors whose values lie on a regular grid of points, the sharp asymptotics on a general set depend on $n$ through the smallest rate over the grid points in the set \cite{barbe2005sharp}.
Case (i) has the same two ingredients on the set of types: the boundary's Hessian $M$ against the rate function's Hessian $\Sigma_\zeta^{-1}$.
Case (ii) has no counterpart for a density.
On a relative-entropy ball $\{t : \dkl{t}{\pi}\le r\}$ the projection satisfies $g = c - \beta\log(\Qs/\pi)$ for a constant $c$, with $\beta$ its Lagrange multiplier.
Expanding $\dkl{t}{\pi}\le r$ to second order at $\Qs$ then gives $\ip{g}{t-\Qs}\ge\tfrac{\beta}{2}\sum_x (t(x)-\Qs(x))^2/\Qs(x) + o(|t-\Qs|^2)$, so $M = \beta\,\Sigma_\zeta^{-1}$.
Then $I + M\Sigma_\zeta = (1+\beta)I$ on $V$, which has dimension $K-2$, so the limit in case (i) is $\tfrac{K-2}{2}\log(1+\beta)$.
The relative-entropy balls around $\pi = (0.2, 0.3, 0.5)$ in Appendix~\ref{sec:eval-curvature} fall under case (ii): at large $n$ their exact gaps lie close to $F(x_n)$ and keep moving across the range of $F$.

\section{Classical results as special cases}
\label{sec:special-cases}

The finite-sample tools of large deviations are readings of \eqref{eq:identity}: the identity at a particular reference, on a particular geometry, or with its nonnegative terms dropped or bounded.

For a measurable function $h$ on $\cX^n$ with values in $[-\infty,\infty)$ and $0<\EV_{P^{\otimes n}}[e^{h}]<\infty$, the Donsker--Varadhan variational principle reads
\begin{equation}
\label{eq:dv}
  \log\EV_{P^{\otimes n}}\!\left[e^{h}\right] \;=\; \sup_{\rho}\left\{\EV_\rho[h] - \dkl{\rho}{P^{\otimes n}}\right\}
\end{equation}
with the supremum over laws $\rho$ on $\cX^n$.
Write $\rho_h$ for the tilt of $P^{\otimes n}$ by $h$, the law with density $e^{h}/\EV_{P^{\otimes n}}[e^{h}]$ with respect to $P^{\otimes n}$.
At every $\rho$, the left side of \eqref{eq:dv} exceeds the expression in braces by $\dkl{\rho}{\rho_h}$, because $\log(d\rho_h/dP^{\otimes n}) = h-\log\EV_{P^{\otimes n}}[e^{h}]$.
The supremum is therefore attained at $\rho_h$.
With $h$ equal to $0$ on $E_\cA$ and $-\infty$ off it, the tilt $\rho_h$ is $\muA$, and \eqref{eq:dv} is the identity $-\log\Prob_P(\Pemp\in\cA) = \dkl{\muA}{P^{\otimes n}}$ of Lemma~\ref{lem:conditioned}.
The chain rule of Lemma~\ref{lem:chain} splits that identity into the two terms of \eqref{eq:lemma4}.
At any other $\rho$, \eqref{eq:dv} is an inequality with slack $\dkl{\rho}{\muA}$, which is infinite unless $\rho$ is supported on $E_\cA$.
A bound follows from any $\rho$ whose two terms can be controlled.
PAC--Bayes bounds apply the same variational principle over the hypothesis space instead, with $\rho$ a posterior and the reference a data-independent prior \cite{hellstrom2025generalizationpacbayes}.
In standard usage, change of measure names the inequality at a suboptimal $\rho$; Theorem~\ref{thm:identity} is the identity at the optimal $\rho$ of \eqref{eq:dv}, written at a fixed reference $R$ and with two transport terms in place of the supremum.

Sanov's rate replaces $\om$ in \eqref{eq:lemma4} by $\Qs$, a step that needs $\om\in\cA$ (Corollary~\ref{cor:convex-log}).
Because the rate of the most probable type lies within $K\log(n+1)$ of the exponent on every set (Proposition~\ref{prop:bracket}), the value of the classical rate survives the replacement up to the error of approximating the projection by a type.
The mechanism it describes, a sample that deviates by drifting its marginal to the projection, fails on the symmetric events of Section~\ref{sec:taxonomy}, whose marginal does not move.

Chernoff's bound is the identity on the half-space $\cA_\varepsilon = \{t : \ip{\ell}{t}\ge\ip{\ell}{P}+\varepsilon\}$ of a statistic $\ell$, with its nonnegative terms dropped.
When the threshold $\ip{\ell}{P}+\varepsilon$ lies below the largest value of $\ell$, the information projection $\Qs$ is an exponential tilt of $P$ with full support (Lemma~\ref{lem:hoeffding-ratio}(i)), so $P$ and $\Qs$ are mutually absolutely continuous.
The three terms that \eqref{eq:identity} at $R = \Qs$ adds to $n\,\dkl{\Qs}{P}$ are then nonnegative.
Two of them are relative entropies, and the remaining one is $n$ times the defect at the barycenter $\om$.
The half-space is convex and the barycenter is an average of types in it, so the barycenter lies in the half-space, where the defect is nonnegative (Proposition~\ref{prop:tangent}).
Dropping the three terms gives $-\log\Prob_P(\Pemp\in\cA_\varepsilon)\ge n\,\dkl{\Qs}{P}$.
By Lemma~\ref{lem:hoeffding-ratio}(i), $n\,\dkl{\Qs}{P}$ is the exponent of Chernoff's bound.
Corollary~\ref{cor:convex-log} bounds the dropped dependence by $n$ times the error of approximating the projection by a type, plus $K\log(n+1)$.
Hoeffding's bound relaxes the exponent itself, to $2n\varepsilon^2$ when $\ell$ has unit range \cite[Theorem~2]{hoeffding1963}.
Lemma~\ref{lem:hoeffding-ratio}, proved in Appendix~\ref{app:proofs}, compares the two exponents and gives the limit of their ratio as $\varepsilon$ shrinks.

\begin{lemma}[Chernoff's and Hoeffding's exponents on a half-space]
\label{lem:hoeffding-ratio}
Let $\cX$ be finite, let $P$ have full support, and let $\ell$ be a function on $\cX$ with $\mathrm{Var}_P(\ell)>0$.
For $\lambda\in\mathbb{R}$, let $Q_\lambda(x) := P(x)\,e^{\lambda\ell(x)}/\EV_P[e^{\lambda\ell}]$.
For $0<\varepsilon<\max_x\ell(x)-\ip{\ell}{P}$, let $\Qs_\varepsilon$ be the information projection of $P$ onto $\cA_\varepsilon := \{t : \ip{\ell}{t}\ge\ip{\ell}{P}+\varepsilon\}$.
\begin{enumerate}
  \item[(i)] $\Qs_\varepsilon = Q_\lambda$ at the unique $\lambda>0$ with $\ip{\ell}{Q_\lambda} = \ip{\ell}{P}+\varepsilon$, so $\Qs_\varepsilon$ has full support, and $\dkl{\Qs_\varepsilon}{P}$ is the maximum over $s\in\mathbb{R}$ of $s(\ip{\ell}{P}+\varepsilon)-\log\EV_P[e^{s\ell}]$.
  \item[(ii)] $\dkl{\Qs_\varepsilon}{P}/\varepsilon^2\to1/(2\mathrm{Var}_P(\ell))$ as $\varepsilon\downarrow0$.
  \item[(iii)] If the values of $\ell$ lie in an interval of length one, then $\dkl{\Qs_\varepsilon}{P}\ge2\varepsilon^2$ for every such $\varepsilon$.
  Moreover $\mathrm{Var}_P(\ell)\le\tfrac14$, with equality if and only if $\ell$ takes only the two endpoint values of the interval, each with $P$-probability one half.
  \item[(iv)] Under the hypothesis of (iii), $\dkl{\Qs_\varepsilon}{P}/(2\varepsilon^2)\to1/(4\mathrm{Var}_P(\ell))$ as $\varepsilon\downarrow0$, a limit equal to one when $\mathrm{Var}_P(\ell) = \tfrac14$ and larger than one otherwise.
\end{enumerate}
\end{lemma}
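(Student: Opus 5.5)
The plan is to work throughout with the log-partition function $\Lambda(s) := \log\EV_P[e^{s\ell}]$ and the tilted family $Q_s$. On a finite alphabet with $P$ of full support, $\Lambda$ is smooth with $\Lambda'(s) = \ip{\ell}{Q_s}$ and $\Lambda''(s) = \mathrm{Var}_{Q_s}(\ell)$. This variance is positive because $Q_s$ has full support and $\ell$ is nonconstant, so $\Lambda'$ is a strictly increasing bijection from $\mathbb{R}$ onto $(\min\ell,\max\ell)$. Hence for $0<\varepsilon<\max\ell-\ip{\ell}{P}$ there is a unique $\lambda$ with $\Lambda'(\lambda)=\ip{\ell}{P}+\varepsilon$, and $\lambda>0$ because $\Lambda'(0)=\ip{\ell}{P}$.

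For (i), I would compute, for any $t$ with $\ip{\ell}{t}\ge a:=\ip{\ell}{P}+\varepsilon$,
\[
\dkl{t}{P} = \dkl{t}{Q_\lambda} + \lambda\ip{\ell}{t} - \Lambda(\lambda) \ge \lambda a - \Lambda(\lambda) = \dkl{Q_\lambda}{P}.
\]
The inequality uses $\lambda>0$, and it is an equality only at $t = Q_\lambda$. This shows $Q_\lambda$ is the projection and that it has full support. The variational formula follows because $s\mapsto sa-\Lambda(s)$ is concave with stationary point $\lambda$, so its maximum is $\lambda a-\Lambda(\lambda)=\dkl{Q_\lambda}{P}$.

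For (ii), write $I(a) := \sup_s\{sa-\Lambda(s)\}$, the Legendre transform of $\Lambda$. On the interior of its domain it is smooth, with $I(\mu)=0$, $I'(\mu)=0$ and $I''(\mu)=1/\Lambda''(0)=1/\mathrm{Var}_P(\ell)$, where $\mu=\ip{\ell}{P}$. A second-order Taylor expansion then gives $I(\mu+\varepsilon)/\varepsilon^2\to 1/(2\mathrm{Var}_P(\ell))$.

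For (iii), I would use Hoeffding's lemma in the form $\Lambda(s)\le s\mu+s^2/8$ when $\ell$ ranges in an interval of length one. Taking $s=4\varepsilon$ in the variational formula of (i) gives $I(\mu+\varepsilon)\ge 4\varepsilon^2 - 2\varepsilon^2 = 2\varepsilon^2$. Hoeffding's lemma itself I would prove the standard way: $\Lambda''(s)=\mathrm{Var}_{Q_s}(\ell)\le 1/4$ for every $s$, then integrate twice from $0$. The variance bound comes from $\mathrm{Var}(\ell)\le \EV[(\ell-c)^2]$ with $c$ the interval's midpoint, together with $(\ell-c)^2\le 1/4$. Equality forces both steps to be tight: $\ell\in\{c\pm1/2\}$ almost surely and $\EV\ell=c$, which means the two endpoints each carry probability one half. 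Conversely that distribution attains $1/4$.

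For (iv), divide (ii) by $2$ to get the limit $1/(4\mathrm{Var}_P(\ell))$. The bound $\mathrm{Var}_P(\ell)\le 1/4$ from (iii), with its equality case, then shows the limit is one exactly in the two-point fair case and larger than one otherwise.

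The only step needing care is the identification $I''(\mu)=1/\Lambda''(0)$ in (ii), which rests on the inverse function theorem for the smooth bijection $\Lambda'$. Everything else is a direct computation.
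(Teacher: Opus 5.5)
Your proposal is correct and follows the paper's proof essentially step for step: the same tilted family, the identity $\dkl{t}{P} = \dkl{t}{Q_\lambda} + \lambda\ip{\ell}{t} - \Lambda(\lambda)$ for (i), Hoeffding's lemma inserted into the variational formula of (i) for (iii), and the same midpoint-variance argument for the equality case. The differences are cosmetic. For (ii) you Taylor-expand the Legendre transform using $I''(\mu) = 1/\Lambda''(0)$, whereas the paper writes $\dkl{\Qs_\varepsilon}{P} = \int_0^{\lambda_\varepsilon} s\,\Lambda''(s)\,ds$ and $\varepsilon = \int_0^{\lambda_\varepsilon}\Lambda''(s)\,ds$ and expands both in $\lambda_\varepsilon$. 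In (iii) you derive Hoeffding's lemma from $\mathrm{Var}_{Q_s}(\ell)\le\tfrac14$, where the paper cites it.
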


The constant $2$ is therefore tight: for a fair two-point loss, $\dkl{\Qs_\varepsilon}{P}\ge c\,\varepsilon^2$ fails at small $\varepsilon$ whenever $c>2$.
For the loss $\ell = (0,\tfrac12,1)$ under $P = (0.5,0.3,0.2)$, the ratio rises toward its limit $1.64$ as $\varepsilon$ decreases through the thresholds of Appendix~\ref{sec:eval-classical}.
When the threshold is the largest value of $\ell$, the half-space is the face on the letters where $\ell$ attains its maximum, so Proposition~\ref{prop:face} applies.

The union bound over $k$ events lowers the best single exponent by $\log k$.
By Theorem~\ref{thm:partition} the pooled dependence is the average within-piece dependence plus $n\,\JS_w$ minus $\Ent(w)$, and $\Ent(w)\le\log k$ is the union bound's own term; the middle term has no counterpart in the classical statement.

The likelihood ratio $L_n = (d\Qs/dP)^{\otimes n}$ of the sample, the change of measure in \eqref{eq:sharp-sanov}, has mean one under $P$, so it is an e-variable for the null hypothesis $P$.
On a closed convex $\cA$ that excludes $P$ it is, under regularity conditions, the growth-optimal e-variable against the alternatives $Q^{\otimes n}$ with $Q\in\cA$ \cite{grunwald2024growthoptimal}.
Its logarithm is $n\,\ip{g}{\Pemp} = n\,\dkl{\Qs}{P} + n\,\Delta(\Pemp)$ by \eqref{eq:defect}.
On a finite alphabet the defect is nonnegative at every type in a convex $\cA$ by the Pythagorean inequality \cite{csisz1975idivergenceb}, so $L_n\ge e^{n\dkl{\Qs}{P}}$ on the event, and Markov's inequality applied to $L_n$ is the Csisz\'ar--Sanov--Chernoff bound.
Equation~\eqref{eq:sharp-sanov} is its exact form, and \eqref{eq:three-way} is its slack.
The bound extends past convexity when the means of a $d$-dimensional statistic under the alternatives surround its mean under $P$, provided the surrounded region is star-shaped about that mean and lies inside the interior of the convex hull of the statistic's values.
The exponent is then at least the smallest rate on the boundary of that region less a minimax regret \cite{grunwald2024growthoptimal}.
For the mean of $n$ draws that regret grows as $\frac{d-1}{2}\log n$, under regularity conditions on the boundary \cite{grunwald2024growthoptimal}.
For $d = 1$ and alternatives that are tilts of $P$, the growth-optimal e-variable is a mixture of the tilts at the two ends of the surrounded interval \cite{grunwald2024growthoptimal}.
The two-sided tails of Figure~\ref{fig:regime-dial} are such events with $d = 1$.
On a uniform population their dependence share is at least $0.70$ (Appendix~\ref{sec:eval-regime-map}).

Gibbs conditioning states that on a convex set any fixed number of conditioned coordinates become independent in the limit, with common law $\Qs$ \cite{csiszar1984Sanov,dembo1996refinements}.
Its refinements for blocks of growing length also require convexity \cite{dembo1996refinements}.
The limit for a fixed number of coordinates is compatible with a whole-sample dependence that grows: on the half-space $\{\ip{\ell}{t}\ge\tfrac12\}$ of the Hoeffding example the total correlation of two coordinates falls toward zero as $n$ grows, while $\TC(\cA)$ rises (Appendix~\ref{sec:eval-classical}).
The rise is at most logarithmic in $n$.
The projection of this half-space has full support (Lemma~\ref{lem:hoeffding-ratio}(i)), so some type $t\in\cA$ lies within $O(1/n)$ of $\Qs$.
For that type, $\dkl{t}{P} - \dkl{\Qs}{P} = \Delta(t) + \dkl{t}{\Qs} = O(1/n)$ (Lemma~\ref{lem:defect-lr}(i)).
Since $\dkl{\Qs_n}{P}\le\dkl{t}{P}$, Corollary~\ref{cor:convex-log} gives $\TC(\cA)\le K\log(n+1) + O(1)$.
The total correlation of two coordinates and $\TC(\cA)$ answer different questions.
Equation~\eqref{eq:lemma4} places $\TC(\cA)$ in the exponent: Corollary~\ref{cor:convex-log} bounds it on a convex set, and on the symmetric unions of Section~\ref{sec:envelope} it is the whole exponent, of order $n$.

\section{Deviation events of learning}
\label{sec:learning}

\subsection{Uniform convergence over a class}

Let $\ell_h : \cX\to[0,1]$ be the loss of a hypothesis $h$ in a finite class $\cF$, with population risk $L(h) = \ip{\ell_h}{P}$ and empirical risk $\ip{\ell_h}{\Pemp}$.
The uniform-convergence event at level $\varepsilon$ is the union of half-spaces
\[
  U_\varepsilon = \bigcup_{h\in\cF}\{t : \ip{\ell_h}{t} - L(h) \ge \varepsilon\}
\]
whose probability the classical treatment bounds by $|\cF|$ times the largest single-hypothesis tail \cite{vapnik1971,hoeffding1963}.
If a group acts on $\cX$ fixing $P$ and permuting the losses $\{\ell_h\}$, the event is invariant.
Take a label-independent classification problem: $\cX = \{1,\ldots,m\}\times\{0,1\}$, $P$ uniform on the inputs with fair labels, and the class of all $2^m$ labelings.
The input permutations and per-input label flips act transitively and fix $P$ and the class, so by Proposition~\ref{prop:symmetry} the marginal rate is zero and the whole exponent is dependence.
The event that the empirical-risk minimizer's population risk exceeds its empirical risk by $\varepsilon$ is invariant in the same way, so its exponent too is all dependence.
A signal in the labels breaks the symmetry, yet over the full class of labelings the share of either event stays close to one, while that of a single hypothesis stays well below one half (Figure~\ref{fig:learning}(a) and Appendix~\ref{sec:eval-classification}).
On the population of iris flowers in the cells of petal-length tercile by species, the three events separate (Figure~\ref{fig:learning}(b)).
Over the class of maps from tercile to species, the dependence share is nearly one for the union, above one half for the minimizer's gap event and below one half for the best single map, a hypothesis fixed before the sample.

\begin{figure}[!t]
\centering
\includegraphics[width=\linewidth]{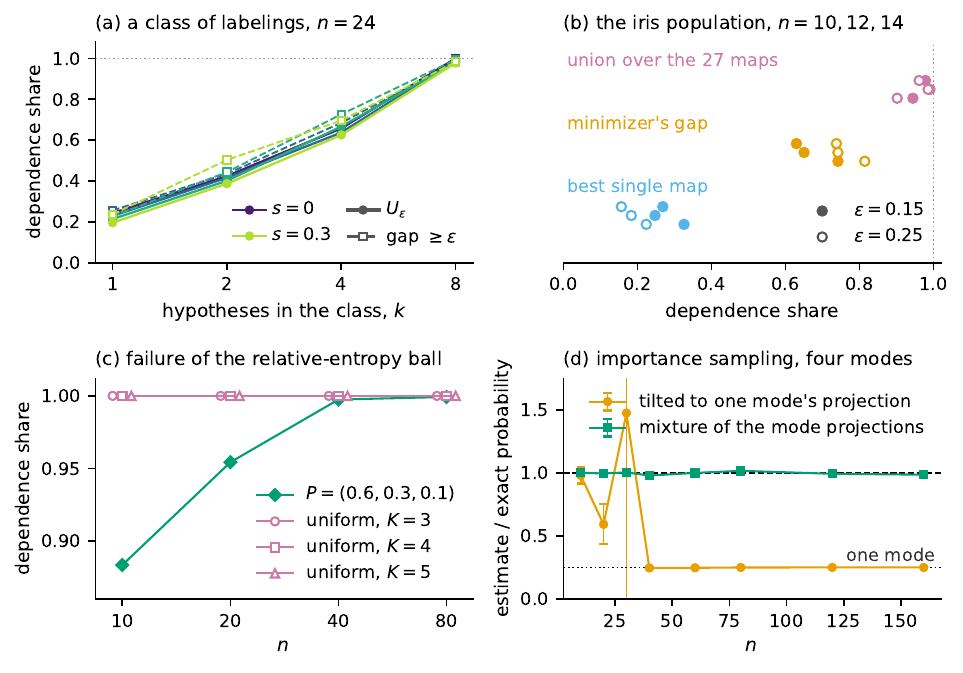}
\caption{\textbf{The whole exponent of a symmetric deviation event of learning is dependence.}
A sampler tilted to one mode of such an event estimates that mode alone.
(a) The dependence share against the size $k$ of a class of labelings of three inputs, at $n = 24$ and $\varepsilon = 0.2$, for the uniform-convergence event $U_\varepsilon$ (solid) and the event that the empirical-risk minimizer's gap is at least $\varepsilon$ (dashed).
The label signal $s$ runs over $0$, $0.1$, $0.2$ and $0.3$ from dark to light.
At $s = 0$ the two curves coincide.
(b) The dependence share of the union over the $27$ maps from tercile to species, the minimizer's gap event and the best single map, on the population of the $150$ iris flowers, at $n = 10, 12, 14$ and $\varepsilon = 0.15$ (filled) and $0.25$ (open).
(c) The dependence share of the failure event $\{\dkl{\Pemp}{P} > r\}$ against $n$ at the exact radius for coverage $0.95$, on uniform populations of three to five letters (open) and on $P = (0.6, 0.3, 0.1)$ (filled).
(d) Importance-sampling estimates of the probability of the symmetric union of four half-spaces $\{t_j\ge 1/2\}$ on four uniform letters, over the exact probability, against $n$, with bars of two reported standard errors.
The dotted line marks one quarter.}
\label{fig:learning}
\end{figure}

The critique that uniform convergence may fail to explain generalization \cite{nagarajan2019uniform} concerns the value of the bound; on a symmetric class the mechanism the bound describes is absent even where its value is right.

\subsection{Interpolation}

The event that a fixed predictor makes no error, so that every sampled point avoids the letters it misclassifies, is the face event $\{\supp\Pemp\subseteq F\}$, and Proposition~\ref{prop:face} gives an independent conditioned sample and the exact classical rate $-n\log P(F)$ at every $n$.
A learner that fits every observed point \cite{zhang2021understanding} sits at the other end when its class is symmetric.
The event that some hypothesis of the class interpolates is a union of such faces.
When a group acting transitively on the alphabet fixes $P$ and permutes the class, that union is invariant, and by Proposition~\ref{prop:symmetry} its whole exponent is dependence.
The unions of faces in Figure~\ref{fig:regime-dial} are these events for classes of predictors that each misclassify one letter.
On a uniform population the whole exponent of the union of all $K$ faces is dependence, while on a skewed one the face of the predictor that misclassifies the least probable letter comes to dominate and the share falls toward zero.
Over a symmetric class, uniform convergence and interpolation, where the classical account of generalization has been questioned, therefore sit at the same end, while the face end belongs to a predictor fixed before the sample.

\subsection{The ambiguity ball}

Distributionally robust optimization replaces the empirical distribution by a divergence ball around it \cite{duchi2021learningc}, and one calibration chooses the radius so that the ball covers the population with prescribed probability \cite{bental2013robust}.
A smaller ball, which may cover the population with small or even zero probability, still secures the statistical guarantee that coverage was meant to provide, through the link to empirical likelihood \cite{lam2019recovering}.
When the radius is instead the exponential rate at which the probability of underestimating an expected cost must decay, the worst case over the relative-entropy ball is the least conservative predictor that achieves that rate \cite{vanparys2021data}.
The coverage event $\{\dkl{\Pemp}{P}\le r\}$ is a curved convex set containing $P$.
The radii in use come from the likelihood-ratio limit $2n\dkl{\Pemp}{P}\to\chi^2_{K-1}$ \cite{vaart1998asymptotic} and from the method-of-types bound $\Prob(\dkl{\Pemp}{P}> r)\le(n+1)^K e^{-nr}$.
Equation~\eqref{eq:lemma4} gives the exact coverage at every $n$ (Appendix~\ref{sec:eval-dro-ball}).
For a uniform $P$ the failure event is invariant under the full symmetric group, so its marginal rate is zero and the exponent that the radius is chosen to control is entirely dependence.

On a skewed population, which no permutation of the letters fixes, the failure event $\{\dkl{\Pemp}{P} > r\}$ at the smallest radius whose coverage reaches $0.95$ still has a dependence share of at least $0.88$.
Its share rises toward one as $n$ grows (Figure~\ref{fig:learning}(c) and Appendix~\ref{sec:eval-dro-ball}).

\subsection{Importance sampling}

Importance sampling under a large-deviations change of measure is efficient in many settings and can fail badly when a rare event has more than one way to occur \cite{glasserman1997counterexamples}.
In terms of types, the failure takes the following form.

\begin{proposition}[Relative variance of the tilted sampler]
\label{prop:sampler}
Let $R$ have full support, let $g = \log(dR/dP)$, and let $X = \ind\{\Pemp\in\cA\}e^{-n\ip{g}{\Pemp}}$ under $\Pemp$ drawn from $R^{\otimes n}$.
Then $\EV[X] = \Prob_P(\Pemp\in\cA)$, and for every $B\subseteq\cA$ with $\Prob_P(\Pemp\in B)>0$ the contribution $X_B = \ind\{\Pemp\in B\}e^{-n\ip{g}{\Pemp}}$ satisfies
\[
  \frac{\EV[X_B^2]}{\EV[X_B]^2} \ge (n+1)^{-3K}\exp\bigl(n\,\dkl{\Qs_{B,n}}{R}\bigr)
\]
with $\Qs_{B,n}$ a most probable type in $B$ under $P$.
\end{proposition}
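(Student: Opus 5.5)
The plan has three steps. First, the mean is computed exactly by the change of measure. Then the second moment is written as a sum over types and bounded below by the single most probable type in $B$. Finally, the first moment is bounded above by the method of types and the two bounds are divided.

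\emph{Mean.} Since $R$ has full support and $P$ has full support, the two are mutually absolutely continuous. The likelihood ratio of the sample is
\[
  dP^{\otimes n}/dR^{\otimes n} = e^{-n\ip{g}{\Pemp}}
\]
(the computation of Lemma~\ref{lem:defect-lr}(ii) with $R$ in place of $\Qs$). Hence for every $B\subseteq\cA$,
\[
  \EV[X_B]=\EV_{R}[\ind\{\Pemp\in B\}\,dP^{\otimes n}/dR^{\otimes n}]=\Prob_P(\Pemp\in B),
\]
and the case $B=\cA$ gives $\EV[X]=\Prob_P(\Pemp\in\cA)$.

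\emph{Second moment.} Write the second moment as a sum over types $t\in B\cap\cT_n$:
\[
  \EV[X_B^2]=\sum_{t}R^{\otimes n}(T_t)\,e^{-2n\ip{g}{t}},
\]
where $T_t$ is the type class of $t$. On $T_t$ one has $R^{\otimes n}(z)=e^{-n(\Ent(t)+\dkl{t}{R})}$ and $e^{-n\ip{g}{t}}=P^{\otimes n}(z)/R^{\otimes n}(z)$. Therefore each term equals
\[
  P^{\otimes n}(T_t)^2/R^{\otimes n}(T_t).
\]
Keep only the term $t=\Qs_{B,n}$ and use $R^{\otimes n}(T_t)\le e^{-n\dkl{t}{R}}$ together with $P^{\otimes n}(T_t)\ge(n+1)^{-K}e^{-n\dkl{t}{P}}$. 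This gives
\[
  \EV[X_B^2]\ge(n+1)^{-2K}\exp\bigl(-2n\,\dkl{\Qs_{B,n}}{P}+n\,\dkl{\Qs_{B,n}}{R}\bigr).
\]

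\emph{First moment.} The method-of-types upper bound gives
\[
  \EV[X_B]=\Prob_P(\Pemp\in B)\le(n+1)^K e^{-n\dkl{\Qs_{B,n}}{P}},
\]
since $\Qs_{B,n}$ minimizes $\dkl{\cdot}{P}$ over the at most $(n+1)^K$ types in $B$. Squaring and dividing, the $\dkl{\Qs_{B,n}}{P}$ terms cancel exactly. What remains is $(n+1)^{-2K-2K}$ times $\exp(n\dkl{\Qs_{B,n}}{R})$, which is at least the claimed $(n+1)^{-3K}$ factor only if the bookkeeping of the polynomial factors is tightened.

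\emph{Main obstacle.} The only delicate point is matching the exponent $3K$. The naive count above gives $(n+1)^{-4K}$, so one polynomial factor must be saved. I would save it in the type-class lower bound: use $P^{\otimes n}(T_t)\ge (n+1)^{-K}e^{-n\dkl{t}{P}}$ once, and apply the exact identity
\[
  P^{\otimes n}(T_t)^2/R^{\otimes n}(T_t)=P^{\otimes n}(T_t)\cdot e^{-n\ip{g}{t}}.
\]
Bounding $P^{\otimes n}(T_t)$ below by $(n+1)^{-K}e^{-n\dkl{t}{P}}$ and using the exact value $e^{-n\ip{g}{t}}=e^{n(\dkl{t}{R}-\dkl{t}{P})}$, valid for a type $t$ on a finite alphabet, gives
\[
  \EV[X_B^2]\ge(n+1)^{-K}e^{-2n\dkl{t}{P}+n\dkl{t}{R}}.
\]
Dividing by the squared first-moment bound $(n+1)^{2K}e^{-2n\dkl{t}{P}}$ then yields exactly $(n+1)^{-3K}\exp(n\,\dkl{\Qs_{B,n}}{R})$, as required.
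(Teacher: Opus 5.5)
Your final argument, after the ``main obstacle'' correction, is correct and is the paper's proof. Both write $\EV[X_B^2]=\sum_{t\in B}\Prob_P(\Pemp=t)\,e^{-n\ip{g}{t}}$, keep only the term $t=\Qs_{B,n}$, use the exact value $\ip{g}{t}=\dkl{t}{P}-\dkl{t}{R}$ with the type-class lower bound applied once, and divide by the squared method-of-types upper bound on $\Prob_P(\Pemp\in B)$. The earlier detour that also bounds $R^{\otimes n}(T_t)$ applies the type-class lower bound twice, loses a factor $(n+1)^{K}$, and can simply be deleted.
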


Taking $R$ the projection onto one piece of a union, the relative variance on any piece whose most probable type is far from $R$ grows exponentially in $n$.
At a feasible number of draws those pieces are never hit, and the estimate settles, with a small reported error, at the probability of the sampler's own piece, one quarter of the truth on a symmetric union of four half-spaces (Figure~\ref{fig:learning}(d) and Appendix~\ref{sec:eval-mixture-limit}).
A proposal that mixes the mode projections gives every mode explicit weight, as the alternatives constructed for such events do \cite{glasserman1997counterexamples}.
Its likelihood ratio on each piece is at most $k$ times the ratio against that piece's projection, and its estimates stay within two percent of the exact value at every sample size enumerated.

\section{Relation to existing results}
\label{sec:related}

\paragraph{Sources of the identity.}
Equation~\eqref{eq:lemma4} appears as equation~(4.13) in the proof of Theorem~1 of \cite{csiszar1984Sanov}.
It follows there from identity~(2.11) of that paper.
The derivation uses no convexity.
Lemma~4 of \cite{balsubramani2020sharp} (version~5) states \eqref{eq:lemma4} for every $\cA$.
Theorem~3 there places the exponent per sample between $\dkl{\om}{P}$ and the cross entropy of $\om$ against $P$ on every set, and states that the comparison with the projection needs convexity.
Theorem~C.4 of \cite{balsubramani2026information} restates \eqref{eq:lemma4} and reads it through the mixed coincidence identity.
The sign of the defect on a convex set rests on the Pythagorean inequality \cite{csisz1975idivergenceb,csiszar2004information}.

\paragraph{The change of measure and the exact difference.}
Section~\ref{sec:setup-tilt} gives the sources of the change of measure \eqref{eq:sharp-sanov}.
Inequality~(2.17) of \cite{csiszar1984Sanov} is \eqref{eq:identity} at the projection onto a convex superset of the event, with the nonnegative defect term dropped.
Theorem~2 of \cite{balsubramani2020sharp} restates it and adds equality on a linear family.
The sources named here give the difference between \eqref{eq:lemma4} and \eqref{eq:sharp-sanov} only as that inequality on a convex set, or as an equality on a linear family, where the defect vanishes.
Equation~\eqref{eq:three-way} states it exactly, with a defect term of either sign.

\paragraph{Microcanonical and canonical ensembles.}
In statistical mechanics the conditioned law is the microcanonical ensemble, which is equivalent to a canonical ensemble where the entropy is concave \cite{lewis1995entropy,touchette2009large}.
At phase coexistence the canonical ensemble is concentrated on several equilibrium states at once \cite{touchette2009large}.
The conditioned law of a multi-mode event, concentrated on its modes, is the analogue of that ensemble.

\paragraph{Exchangeability and mixtures of products.}
The conditioned sample is a finite exchangeable sequence.
The law of any $k$ of its coordinates therefore lies within $2Kk/n$ in total variation of a mixture of independent and identically distributed sequences on an alphabet of $K$ letters, and within $k(k-1)/n$ on any space \cite{diaconis1980finite}.
The whole sample can be far from the product of its own marginal: on a union of separated modes of comparable probability, its total correlation is of order $n$ (Theorem~\ref{thm:partition}).
With the weights $w_j$ and piece marginals $\omega_{\cA_j}$ of that theorem, the relative entropy of $\muA$ from the mixture $\sum_j w_j\,\omega_{\cA_j}^{\otimes n}$ of the pieces' products is at most the average dependence within a piece, $\sum_j w_j\,\TC(\cA_j)$.
This bound follows from the joint convexity of relative entropy \cite{csiszar2004information}, applied to $\muA = \sum_j w_j\,\mu_{\cA_j}$.
The dual total correlation of a law $\mu$ on a finite product space is its entropy less the sum over the coordinates of the conditional entropy of each coordinate given the others.
For points $z$ and $y$, let $\nabla\log\mu(z,y) := \sum_i\,[\log\mu(z_{-i},y_i)-\log\mu(z_{-i},a_i)]$, where $(z_{-i},b)$ is $z$ with its $i$th coordinate set to $b$ and $a_i$ is a reference letter fixed for coordinate $i$.
Fix a partition of the space.
For a point $y$, let $\mu_y$ be $\mu$ conditioned on the piece that contains $y$, and let $\nu_y$ be the product over $i$ of the conditional law under $\mu$ of the $i$th coordinate given the other coordinates of $y$.
Theorem~A of \cite{austin2019structure} assumes that $\mu$ has full support and that, for some $\delta>0$, the partition satisfies $|\nabla\log\mu(z,y)-\nabla\log\mu(z',y)|<\delta n$ at every $y$ whenever $z$ and $z'$ lie in the same piece.
Then the dual total correlation of $\mu$ is less than the entropy of the partition under $\mu$ plus $\delta n$, and so is the average of $\dkl{\mu_y}{\nu_y}$ over $y$ drawn from $\mu$.
When $\delta$ is small and the partition has few enough pieces that its entropy is small compared with $n$, most of the mass of $\mu$ therefore lies on pieces whose conditional laws are close to product laws.
Theorem~A does not apply to the conditioned law $\muA$, which vanishes off $E_\cA$.
For $\muA$, the joint-convexity bound above is the counterpart of Theorem~A.

\section*{Acknowledgments}

We acknowledge the use of large language models in preparing this work.

\bibliography{deviation_event_geometry}

\begin{thebibliography}{38}
\providecommand{\natexlab}[1]{#1}
\providecommand{\url}[1]{\texttt{#1}}
\expandafter\ifx\csname urlstyle\endcsname\relax
  \providecommand{\doi}[1]{doi: #1}\else
  \providecommand{\doi}{doi: \begingroup \urlstyle{rm}\Url}\fi

\bibitem[Andriani and Baldi(1997)]{andriani1997sharp}
Cristina Andriani and Paolo Baldi.
\newblock Sharp estimates of deviations of the sample mean in many dimensions.
\newblock \emph{Annales de l'Institut Henri Poincar{\'e} (B) Probability and
  Statistics}, 33\penalty0 (3):\penalty0 371--385, 1997.

\bibitem[Austin(2019)]{austin2019structure}
Tim Austin.
\newblock The structure of low-complexity {G}ibbs measures on product spaces.
\newblock \emph{The Annals of Probability}, 47\penalty0 (6):\penalty0
  4002--4023, 2019.
\newblock \doi{10.1214/19-AOP1352}.

\bibitem[Bahadur and Rao(1960)]{BahadurRao1960}
Raghu~Raj Bahadur and R.~Ranga Rao.
\newblock On deviations of the sample mean.
\newblock \emph{The Annals of Mathematical Statistics}, 31\penalty0
  (4):\penalty0 1015--1027, 1960.
\newblock \doi{10.1214/aoms/1177705674}.

\bibitem[Balsubramani(2020)]{balsubramani2020sharp}
Akshay Balsubramani.
\newblock Sharp finite-sample concentration of independent variables.
\newblock \emph{arXiv preprint arXiv:2008.13293}, 2020.

\bibitem[Balsubramani(2026)]{balsubramani2026information}
Akshay Balsubramani.
\newblock Information from coincidences.
\newblock \emph{arXiv preprint arXiv:2606.25042}, 2026.

\bibitem[Barbe and Broniatowski(2005)]{barbe2005sharp}
Ph. Barbe and Michel Broniatowski.
\newblock On sharp large deviations for sums of random vectors and
  multidimensional {L}aplace approximation.
\newblock \emph{Theory of Probability \& Its Applications}, 49\penalty0
  (4):\penalty0 561--588, 2005.

\bibitem[Ben-Tal et~al.(2013)Ben-Tal, Den~Hertog, De~Waegenaere, Melenberg, and
  Rennen]{bental2013robust}
Aharon Ben-Tal, Dick Den~Hertog, Anja De~Waegenaere, Bertrand Melenberg, and
  Gijs Rennen.
\newblock Robust solutions of optimization problems affected by uncertain
  probabilities.
\newblock \emph{Management Science}, 59\penalty0 (2):\penalty0 341--357, 2013.

\bibitem[Chaganty and Sethuraman(1993)]{chaganty1993strong}
Narasinga~Rao Chaganty and Jayaram Sethuraman.
\newblock Strong large deviation and local limit theorems.
\newblock \emph{The Annals of Probability}, 21\penalty0 (3):\penalty0
  1671--1690, 1993.

\bibitem[Cover and Thomas(2006)]{cover1999elements}
Thomas~M. Cover and Joy~A. Thomas.
\newblock \emph{Elements of information theory}.
\newblock John Wiley \& Sons, 2nd edition, 2006.
\newblock \doi{10.1002/047174882X}.

\bibitem[Cram{\'e}r(1994)]{cram1938}
Harald Cram{\'e}r.
\newblock Sur un nouveau th{\'e}or{\`e}me-limite de la th{\'e}orie des
  probabilit{\'e}s.
\newblock In \emph{Collected works II}, pages 895--913. Springer, 1994.

\bibitem[Csisz{\'a}r(1975)]{csisz1975idivergenceb}
Imre Csisz{\'a}r.
\newblock {$I$}-divergence geometry of probability distributions and
  minimization problems.
\newblock \emph{The Annals of Probability}, 3\penalty0 (1):\penalty0 146--158,
  1975.
\newblock \doi{10.1214/aop/1176996454}.

\bibitem[Csisz{\'a}r(1984)]{csiszar1984Sanov}
Imre Csisz{\'a}r.
\newblock Sanov property, generalized {$I$}-projection and a conditional limit
  theorem.
\newblock \emph{Annals of Probability}, 12\penalty0 (3):\penalty0 768--793,
  1984.
\newblock \doi{10.1214/aop/1176993227}.

\bibitem[Csisz{\'a}r et~al.(2004)Csisz{\'a}r, Shields,
  et~al.]{csiszar2004information}
Imre Csisz{\'a}r, Paul~C Shields, et~al.
\newblock Information theory and statistics: A tutorial.
\newblock \emph{Foundations and Trends{\textregistered} in Communications and
  Information Theory}, 1\penalty0 (4):\penalty0 417--528, 2004.
\newblock \doi{10.1561/9781933019543}.

\bibitem[Dembo and Zeitouni(1996)]{dembo1996refinements}
Amir Dembo and Ofer Zeitouni.
\newblock Refinements of the {G}ibbs conditioning principle.
\newblock \emph{Probability Theory and Related Fields}, 104:\penalty0 1--14,
  1996.

\bibitem[Dembo and Zeitouni(2010)]{dembo2010large}
Amir Dembo and Ofer Zeitouni.
\newblock \emph{Large Deviations Techniques and Applications}.
\newblock Springer, 2nd, corrected reprint edition, 2010.
\newblock \doi{10.1007/978-3-642-03311-7}.

\bibitem[Diaconis and Freedman(1980)]{diaconis1980finite}
Persi Diaconis and David Freedman.
\newblock Finite exchangeable sequences.
\newblock \emph{The Annals of Probability}, 8\penalty0 (4):\penalty0 745--764,
  1980.
\newblock \doi{10.1214/aop/1176994663}.

\bibitem[Dinwoodie(1992)]{dinwoodie1992mesures}
Ian~H. Dinwoodie.
\newblock Mesures dominantes et th{\'e}or{\`e}me de {S}anov.
\newblock \emph{Annales de l'Institut Henri Poincar{\'e}, Probabilit{\'e}s et
  Statistiques}, 28\penalty0 (3):\penalty0 365--373, 1992.

\bibitem[Duchi and Namkoong(2021)]{duchi2021learningc}
John~C Duchi and Hongseok Namkoong.
\newblock Learning models with uniform performance via distributionally robust
  optimization.
\newblock \emph{The Annals of Statistics}, 49\penalty0 (3):\penalty0
  1378--1406, 2021.

\bibitem[Durrett(2019)]{durrett2019probability}
Rick Durrett.
\newblock \emph{Probability: Theory and Examples}.
\newblock Cambridge University Press, 5th edition, 2019.

\bibitem[Glasserman and Wang(1997)]{glasserman1997counterexamples}
Paul Glasserman and Yashan Wang.
\newblock Counterexamples in importance sampling for large deviations
  probabilities.
\newblock \emph{The Annals of Applied Probability}, 7\penalty0 (3):\penalty0
  731--746, 1997.

\bibitem[Grendar(2006)]{grendar2006conditional}
Marian Grendar.
\newblock Conditional equi-concentration of types.
\newblock \emph{Focus on Probability Theory}, pages 73--89, 2006.

\bibitem[Gr{\"u}nwald et~al.(2024)Gr{\"u}nwald, Hao, and
  Balsubramani]{grunwald2024growthoptimal}
Peter Gr{\"u}nwald, Yunda Hao, and Akshay Balsubramani.
\newblock Growth-optimal {E}-variables and an extension to the multivariate
  {Csisz{\'a}r--Sanov--Chernoff} theorem.
\newblock \emph{arXiv preprint arXiv:2412.17554}, 2024.

\bibitem[Hellstr{\"o}m et~al.(2025)Hellstr{\"o}m, Durisi, Guedj, and
  Raginsky]{hellstrom2025generalizationpacbayes}
Fredrik Hellstr{\"o}m, Giuseppe Durisi, Benjamin Guedj, and Maxim Raginsky.
\newblock Generalization bounds: Perspectives from information theory and
  {PAC}-{Bayes}.
\newblock \emph{Foundations and Trends in Machine Learning}, 18\penalty0
  (1):\penalty0 1--223, 2025.

\bibitem[Hoeffding(1963)]{hoeffding1963}
Wassily Hoeffding.
\newblock Probability inequalities for sums of bounded random variables.
\newblock \emph{Journal of the American Statistical Association}, 58\penalty0
  (301):\penalty0 13--30, 1963.

\bibitem[Iltis(1995)]{iltis1995sharp}
Michael Iltis.
\newblock Sharp asymptotics of large deviations in {$\mathbb{R}^d$}.
\newblock \emph{Journal of Theoretical Probability}, 8\penalty0 (3):\penalty0
  501--522, 1995.

\bibitem[Lam(2019)]{lam2019recovering}
Henry Lam.
\newblock Recovering best statistical guarantees via the empirical
  divergence-based distributionally robust optimization.
\newblock \emph{Operations Research}, 67\penalty0 (4):\penalty0 1090--1105,
  2019.

\bibitem[Lewis et~al.(1995)Lewis, Pfister, and Sullivan]{lewis1995entropy}
John~T Lewis, Charles-Edouard Pfister, and Wayne~G Sullivan.
\newblock Entropy, concentration of probability and conditional limit theorems.
\newblock \emph{Markov Processes and Related Fields}, 1\penalty0 (3):\penalty0
  319--386, 1995.

\bibitem[Nagarajan and Kolter(2019)]{nagarajan2019uniform}
Vaishnavh Nagarajan and J~Zico Kolter.
\newblock Uniform convergence may be unable to explain generalization in deep
  learning.
\newblock \emph{Advances in Neural Information Processing Systems}, 32, 2019.

\bibitem[Ney(1983)]{ney1983dominating}
Peter Ney.
\newblock Dominating points and the asymptotics of large deviations for random
  walk on {$\mathbb{R}^d$}.
\newblock \emph{The Annals of Probability}, 11\penalty0 (1):\penalty0 158--167,
  1983.
\newblock \doi{10.1214/aop/1176993665}.

\bibitem[O'Donnell(2014)]{odonnell2014analysis}
Ryan O'Donnell.
\newblock \emph{Analysis of {B}oolean Functions}.
\newblock Cambridge University Press, 2014.

\bibitem[Sanov(1957)]{Sanov57}
Ivan~Nikolaevich Sanov.
\newblock On the probability of large deviations of random magnitudes.
\newblock \emph{Matematicheskii Sbornik}, 42(84)\penalty0 (1):\penalty0 11--44,
  1957.
\newblock Translation at
  https://repository.lib.ncsu.edu/items/8f909775-ba1b-4874-acc2-362a8221edb0.

\bibitem[Touchette(2009)]{touchette2009large}
Hugo Touchette.
\newblock The large deviation approach to statistical mechanics.
\newblock \emph{Physics Reports}, 478\penalty0 (1--3):\penalty0 1--69, 2009.
\newblock \doi{10.1016/j.physrep.2009.05.002}.

\bibitem[Vaart(1998)]{vaart1998asymptotic}
A.~W. van~der Vaart.
\newblock \emph{Asymptotic Statistics}.
\newblock Cambridge University Press, 1998.
\newblock ISBN 9780511802256.
\newblock \doi{10.1017/cbo9780511802256}.

\bibitem[Van~Parys et~al.(2021)Van~Parys, Esfahani, and Kuhn]{vanparys2021data}
Bart P.~G. Van~Parys, Peyman~Mohajerin Esfahani, and Daniel Kuhn.
\newblock From data to decisions: Distributionally robust optimization is
  optimal.
\newblock \emph{Management Science}, 67\penalty0 (6):\penalty0 3387--3402,
  2021.

\bibitem[Vapnik and Chervonenkis(1971)]{vapnik1971}
V.~N. Vapnik and A.~Y. Chervonenkis.
\newblock On the uniform convergence of relative frequencies of events to their
  probabilities.
\newblock \emph{Theory of Probability and Its Applications}, 16\penalty0
  (2):\penalty0 264--280, 1971.
\newblock \doi{10.1137/1116025}.

\bibitem[Ver~Steeg and Galstyan(2014)]{versteeg2014discovering}
Greg Ver~Steeg and Aram Galstyan.
\newblock Discovering structure in high-dimensional data through correlation
  explanation.
\newblock \emph{Advances in neural information processing systems}, 27, 2014.

\bibitem[Watanabe(1960)]{watanabe1960information}
Satosi Watanabe.
\newblock Information theoretical analysis of multivariate correlation.
\newblock \emph{IBM Journal of Research and Development}, 4\penalty0
  (1):\penalty0 66--82, 1960.
\newblock \doi{10.1147/rd.41.0066}.

\bibitem[Zhang et~al.(2021)Zhang, Bengio, Hardt, Recht, and
  Vinyals]{zhang2021understanding}
Chiyuan Zhang, Samy Bengio, Moritz Hardt, Benjamin Recht, and Oriol Vinyals.
\newblock Understanding deep learning (still) requires rethinking
  generalization.
\newblock \emph{Communications of the ACM}, 64\penalty0 (3):\penalty0 107--115,
  2021.
\newblock \doi{10.1145/3446776}.

\end{thebibliography}

\appendix
\appendixtitle

\section{Proofs}
\label{app:proofs}

Lemma~\ref{lem:conditioned} is the first equality of Lemma~4 of \cite{balsubramani2020sharp}.
Lemma~\ref{lem:chain} is identity~(2.11) of \cite{csiszar1984Sanov} for an exchangeable law.
Lemma~5 of \cite{balsubramani2020sharp} states it for the conditioned law.

\begin{lemma}
\label{lem:conditioned}
$\dkl{\muA}{P^{\otimes n}} = -\log\Prob_P(\Pemp\in\cA)$.
\end{lemma}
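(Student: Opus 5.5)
The plan is to compute the Radon--Nikodym derivative of $\muA$ with respect to $P^{\otimes n}$ explicitly and integrate its logarithm against $\muA$. Write $p := \Prob_P(\Pemp\in\cA) = P^{\otimes n}(E_\cA)$. By the hypothesis $p>0$, and by the measurability set up in Section~\ref{sec:setup-sample}, $E_\cA$ is an event, so the conditioned law $\muA(B) = P^{\otimes n}(B\cap E_\cA)/p$ is well defined. It has density
\[
  \frac{d\muA}{dP^{\otimes n}} = \frac{\ind_{E_\cA}}{p}
\]
with respect to $P^{\otimes n}$. In particular $\muA\ll P^{\otimes n}$, so the relative entropy is given by $\int\log(d\muA/dP^{\otimes n})\,d\muA$.

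The log-density equals $-\log p$ on $E_\cA$ and $-\infty$ off it. Since $\muA(E_\cA^c) = 0$, only the value on $E_\cA$ matters, and
\[
  \dkl{\muA}{P^{\otimes n}} = \int_{E_\cA}(-\log p)\,d\muA = -\log p\cdot\muA(E_\cA) = -\log p .
\]
This is finite and requires no integrability conditions, because the integrand is constant $\muA$-almost surely.

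No step is a real obstacle. The only point needing care is that the version of the density may be chosen arbitrarily on the $P^{\otimes n}$-null set where versions differ. Any such change alters the integrand only on a $\muA$-null set, since $\muA\ll P^{\otimes n}$, so the value is unaffected. As a cross-check, the identity is the Donsker--Varadhan equality \eqref{eq:dv} at $h = 0$ on $E_\cA$ and $h=-\infty$ off it, where $\rho_h = \muA$ attains the supremum with $\EV_{\muA}[h]=0$. The direct computation above is self-contained, though, and is the route I would write.
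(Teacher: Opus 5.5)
Your proof is correct and follows the paper's argument: the density of $\muA$ against $P^{\otimes n}$ is $\ind_{E_\cA}/\Prob_P(\Pemp\in\cA)$, so the log-density is the constant $-\log\Prob_P(\Pemp\in\cA)$ $\muA$-almost surely, and its $\muA$-expectation is that constant. Your Radon--Nikodym phrasing suits a general measurable space better than the paper's pointwise notation $\muA(z)$.
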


\begin{lemma}
\label{lem:chain}
Let $\mu$ be an exchangeable distribution on $\cX^n$ with one-coordinate marginal $\omega$, and let $Q$ be a distribution on $\cX$ with $\omega\ll Q$.
Then $\dkl{\mu}{Q^{\otimes n}} = \dkl{\mu}{\omega^{\otimes n}} + n\,\dkl{\omega}{Q}$.
\end{lemma}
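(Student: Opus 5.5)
The plan is to factor the density of $\mu$ with respect to $Q^{\otimes n}$ through $\omega^{\otimes n}$. The cross term this produces is a sum of one-coordinate functions, so its expectation under $\mu$ depends only on the marginal $\omega$. Exchangeability is used only to make every one-coordinate marginal equal to $\omega$. The work lies in absolute continuity and integrability, because on a general measurable space $\mu$ need not be absolutely continuous with respect to $\omega^{\otimes n}$. For example, $\mu$ may sit on the diagonal of a nonatomic space. Relative entropies are read in $[0,\infty]$ throughout.

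First I would dispose of the case $\dkl{\omega}{Q} = \infty$. Projecting onto the first coordinate is a measurable map sending $\mu$ to $\omega$ and $Q^{\otimes n}$ to $Q$. By data processing, $\dkl{\mu}{Q^{\otimes n}} \ge \dkl{\omega}{Q} = \infty$, and the right side is also $\infty$ because $\dkl{\mu}{\omega^{\otimes n}}\ge 0$. So both sides are infinite.

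Next, assume $\dkl{\omega}{Q}<\infty$ and set $r := \log(d\omega/dQ)$, with value $-\infty$ where the density vanishes. Then $r$ is $\omega$-integrable with $\int r\,d\omega = \dkl{\omega}{Q}$. Put $h(z) := \sum_i r(z_i)$. Then $\omega^{\otimes n}$ has density $e^{h}$ with respect to $Q^{\otimes n}$. Since each coordinate of $\mu$ has law $\omega$, $h$ is $\mu$-integrable with $\EV_\mu[h] = n\,\dkl{\omega}{Q}$, and $\mu(h=-\infty)=0$.

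I would then compare absolute continuity on the two sides. Since $\omega^{\otimes n}\ll Q^{\otimes n}$, if $\mu$ is not absolutely continuous with respect to $Q^{\otimes n}$, it is not absolutely continuous with respect to $\omega^{\otimes n}$ either, and both sides are $\infty$. Conversely, suppose $\mu\ll Q^{\otimes n}$. Then $\mu$ lives on $\{h>-\infty\}$, where $Q^{\otimes n}$ and $\omega^{\otimes n}$ are mutually absolutely continuous, so $\mu\ll\omega^{\otimes n}$. On that set
\[
\log\frac{d\mu}{dQ^{\otimes n}} = \log\frac{d\mu}{d\omega^{\otimes n}} + h, \qquad \mu\text{-a.s.}
\]

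The last step, and the one needing care, is integrating this display under $\mu$. The term $h$ is integrable, so adding it cannot create an $\infty-\infty$ ambiguity. The negative part of either log-density is integrable whenever the negative part of the other is, and both negative parts are integrable by the standard bound $\int (\log f)^-\,d\mu\le 1/e$. Hence the expectation splits, in $[0,\infty]$, as $\dkl{\mu}{Q^{\otimes n}} = \dkl{\mu}{\omega^{\otimes n}} + n\,\dkl{\omega}{Q}$. This holds whether the total-correlation term is finite or not, which completes the proof.
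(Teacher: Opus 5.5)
Your proof is correct and follows the paper's route. Both arguments show that $\mu$ lives on $G^n$ with $G=\{d\omega/dQ>0\}$ (your $\{h>-\infty\}$), where $\omega^{\otimes n}$ and $Q^{\otimes n}$ are equivalent, so that $\mu\ll\omega^{\otimes n}$ exactly when $\mu\ll Q^{\otimes n}$, and both then integrate the factored log-density under $\mu$. The only difference is that you dispose of $\dkl{\omega}{Q}=\infty$ separately by data processing, which makes the cross term integrable. The paper covers that case in the same stroke: all the log terms have $\mu$-integrable negative parts, so the expectation splits in $(-\infty,\infty]$ without a case split.
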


\begin{lemma}
\label{lem:defect-lr}
Let $\Qs\ll P$, and let $g = \log(d\Qs/dP)$, with the value $-\infty$ where the density vanishes.
\begin{enumerate}
  \item[(i)] For every distribution $t$ on $\cX$ with $\dkl{t}{P}$, $\dkl{t}{\Qs}$ and $\dkl{\Qs}{P}$ finite, $g$ is $t$-integrable and $\ip{g}{t - \Qs} = \dkl{t}{P} - \dkl{\Qs}{P} - \dkl{t}{\Qs}$.
  \item[(ii)] If also $P\ll\Qs$, then $P^{\otimes n}\ll(\Qs)^{\otimes n}$ with $dP^{\otimes n}/d(\Qs)^{\otimes n} = e^{-n\ip{g}{\Pemp}}$, which at $z\in\cX^n$ is $\prod_{i=1}^n e^{-g(z_i)}$.
\end{enumerate}
\end{lemma}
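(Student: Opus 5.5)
For (i) I will express $g$ as a difference of two log-densities, each of which is $t$-integrable, and then integrate. The finiteness of $\dkl{t}{P}$ and $\dkl{t}{\Qs}$ gives $t\ll P$ and $t\ll\Qs$. Since $\Qs\ll P$, the chain rule for Radon--Nikodym derivatives gives
\[
\frac{dt}{dP} = \frac{dt}{d\Qs}\cdot\frac{d\Qs}{dP}
\]
$P$-almost everywhere on the set where $dt/dP>0$, and in particular $t$-almost everywhere. On the support of $t$ both factors are positive $t$-almost surely. So $t$-almost surely $g = \log(dt/dP) - \log(dt/d\Qs)$, and $g$ is finite there. The $-\infty$ convention on $\{d\Qs/dP = 0\}$ is harmless, because that set is $t$-null: $t\ll\Qs$ and $\Qs$ gives it zero mass.

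Next I check $t$-integrability of each log-density. Take $f = dt/dP$. The negative part satisfies
\[
\int (\log f)^-\,dt = \int f(\log f)^-\,dP \le 1/e,
\]
so the negative part is always integrable. Finiteness of $\dkl{t}{P} = \int \log f\,dt$ then forces the positive part to be integrable as well. The same argument applies to $dt/d\Qs$ with $\dkl{t}{\Qs}$. Hence $g$ is $t$-integrable and
\[
\ip{g}{t} = \dkl{t}{P} - \dkl{t}{\Qs}.
\]
Separately, $\ip{g}{\Qs} = \int \log(d\Qs/dP)\,d\Qs = \dkl{\Qs}{P}$, which is finite by hypothesis. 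Subtracting the two gives the formula for $\ip{g}{t-\Qs}$.

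For (ii), mutual absolute continuity makes $d\Qs/dP$ positive both $P$-almost surely and $\Qs$-almost surely, so $g$ is finite almost everywhere under both measures. The reciprocal identity then gives $dP/d\Qs = (d\Qs/dP)^{-1} = e^{-g}$ $\Qs$-almost everywhere. I would verify this by checking $\int_B e^{-g}\,d\Qs = \int_B e^{-g}e^{g}\,dP = P(B)$, using that the set where $g=-\infty$ is $P$-null.

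For the product measures, the function $\prod_i e^{-g(z_i)}$ integrates against $(\Qs)^{\otimes n}$ over a measurable rectangle $B_1\times\cdots\times B_n$ to $\prod_i P(B_i)$, by Tonelli. Rectangles form a $\pi$-system generating the product $\sigma$-algebra, so the $\pi$-$\lambda$ theorem extends the equality to all measurable sets. This gives $P^{\otimes n}\ll(\Qs)^{\otimes n}$ with the stated density. Finally $\prod_i e^{-g(z_i)} = e^{-\sum_i g(z_i)} = e^{-n\ip{g}{\Pemp}}$ by the definition of $\ip{g}{\Pemp}$.

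The main obstacle is the bookkeeping in (i): the identity between versions holds only $t$-almost surely, and integrability of the two log-densities must be established separately rather than assumed. The bound $x(\log x)^-\le 1/e$ is the step that makes the subtraction of the two relative entropies legitimate.
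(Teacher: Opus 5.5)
Your proposal is correct and follows the paper's proof essentially step for step. It uses the chain-rule decomposition $g = \log(dt/dP) - \log(dt/d\Qs)$ holding $t$-almost everywhere, the $1/e$ bound on the negative part of each log-density to obtain $t$-integrability before subtracting, and, for (ii), the identification $dP/d\Qs = e^{-g}$ extended from rectangles by Tonelli's theorem and the $\pi$--$\lambda$ theorem; the only cosmetic difference is that you check $dP/d\Qs = e^{-g}$ by integrating over sets rather than through the reciprocal chain rule, which is equivalent.
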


\subsection{Proofs for Section~\ref{sec:setup}}

\begin{proof}[Proof of Lemma~\ref{lem:defect-lr}]
(i) Finite $\dkl{t}{P}$ forces $t\ll P$, and finite $\dkl{t}{\Qs}$ forces $t\ll\Qs$.
The chain rule gives $dt/dP = (dt/d\Qs)\,(d\Qs/dP)$ $P$-almost everywhere, and the set where $dt/dP > 0$ has $t$-probability one; on that set both factors are positive.
So $\log(dt/dP) = \log(dt/d\Qs) + g$ holds $t$-almost everywhere with all three terms finite.
For a distribution $\nu$ with $t\ll\nu$, write $\varphi = dt/d\nu$; the negative part of $\log\varphi$ has $\int(\log\varphi)^-\,dt = \int_{\{0<\varphi<1\}}\varphi\log(1/\varphi)\,d\nu\le 1/e$, since $u\log(1/u)\le 1/e$ on $(0,1)$ and $\nu(\{0<\varphi<1\})\le 1$.
Hence $\dkl{t}{\nu} = \int\log\varphi\,dt$ is finite exactly when $\log\varphi$ is $t$-integrable.
At $\nu = P$ and at $\nu = \Qs$ this makes $\log(dt/dP)$ and $\log(dt/d\Qs)$ $t$-integrable, so $g = \log(dt/dP) - \log(dt/d\Qs)$ is $t$-integrable with $\ip{g}{t} = \dkl{t}{P} - \dkl{t}{\Qs}$.
The set where $g = -\infty$ is $\Qs$-null, so $\ip{g}{\Qs} = \int\log(d\Qs/dP)\,d\Qs = \dkl{\Qs}{P}$, finite by hypothesis.
Subtracting gives (i).
(ii) When also $P\ll\Qs$, the chain rule gives $(dP/d\Qs)(d\Qs/dP) = 1$ $P$-almost everywhere, hence $\Qs$-almost everywhere.
So $g$ is finite and $dP/d\Qs = e^{-g}$ $\Qs$-almost everywhere.
For a measurable rectangle $B_1\times\cdots\times B_n$, Tonelli's theorem gives $P^{\otimes n}(B_1\times\cdots\times B_n) = \prod_{i=1}^n\int_{B_i}e^{-g}\,d\Qs = \int_{B_1\times\cdots\times B_n}\prod_{i=1}^n e^{-g(z_i)}\,d(\Qs)^{\otimes n}(z)$.
The rectangles form a $\pi$-system that contains $\cX^n$ and generates the product $\sigma$-algebra, so $P^{\otimes n}$ and the measure with density $\prod_i e^{-g(z_i)}$ against $(\Qs)^{\otimes n}$ agree on every measurable set.
Hence $P^{\otimes n}\ll(\Qs)^{\otimes n}$ with that density, and $\prod_i e^{-g(z_i)} = \exp(-\sum_i g(z_i)) = e^{-n\ip{g}{\Pemp}}$ at $z$, since $\ip{g}{\Pemp} = \frac1n\sum_i g(z_i)$.
\end{proof}

\subsection{Proofs for Section~\ref{sec:results}}

\begin{proof}[Proof of Lemma~\ref{lem:conditioned}]
On $E_\cA$, $\muA(z) = P^{\otimes n}(z)/\Prob_P(\Pemp\in\cA)$, so $\log(\muA/P^{\otimes n})$ equals $-\log\Prob_P(\Pemp\in\cA)$ at every point of the support of $\muA$, and the expectation under $\muA$ is that constant.
\end{proof}

\begin{proof}[Proof of Lemma~\ref{lem:chain}]
Let $G = \{d\omega/dQ > 0\}$.
Each coordinate lies in $G$ with $\mu$-probability one, and $\omega^{\otimes n}$ and $Q^{\otimes n}$ are equivalent on $G^n$, so $\mu\ll\omega^{\otimes n}$ exactly when $\mu\ll Q^{\otimes n}$; otherwise both sides are infinite.
When they hold, $\log(d\mu/dQ^{\otimes n})(z) = \log(d\mu/d\omega^{\otimes n})(z) + \sum_{i=1}^n\log(d\omega/dQ)(z_i)$ on $G^n$.
Both logarithms have $\mu$-integrable negative parts, and taking the expectation under $\mu$, with each coordinate of law $\omega$, gives $\dkl{\mu}{Q^{\otimes n}} = \dkl{\mu}{\omega^{\otimes n}} + n\,\dkl{\omega}{Q}$.
\end{proof}

\begin{proof}[Proof of Theorem~\ref{thm:identity}]
By Lemma~\ref{lem:conditioned}, $-\log\Prob_P(\Pemp\in\cA) = \dkl{\muA}{P^{\otimes n}} = \dkl{\muA}{R^{\otimes n}} + \EV_{\muA}[\log(R^{\otimes n}/P^{\otimes n})]$, and the second term is $\sum_i\EV_{\muA}[g(Z_i)] = n\ip{g}{\om}$ by exchangeability.
By Lemma~\ref{lem:chain} with $Q = R$, which applies by hypothesis, $\dkl{\muA}{R^{\otimes n}} = \TC(\cA) + n\dkl{\om}{R}$.
Finally $\ip{g}{\om} = \ip{g}{R} + \ip{g}{\om - R}$ and $\ip{g}{R} = \dkl{R}{P}$.
\end{proof}

\begin{proof}[Proof of Corollary~\ref{cor:two-references}]
(i) At $R = \om$ both $\ip{g}{\om-R}$ and $\dkl{\om}{R}$ vanish.
(ii) With $g = \log(d\Qs/dP)$, $\ip{g}{\Qs} = \dkl{\Qs}{P}$, and both $\ip{g}{\om-\Qs} = \Delta(\om)$ and $\ip{g}{\Pemp-\Qs} = \Delta(\Pemp)$ hold by \eqref{eq:defect} whether or not the identity of Lemma~\ref{lem:defect-lr}(i) applies.
By Lemma~\ref{lem:defect-lr}(ii) the likelihood ratio of the sample is a function of its empirical measure, $dP^{\otimes n}/d(\Qs)^{\otimes n} = e^{-n\ip{g}{\Pemp}}$, so $\Prob_P(\Pemp\in\cA) = \EV_{\Qs}[\ind\{\Pemp\in\cA\}e^{-n\ip{g}{\Pemp}}] = e^{-n\dkl{\Qs}{P}}\EV_{\Qs}[\ind\{\Pemp\in\cA\}e^{-n\Delta(\Pemp)}]$, which is \eqref{eq:sharp-sanov}.
Substituting \eqref{eq:sharp-sanov} into \eqref{eq:identity} at $R=\Qs$ and canceling $n\dkl{\Qs}{P}$ gives \eqref{eq:three-way}.
\end{proof}

\begin{proof}[Proof of Proposition~\ref{prop:face}]
If $E_\cA = F^n$ then $\muA = P^{\otimes n}(\cdot\mid F^n) = P(\cdot\mid F)^{\otimes n}$, a product, so $\TC(\cA) = 0$.
The marginal is $P(\cdot\mid F)$ and the probability of the event is $P(F)^n$.
On the face, every $Q$ with $Q(F) = 1$ satisfies $\dkl{Q}{P} = \dkl{Q}{P(\cdot\mid F)} - \log P(F)$, since $dP(\cdot\mid F)/dP = \ind_F/P(F)$; the right side is minimized exactly at $Q = P(\cdot\mid F)$, so $\Qs = P(\cdot\mid F)$ and $\dkl{\Qs}{P} = -\log P(F)$.
Then $g = -\log P(F)$ on $F$, so $\Delta(Q) = \ip{g}{Q - \Qs} = 0$ at every $Q$ of the face.
Conversely, suppose $\TC(\cA) = 0$, so $\muA = \om^{\otimes n}$, and write $h = d\om/dP$ and $p = \Prob_P(\Pemp\in\cA)$.
Then $d\muA/dP^{\otimes n} = \ind_{E_\cA}/p$ takes the two values $0$ and $1/p$, so $\prod_i h(z_i)\in\{0,1/p\}$ for $P^{\otimes n}$-almost every $z$.
Put $F = \{h > 0\}$, which has $P(F) > 0$.
On $F^n$, which has positive measure, $\sum_i\log h(z_i) = -\log p$ almost everywhere; freezing all coordinates but one and applying Fubini makes $\log h$ constant $P$-almost everywhere on $F$.
Hence $h = \ind_F/P(F)$, so $\om = P(\cdot\mid F)$ and $P(F) = p^{1/n}$, and $\ind_{E_\cA} = \ind_{F^n}$ almost everywhere.
On a discrete alphabet with $P$ of full support every nonempty set has positive probability, so the exception set is empty and $\cA\cap\cT_n = \{t\in\cT_n : \supp t\subseteq F\}$.
\end{proof}

\begin{proof}[Proof of Proposition~\ref{prop:symmetry}]
For $\sigma\in G$ acting diagonally on $\cX^n$, $\hat P_n(\sigma z) = \sigma\cdot\hat P_n(z)$, so $E_\cA$ is invariant, and $P^{\otimes n}$ is invariant; hence $\muA$ is invariant and so is its one-coordinate marginal $\om$.
Write $h = d\om/dP$.
For each $\sigma$ the function $h\circ\sigma^{-1}$ is a version of $d(\sigma_*\om)/d(\sigma_*P) = d\om/dP$, so, applying this to $\sigma^{-1}$, $h\circ\sigma = h$ holds $P$-almost everywhere.
Averaging over the Haar probability measure of the compact group $G$, the function $\tilde h(x) = \int_G h(\sigma x)\,d\sigma$ is invariant at every point, hence $\cI$-measurable, and exchanging the two integrations shows it equals $h$ $P$-almost everywhere.
For $A\in\cI$ we have $\int_A\tilde h\,dP = \om(A)$, so $\tilde h$ is also a version of $d\om|_\cI/dP|_\cI$, and $\dkl{\om}{P} = \int\log\tilde h\,d\om = \dkl{\om|_\cI}{P|_\cI}$.
If $\cI$ is $P$-trivial then $\tilde h$ is $P$-almost everywhere constant and integrates to one, so $\om = P$.
Then \eqref{eq:lemma4} reduces to $-\log\Prob_P(\Pemp\in\cA) = \TC(\cA)$.
\end{proof}

\begin{proof}[Proof of Theorem~\ref{thm:partition}]
Let $J$ be the index of the piece containing $\Pemp$, a function of the sample.
Under $\muA$, $\Prob(J = j) = w_j$ and the conditional law of the sample given $J=j$ is $\mu_{\cA_j}$, so $\om = \sum_j w_j\omega_{\cA_j}$ by marginalizing.
For a mixture $\mu = \sum_j w_j\mu_j$ and any $\nu$, the compensation identity $\sum_j w_j\dkl{\mu_j}{\nu} = \dkl{\mu}{\nu} + \sum_j w_j\dkl{\mu_j}{\mu}$ holds on any measurable space.
Apply it at $\nu = \om^{\otimes n}$ with $\mu = \muA$ and $\mu_j = \mu_{\cA_j}$.
Because the pieces are disjoint, $d\mu_{\cA_j}/d\muA = \ind\{J=j\}/w_j$, so $\dkl{\mu_{\cA_j}}{\muA} = \log(1/w_j)$ and the last sum is $\Ent(w)$.
Each $\mu_{\cA_j}$ is exchangeable with marginal $\omega_{\cA_j}\ll\om$, since $\om\ge w_j\omega_{\cA_j}$, so Lemma~\ref{lem:chain} at $Q = \om$ gives $\dkl{\mu_{\cA_j}}{\om^{\otimes n}} = \TC(\cA_j) + n\,\dkl{\omega_{\cA_j}}{\om}$.
Collecting the terms and using $\sum_j w_j\dkl{\omega_{\cA_j}}{\om} = \JS_w$ gives \eqref{eq:partition}.
Every term is finite, since $\TC(\cA_j)$ is at most the exponent of its piece and $\Ent(w)\le\log k$.
\end{proof}

\subsection{Proofs for Section~\ref{sec:finite}}

\begin{proof}[Proof of Proposition~\ref{prop:bracket}]
For a type $t$, $(n+1)^{-K}e^{-n\dkl{t}{P}}\le\Prob_P(\Pemp = t)\le e^{-n\dkl{t}{P}}$, and $|\cT_n|\le(n+1)^K$ \cite{cover1999elements}.
Hence $\Prob_P(\Pemp\in\cA)\ge\Prob_P(\Pemp=\Qs_n)\ge(n+1)^{-K}e^{-n\dkl{\Qs_n}{P}}$ and $\Prob_P(\Pemp\in\cA)\le|\cA\cap\cT_n|\,e^{-n\dkl{\Qs_n}{P}}\le(n+1)^Ke^{-n\dkl{\Qs_n}{P}}$, so $|-\log\Prob_P(\Pemp\in\cA) - n\dkl{\Qs_n}{P}|\le K\log(n+1)$.
Subtracting \eqref{eq:lemma4} gives \eqref{eq:bracket}.
\end{proof}

\begin{proof}[Proof of Corollary~\ref{cor:convex-log}]
$\om = \EV_{\muA}[\Pemp]$ is a convex combination of points of $\cA$, so $\om\in\cA$ and $\dkl{\om}{P}\ge\dkl{\Qs}{P}$; substitute into the upper half of \eqref{eq:bracket}.
\end{proof}

\begin{proof}[Proof of Corollary~\ref{cor:mixture-limit}]
Apply Theorem~\ref{thm:identity} to the piece $\cA_j$ at $R = \Qs_j$: $-\log\Prob_P(\Pemp\in\cA_j) = n\dkl{\Qs_j}{P} + n\Delta_j(\omega_{\cA_j}) + n\dkl{\omega_{\cA_j}}{\Qs_j} + \TC(\cA_j)$, where $\Delta_j(\omega_{\cA_j})\ge 0$ by the Pythagorean inequality since $\omega_{\cA_j}\in\cA_j$ by convexity.
The proof of Proposition~\ref{prop:bracket} gives $-\log\Prob_P(\Pemp\in\cA_j)\le n\dkl{\Qs_{j,n}}{P} + K\log(n+1)$, so $n\dkl{\omega_{\cA_j}}{\Qs_j} + \TC(\cA_j)\le n[\dkl{\Qs_{j,n}}{P} - \dkl{\Qs_j}{P}] + K\log(n+1)\le c + K\log(n+1)$, which proves the two bounds.
By Pinsker's inequality $\omega_{\cA_j}\to\Qs_j$.
On a finite alphabet $\JS_w(\omega_1,\ldots,\omega_k) = \Ent(\sum_j w_j\omega_j) - \sum_j w_j\Ent(\omega_j)$, which is continuous in the weights and the arguments jointly on a compact set, hence uniformly continuous.
Therefore $\JS_{w(n)}(\omega_{\cA_1},\ldots,\omega_{\cA_k}) - \JS_{w(n)}(\Qs_1,\ldots,\Qs_k)\to 0$ uniformly in $w(n)$.
Dividing \eqref{eq:partition} by $n$, the within-piece term is at most $(c + K\log(n+1))/n$ and $\Ent(w)/n\le\log k/n$.
\end{proof}

\begin{proof}[Proof of Proposition~\ref{prop:halfspace-gauss}]
Write $S_k := \sum_{i\le k}(\ell(Z_i) - \mu)$ and $s_n := z\sigma\sqrt n$, so that $\{\Pemp\in\cA_n\} = \{S_n\ge s_n\}$.
The central limit theorem gives $\Prob_P(S_n\ge s_n)\to\bar\Phi(z)$, the first limit in \eqref{eq:halfspace-gauss}.
By exchangeability $\omega_{\cA_n}(x) = P(x)\,\Prob_P(S_{n-1}\ge s_n - \ell(x) + \mu)/\Prob_P(S_n\ge s_n)$, and averaging the numerator over $x\sim P$ recovers the denominator, so
\[
  \frac{\omega_{\cA_n}(x)}{P(x)} - 1 = \frac{\sum_y P(y)\bigl[\Prob_P(S_{n-1}\ge s_n-\ell(x)+\mu) - \Prob_P(S_{n-1}\ge s_n-\ell(y)+\mu)\bigr]}{\Prob_P(S_n\ge s_n)}
\]
Each bracket is plus or minus the probability that $S_{n-1}$ lies in a half-open interval of length $|\ell(x)-\ell(y)|$ at distance $O(1)$ from $s_n$.
If $\ell(Z_1)$ is non-lattice, that probability is $|\ell(x)-\ell(y)|\,e^{-z^2/2}/(\sigma\sqrt{2\pi n}) + o(n^{-1/2})$ by the local limit theorem for non-lattice sums \cite[Theorem~3.5.4]{durrett2019probability}, and by the same theorem each endpoint has probability $o(n^{-1/2})$.
If it is lattice with span $h$, every difference $\ell(x)-\ell(y)$ is a multiple of $h$, so the interval contains exactly $|\ell(x)-\ell(y)|/h$ of the possible values of $S_{n-1}$.
By the lattice local limit theorem \cite[Theorem~3.5.3]{durrett2019probability} each of these values has probability $h\,e^{-z^2/2}/(\sigma\sqrt{2\pi n}) + o(n^{-1/2})$.
Since $\sum_y P(y)(\ell(x)-\ell(y)) = \ell(x)-\mu$ and the denominator tends to $\bar\Phi(z) > 0$, in both cases, uniformly over the finite alphabet,
\[
  \frac{\omega_{\cA_n}(x)}{P(x)} - 1 = \frac{m(z)\,(\ell(x) - \mu)}{\sigma\sqrt n} + o(n^{-1/2})
\]
Since $\dkl{\omega}{P} = \frac12\sum_x P(x)\bigl(\omega(x)/P(x) - 1\bigr)^2 + O\bigl(\max_x|\omega(x)/P(x) - 1|^3\bigr)$, the second limit follows:
\[
  n\,\dkl{\omega_{\cA_n}}{P} = \frac{m(z)^2}{2\sigma^2}\sum_x P(x)(\ell(x)-\mu)^2 + o(1) = \frac{m(z)^2}{2} + o(1)
\]
The third follows from the first two and \eqref{eq:lemma4}.
\end{proof}

\begin{proof}[Proof of Proposition~\ref{prop:tangent}]
The Pythagorean inequality gives $\Delta(t) = \ip{g}{t-\Qs}\ge 0$ for $t\in\cA$, so $\cA\subseteq\cH$, and $\Delta(t) = 0$ on $\cA$ exactly when $\ip{g}{t} = \ip{g}{\Qs}$, which is $\cA\cap\partial\cH$.
For $t\in\cH$, $\dkl{t}{P} = \dkl{\Qs}{P} + \dkl{t}{\Qs} + \Delta(t)\ge\dkl{\Qs}{P}$ with equality only at $t = \Qs$, and $\Qs\in\cH$, so $\Qs$ is the projection onto $\cH$.
Since $\cA\subseteq\cH$, $\Prob_P(\Pemp\in\cH)\ge\Prob_P(\Pemp\in\cA)$, with equality if and only if no type of positive probability lies in $\cH\setminus\cA$; every type has positive probability under a full-support $P$.
\end{proof}

\begin{proof}[Proof of Proposition~\ref{prop:curvature-limit}]
Let $Z_1,\ldots,Z_n$ be independent with law $\Qs$, and set $Y_i := g(Z_i)-\ip{g}{\Qs}$, $W_n := \sum_i Y_i = n\ip{g}{\Pemp-\Qs}$ and $\zeta_n := \sqrt n\,\Pi(\Pemp-\Qs)$.
The $Y_i$ are bounded and centered, with variance $\sigma^2>0$ because $P\notin\cA$ makes $g$ nonconstant.
The likelihood ratio of Lemma~\ref{lem:defect-lr}(ii) gives, for $S\in\{\cA,\cH\}$,
\begin{equation}
\label{eq:gap-tilted}
  e^{n\dkl{\Qs}{P}}\,\Prob_P(\Pemp\in S) = \EV\bigl[\ind\{\Pemp\in S\}\,e^{-W_n}\bigr]
\end{equation}
and $\Pemp\in\cH$ exactly when $W_n\ge 0$.

Let $\lambda>0$ be the smallest eigenvalue of $M$, so that $\phi(u) = \tfrac12u^\top Mu + o(|u|^2)$ and $\phi(u)\ge\lambda|u|^2/4$ for small $u$.
A point $t\neq\Qs$ of $\cA\cap\partial\cH$ would put the segment from $\Qs$ to $t$ in $\cA\cap\partial\cH$, where $\ip{g}{\cdot-\Qs}$ vanishes while $\phi$ is positive near $0$ off the origin; so $\cA\cap\partial\cH = \{\Qs\}$.
The compact sets $\{t\in\cA : \ip{g}{t-\Qs}\le\varepsilon\}$ therefore shrink to $\Qs$ as $\varepsilon\downarrow 0$, and for each $L>0$ and all large $n$
\begin{equation}
\label{eq:gap-local}
  \{\Pemp\in\cA,\ W_n\le L\} = \{q_n(\zeta_n)\le W_n\le L,\ |\zeta_n|^2\le 4L/\lambda\},
  \qquad q_n(z) := n\,\phi(z/\sqrt n)
\end{equation}
where $q_n\to q$ uniformly on bounded sets.

The vectors $X_i := (\Pi(\delta_{Z_i}-\Qs), Y_i)\in V\times\mathbb{R}$ are independent, bounded and centered, with $\zeta_n = n^{-1/2}\sum_i\Pi(\delta_{Z_i}-\Qs)$ and $W_n = \sum_i Y_i$.
Their covariance $\Gamma$ is nonsingular, because $v\mapsto(\Pi v,\ip{g}{v})$ maps $\{v : \sum_x v(x) = 0\}$ isomorphically onto $V\times\mathbb{R}$ and $\mathrm{diag}(\Qs)-\Qs{\Qs}^\top$ is positive definite on that hyperplane when $\Qs$ has full support.
Conditioning a $N(0,\Gamma)$ vector on its last coordinate being zero leaves a vector with the law of $\zeta$.
Let $\psi(a,u) := \EV\exp(i\ip{a}{\Pi(\delta_{Z_1}-\Qs)} + iuY_1)$, so that $\EV\exp(i\ip{a}{\zeta_n}+iuW_n) = \psi(a/\sqrt n,u)^n$.
There are $c,\kappa>0$ with $|\psi(a,u)|\le\exp(-c(|a|^2+u^2))$ when $|a|^2+u^2\le\kappa^2$, and by the central limit theorem
\[
  \psi(a/\sqrt n, v/\sqrt n)^n\to\exp\bigl(-\tfrac12(a,v)\Gamma(a,v)^\top\bigr),
  \qquad
  \int_{\mathbb{R}}\exp\bigl(-\tfrac12(a,v)\Gamma(a,v)^\top\bigr)\,dv = \frac{\sqrt{2\pi}}{\sigma}\,\exp\bigl(-\tfrac12a^\top\Sigma_\zeta a\bigr)
\]
for fixed $a\in V$ and $v\in\mathbb{R}$.

In case (ii), $W_n$ takes values in $\Lambda_n := \eta(\mathbb{Z}-n\mu)$, whose least nonnegative element is $\eta x_n$, and $|\EV e^{iuY_1}|<1$ for $0<|u|\le\pi/\eta$, since equality would force $u(g(x)-g(y))\in 2\pi\mathbb{Z}$ for all $x,y$ and hence $u\eta\in2\pi\mathbb{Z}$.
Fourier inversion on $\eta\mathbb{Z}$ gives, for $w\in\Lambda_n$,
\[
  \sqrt n\,\EV\bigl[e^{i\ip{a}{\zeta_n}}\,\ind\{W_n = w\}\bigr] = \frac{\eta}{2\pi}\int_{|v|\le\pi\sqrt n/\eta} e^{-ivw/\sqrt n}\,\psi(a/\sqrt n, v/\sqrt n)^n\,dv
\]
The integrand is at most $e^{-cv^2}$ for $|v|\le\kappa\sqrt n/2$ and at most $\theta^n$ beyond, for some $\theta<1$ and all large $n$, by continuity of $\psi$.
Dominated convergence shows that the left side tends to $\eta(\sigma\sqrt{2\pi})^{-1}\exp(-\tfrac12a^\top\Sigma_\zeta a)$ uniformly in $w\in\Lambda_n\cap[-L,L]$, and that $\sqrt n\,\Prob(W_n = w)\le C$ for all $n$ and $w$.
By the continuity theorem the measures $B\mapsto\sqrt n\,\Prob(\zeta_n\in B,\ W_n = w_n)$ converge weakly to $\eta(\sigma\sqrt{2\pi})^{-1}$ times the law of $\zeta$ along every sequence $w_n\in\Lambda_n\cap[-L,L]$.
Fix $\varepsilon\in(0,L)$.
For large $n$, $|q_n-q|\le\varepsilon$ on $\{|z|^2\le 4L/\lambda\}$, and $q(z)\ge\lambda|z|^2/2$, so by \eqref{eq:gap-local}, on the event $\{W_n = w\}$ with $w\in[0,L]$,
\[
  \{q(\zeta_n)\le w-\varepsilon\}\subseteq\{\Pemp\in\cA\}\subseteq\{q(\zeta_n)\le w+\varepsilon\}
\]
The law of $q(\zeta)$, a positive combination of independent $\chi^2_1$ variables, has a continuous distribution function, so the sets $\{z : q(z)\le c\}$ have boundaries null for the law of $\zeta$.
Along every sequence $(w_n, c_n)$ with $c_n\to c$, weak convergence then gives $\sqrt n\,\Prob(q(\zeta_n)\le c_n,\ W_n = w_n)\to\eta(\sigma\sqrt{2\pi})^{-1}\Prob(q(\zeta)\le c)$, so the convergence is uniform in $w$ and $c$, and letting $\varepsilon\downarrow0$ through the uniform continuity of that distribution function gives
\[
  \sqrt n\,\Prob(\Pemp\in\cA,\ W_n = w) = \frac{\eta}{\sigma\sqrt{2\pi}}\,\Prob(q(\zeta)\le w) + o(1)
\]
uniformly in $w\in\Lambda_n\cap[0,L]$.
The terms of \eqref{eq:gap-tilted} with $W_n>L$ contribute at most $Ce^{-L}/((1-e^{-\eta})\sqrt n)$.
Writing $w = \eta(x_n+j)$ with $j\ge 0$ and letting $n\to\infty$ and then $L\to\infty$,
\begin{align*}
  \sqrt n\,e^{n\dkl{\Qs}{P}}\Prob_P(\Pemp\in\cH) &= \frac{\eta}{\sigma\sqrt{2\pi}}\sum_{j\ge0}e^{-\eta(x_n+j)} + o(1) = \frac{\eta\,e^{-\eta x_n}}{\sigma\sqrt{2\pi}\,(1-e^{-\eta})} + o(1)\\
  \sqrt n\,e^{n\dkl{\Qs}{P}}\Prob_P(\Pemp\in\cA) &= \frac{\eta}{\sigma\sqrt{2\pi}}\,\EV\!\!\sum_{j\ge\lceil q(\zeta)/\eta-x_n\rceil}\!\!e^{-\eta(x_n+j)} + o(1) = \frac{\eta\,\EV e^{-\eta(x_n+\lceil q(\zeta)/\eta-x_n\rceil)}}{\sigma\sqrt{2\pi}\,(1-e^{-\eta})} + o(1)
\end{align*}
where $\lceil q(\zeta)/\eta-x_n\rceil\ge0$ because $q(\zeta)\ge0$ and $x_n<1$.
Both leading terms are bounded below uniformly in $x_n$, the second because $x+\lceil q/\eta-x\rceil\le q/\eta+1$, so taking logarithms gives $\tau_n-F(x_n)\to0$.

In case (i), $|\EV e^{iuY_1}|<1$ for every $u\neq0$, since the values of $Y_1$ differ by elements of a dense subgroup.
Let $h_b(w) := (\sin(w/b)/(w/b))^2$, whose Fourier transform vanishes outside $[-2/b,2/b]$ and which is bounded below on $[-L,L]$ once $\pi b>L$.
The function $w\mapsto h_b(w-w_0)e^{iuw}$ again has a Fourier transform of bounded support, and expanding it through that transform, the same argument gives
\[
  \sqrt n\,\EV\bigl[h_b(W_n-w_0)\,e^{iuW_n+i\ip{a}{\zeta_n}}\bigr] \longrightarrow \frac{1}{\sigma\sqrt{2\pi}}\int h_b(w-w_0)\,e^{iuw}\,dw\;\exp(-\tfrac12a^\top\Sigma_\zeta a)
\]
for all $w_0$, $u$ and $a$, together with a bound on $\sqrt n\,\EV h_b(W_n-w_0)$, and hence on $\sqrt n\,\Prob(W_n\in[w_0,w_0+1])$, uniform in $w_0$ and $n$.
By the continuity theorem the finite measures $h_b(w)\sqrt n\,\Prob(W_n\in dw,\ \zeta_n\in dz)$ converge weakly to $(\sigma\sqrt{2\pi})^{-1}h_b(w)\,dw$ times the law of $\zeta$.
Dividing by $h_b$ gives $\sqrt n\,\EV f(W_n,\zeta_n)\to(\sigma\sqrt{2\pi})^{-1}\int\EV f(w,\zeta)\,dw$ for every bounded $f$ that vanishes when $w\notin[-L,L]$ and is continuous off a set null for $dw$ times the law of $\zeta$.
With $f = e^{-w}\ind\{0\le w\le L\}$, and through \eqref{eq:gap-local} with $f = e^{-w}\ind\{q(z)\pm\varepsilon\le w\le L\}$, and with the part $W_n>L$ at most $Ce^{-L}/\sqrt n$, this gives $\sqrt n\,e^{n\dkl{\Qs}{P}}\Prob_P(\Pemp\in\cH)\to(\sigma\sqrt{2\pi})^{-1}$ and
\[
  \sqrt n\,e^{n\dkl{\Qs}{P}}\Prob_P(\Pemp\in\cA) \longrightarrow \frac{1}{\sigma\sqrt{2\pi}}\int_0^\infty e^{-w}\,\Prob(q(\zeta)\le w)\,dw = \frac{\EV e^{-q(\zeta)}}{\sigma\sqrt{2\pi}}
\]
So $\tau_n\to-\log\EV e^{-q(\zeta)}$, and the Gaussian integral gives $\EV e^{-q(\zeta)} = \det(I+M\Sigma_\zeta)^{-1/2}$.

The function $F$ is continuous because $q(\zeta)$ has no atoms.
For $0\le x<x'<1$, $\lceil q(\zeta)/\eta-x'\rceil\le\lceil q(\zeta)/\eta-x\rceil$, with strict inequality when $q(\zeta)/\eta\in(x+j,x'+j]$ for an integer $j\ge0$, an event of positive probability because $q(\zeta)$ has a positive density on $(0,\infty)$; so $F$ is strictly decreasing.
As $x\to1$, $\lceil q(\zeta)/\eta-x\rceil\to\lceil q(\zeta)/\eta\rceil-1$ almost surely, so $F(x)\to F(0)-\eta$.
If $\mu$ is an integer then $x_n = 0$ for every $n$ and $\tau_n\to F(0)$.
Otherwise $x_{n+1}-x_n\equiv-\mu\pmod 1$ keeps consecutive phases a fixed positive distance apart on the circle formed by joining the ends of $[0,1)$.
Convergence of $F(x_n)$ would force $x_n$ to converge on that circle, since $F$ is continuous and strictly decreasing with $F(1^-) = F(0)-\eta<F(0)$, so $\tau_n$ has no limit.
\end{proof}

\subsection{Proof for Section~\ref{sec:special-cases}}

\begin{proof}[Proof of Lemma~\ref{lem:hoeffding-ratio}]
Write $\Psi(\lambda) := \log\EV_P[e^{\lambda\ell}]$, so that $Q_\lambda(x) = P(x)\,e^{\lambda\ell(x)-\Psi(\lambda)}$, which is positive at every letter because $P$ is.
On a finite alphabet $\Psi$ is infinitely differentiable, with $\Psi'(\lambda) = \ip{\ell}{Q_\lambda}$ and $\Psi''(\lambda) = \mathrm{Var}_{Q_\lambda}(\ell)$.
That variance is positive because $Q_\lambda$ has full support and $\ell$ is not constant, so $\Psi'$ is strictly increasing, with $\Psi'(0) = \ip{\ell}{P}$.
As $\lambda\to\infty$ the law $Q_\lambda$ concentrates on the letters where $\ell$ is largest, so $\Psi'(\lambda)\to\max_x\ell(x)$.
Hence for $0<\varepsilon<\max_x\ell(x)-\ip{\ell}{P}$ there is a unique $\lambda_\varepsilon>0$ with $\Psi'(\lambda_\varepsilon) = \ip{\ell}{P}+\varepsilon$, and $\lambda_\varepsilon\to0$ as $\varepsilon\downarrow0$, since the inverse of $\Psi'$ is continuous.

(i) Fix $\varepsilon$ and write $\lambda = \lambda_\varepsilon$.
Since $\log(Q_\lambda/P) = \lambda\ell-\Psi(\lambda)$, the expectation under $Q_\lambda$ gives $\dkl{Q_\lambda}{P} = \lambda\Psi'(\lambda)-\Psi(\lambda) = \lambda(\ip{\ell}{P}+\varepsilon)-\Psi(\lambda)$.
For every distribution $t$ on $\cX$ the same logarithm gives $\dkl{t}{P} = \dkl{t}{Q_\lambda}+\lambda\ip{\ell}{t}-\Psi(\lambda)$.
For $t\in\cA_\varepsilon$ this is at least $\dkl{t}{Q_\lambda}+\dkl{Q_\lambda}{P}$, since $\lambda>0$ and $\ip{\ell}{t}\ge\ip{\ell}{P}+\varepsilon$.
So $\dkl{t}{P}\ge\dkl{Q_\lambda}{P}$ on $\cA_\varepsilon$, with equality only at $t = Q_\lambda$, which lies in $\cA_\varepsilon$; hence $\Qs_\varepsilon = Q_{\lambda_\varepsilon}$.
The function $s\mapsto s(\ip{\ell}{P}+\varepsilon)-\Psi(s)$ is concave, and its derivative $\ip{\ell}{P}+\varepsilon-\Psi'(s)$ vanishes at $s = \lambda$, so its maximum over $\mathbb{R}$ is its value $\dkl{\Qs_\varepsilon}{P}$ at $\lambda$.

(ii) The function $\lambda\mapsto\lambda\Psi'(\lambda)-\Psi(\lambda)$ vanishes at $0$ and has derivative $\lambda\Psi''(\lambda)$.
So $\dkl{\Qs_\varepsilon}{P} = \int_0^{\lambda_\varepsilon}s\,\Psi''(s)\,ds$, while $\varepsilon = \Psi'(\lambda_\varepsilon)-\Psi'(0) = \int_0^{\lambda_\varepsilon}\Psi''(s)\,ds$.
The function $\Psi''$ is differentiable with $\Psi''(0) = \mathrm{Var}_P(\ell)$, so $\Psi''(s) = \mathrm{Var}_P(\ell)+O(s)$ as $s\to0$.
Hence $\varepsilon = \lambda_\varepsilon\mathrm{Var}_P(\ell)+O(\lambda_\varepsilon^2)$ and $\dkl{\Qs_\varepsilon}{P} = \tfrac12\lambda_\varepsilon^2\mathrm{Var}_P(\ell)+O(\lambda_\varepsilon^3)$ as $\varepsilon\downarrow0$.
Dividing the second expansion by the square of the first gives $\dkl{\Qs_\varepsilon}{P}/\varepsilon^2\to1/(2\mathrm{Var}_P(\ell))$.

(iii) Let the values of $\ell$ lie in $[a,a+1]$.
Hoeffding's lemma \cite[inequality~(4.16)]{hoeffding1963} gives $\Psi(s)\le s\ip{\ell}{P}+s^2/8$ for every $s\in\mathbb{R}$.
By (i), $\dkl{\Qs_\varepsilon}{P}\ge\max_s\{s(\ip{\ell}{P}+\varepsilon)-s\ip{\ell}{P}-s^2/8\} = \max_s\{s\varepsilon-s^2/8\} = 2\varepsilon^2$.
The variance is $\mathrm{Var}_P(\ell) = \EV_P[(\ell-a-\tfrac12)^2]-(\ip{\ell}{P}-a-\tfrac12)^2\le\EV_P[(\ell-a-\tfrac12)^2]\le\tfrac14$, since $|\ell-a-\tfrac12|\le\tfrac12$.
Equality throughout forces $\ip{\ell}{P} = a+\tfrac12$ and, because $P$ has full support, $|\ell-a-\tfrac12| = \tfrac12$ at every letter.
So $\ell$ takes only the values $a$ and $a+1$, and the mean $a+\tfrac12$ puts probability one half on each.
Conversely, such an $\ell$ has variance $\tfrac14$.

(iv) By (ii), $\dkl{\Qs_\varepsilon}{P}/(2\varepsilon^2)\to1/(4\mathrm{Var}_P(\ell))$, which by (iii) is at least one, with equality exactly when $\mathrm{Var}_P(\ell) = \tfrac14$.
\end{proof}

\subsection{Proof for Section~\ref{sec:learning}}

\begin{proof}[Proof of Proposition~\ref{prop:sampler}]
Under $R^{\otimes n}$ a type $t$ has probability $\Prob_P(\Pemp = t)e^{n\ip{g}{t}}$, so $\EV[X] = \sum_{t\in\cA}\Prob_P(\Pemp=t) = \Prob_P(\Pemp\in\cA)$ and $\EV[X_B] = \Prob_P(\Pemp\in B)$, while $\EV[X_B^2] = \sum_{t\in B}\Prob_P(\Pemp = t)e^{-n\ip{g}{t}}$.
Keep the single term $t = \Qs_{B,n}$ and use $\ip{g}{t} = \dkl{t}{P} - \dkl{t}{R}$ and the type bounds of the proof of Proposition~\ref{prop:bracket}: $\EV[X_B^2]\ge(n+1)^{-K}e^{-2n\dkl{\Qs_{B,n}}{P} + n\dkl{\Qs_{B,n}}{R}}$, and $\EV[X_B]\le(n+1)^Ke^{-n\dkl{\Qs_{B,n}}{P}}$.
Dividing gives the bound.
\end{proof}

\section{Enumerations}
\label{app:eval}

The probabilities, conditioned marginals and total correlations below are exact: each is a finite sum over the types of the lattice, carried out in floating point.
Only the importance-sampling estimates of Appendix~\ref{sec:eval-mixture-limit} are random, and they are reported as ratios to the exact value.

%
%
%
%
%
%
%
%
%
%
%
%
%
%
%
%
%
%
%
\subsection{The regime map}
\label{sec:eval-regime-map}

Figure~\ref{fig:regime-dial} shows $2954$ configurations of the $11$ families drawn above its plot, and Figure~\ref{fig:regime-signatures} the $1538$ of them on the shared range of sample sizes defined below.
Each family is a set $\cA$ of distributions on $\cX$, and the event is $\{\Pemp \in \cA\}$.
A configuration fixes the alphabet, the population, the sample size and the parameters of the set, which run over fixed grids.
The alphabets have $K = 3$ and $K = 4$ letters.
The populations are the uniform one and a skewed one, $P = (0.5, 0.3, 0.2)$ on three letters and $P = (0.4, 0.3, 0.2, 0.1)$ on four.
The loss is $\ell = (0, 1/(K-1), \ldots, 1)$, and $\mathrm{Var}_t(\ell) = \ip{(\ell - \ip{\ell}{t})^2}{t}$ is its variance under a distribution $t$.
Both figures plot a shared range of sample sizes, $n \in \{6,\allowbreak 8,\allowbreak 10,\allowbreak 12,\allowbreak 15,\allowbreak 18,\allowbreak 21,\allowbreak 24,\allowbreak 28,\allowbreak 32,\allowbreak 36,\allowbreak 40\}$ on three letters and the first eight of these, up to $n = 24$, on four.
Figure~\ref{fig:regime-dial} alone plots a larger range, $n \in \{45,\allowbreak 50,\allowbreak 55,\allowbreak 60,\allowbreak 70,\allowbreak 80,\allowbreak 90,\allowbreak 100\}$ on three letters and $\{28,\allowbreak 32,\allowbreak 36,\allowbreak 40,\allowbreak 45,\allowbreak 50,\allowbreak 60,\allowbreak 70,\allowbreak 80\}$ on four.
Unless a family names its population, its configurations run on both populations of each alphabet and on both ranges.
\begin{description}
\item[Half-space] $\{t : \ip{\ell}{t} \ge \ip{\ell}{P} + \varepsilon\}$ for $\varepsilon \in \{0.02, 0.05, 0.08, 0.12, 0.16, 0.20, 0.25, 0.30, 0.35, 0.40\}$; $399$ configurations on the shared range and $340$ on the larger.
\item[Linear family] $\{t : \ip{\ell}{t} = \ip{\ell}{P} + \varepsilon\}$ for $\varepsilon \in \{0.05, 0.12, 0.20, 0.30\}$, at the sample sizes where the level set contains a type; $21$ configurations on the shared range, all at $n \in \{10, 15, 40\}$, and $76$ on the larger, at every size on three letters and at $n \ge 40$ on four.
\item[Variance set] $\{t : \mathrm{Var}_t(\ell) \ge v\}$ for $v \in \{0.06, 0.09, 0.12, 0.15, 0.18, 0.21, 0.24\}$; $157$ configurations on the shared range and $131$ on the larger.
The set is convex, because the variance of a mixture is at least the mixture of the variances.
\item[Relative-entropy ball] $\{t : \dkl{t}{\pi} \le r\}$ for $r \in \{0.02, 0.05, 0.10, 0.20, 0.35\}$, centered at $\pi(x) = P(x-1)$, the population with its letters shifted cyclically by one, on the skewed populations; $87$ configurations on the shared range and $69$ on the larger.
\item[Ball centered on $P$] The same balls on a uniform population, where the shift leaves $P$ unchanged; $35$ configurations on the shared range and $7$ on the larger.
Those balls contain $P$ and are invariant under the full symmetric group.
\item[Face] $\{t : \supp t \subseteq F\}$ for $F$ the first $|F|$ letters and $1 \le |F| \le K - 1$; $95$ configurations on the shared range and $86$ on the larger.
\item[Union of $k < K$ half-spaces] $\bigcup_{j \le k}\{t : t_j \ge 1/K + \varepsilon\}$ for $2 \le k \le K - 1$ and $\varepsilon \in \{0.05, 0.10, 0.18, 0.28\}$, on the uniform population; $85$ configurations on the shared range and $93$ on the larger.
\item[Union of all $K$ half-spaces] The same union with $k = K$; $47$ configurations on the shared range and $52$ on the larger.
\item[Union of faces] $\{t : t_j = 0 \text{ for some } j \le m\}$ for $2 \le m \le K$; $93$ configurations on the shared range and $86$ on the larger.
\item[Two-sided tail] $\{t : |\ip{\ell}{t} - \ip{\ell}{P}| \ge \varepsilon\}$ for $\varepsilon \in \{0.05, 0.10, \ldots, 0.40\}$ with $\varepsilon < \min(\ip{\ell}{P}, 1 - \ip{\ell}{P})$, so that both tails are nonempty; $242$ configurations on the shared range and $238$ on the larger.
\item[Low-variance set] $\{t : \mathrm{Var}_t(\ell) \le v\}$ for $v \in \{0.01, 0.02, 0.03, 0.04, 0.06, 0.08, 0.10\}$ with $v < \mathrm{Var}_P(\ell)$, so that the set excludes $P$; $277$ configurations on the shared range and $238$ on the larger.
\end{description}
A configuration is kept when the event contains a type and has probability at most one half.
The exponent in the denominator of the share is then at least $\log 2$.
The rarest event has probability $10^{-19.1}$ on the shared range and $10^{-48.2}$ on the larger.

A type $t \in \cT_n$ has probability $\Prob_P(\Pemp = t) = \binom{n}{nt}\prod_x P(x)^{nt(x)}$, where $\binom{n}{nt} = n!/\prod_x (nt(x))!$ is the size of its type class and is computed through the logarithm of the gamma function.
The lattice $\cT_n$ has $\binom{n+K-1}{K-1}$ types, $5151$ at $n = 100$ on three letters and $91{,}881$ at $n = 80$ on four, so every quantity below is an exact sum.
The probability of the event is $\Prob_P(\Pemp\in\cA) = \sum_{t\in\cA\cap\cT_n}\Prob_P(\Pemp = t)$, accumulated in log space.
With the weights $w_t = \Prob_P(\Pemp = t)/\Prob_P(\Pemp\in\cA)$, the conditioned marginal is $\om = \sum_{t} w_t\,t$ and the marginal rate is $n\,\dkl{\om}{P}$.
The conditioned sample law $\muA$ is uniform on each type class, so $\Ent(\muA) = -\sum_t w_t\log w_t + \sum_t w_t\log\binom{n}{nt}$ and $\TC(\cA) = n\,\Ent(\om) - \Ent(\muA)$.
The three terms of \eqref{eq:lemma4} are computed separately.
The identity holds on every configuration to within $1.3\times10^{-13}$ nats on the shared range and $1.0\times10^{-12}$ nats on the larger.
For a half-space or a linear family the information projection is the exponential tilt $\Qs(x) \propto P(x)e^{\lambda\ell(x)}$ with $\lambda$ solving the constraint as an equality.
For a ball it is $\Qs(x) \propto P(x)^{1/(1+b)}\pi(x)^{b/(1+b)}$ with $b$ solving $\dkl{\Qs}{\pi} = r$.
For a face it is $P$ conditioned on $F$.
For a union of half-spaces and for a two-sided tail it is the projection onto the piece of least relative entropy from $P$.
In each case $\Qs = P$ when $P$ lies in the set.
For the variance set the most probable type stands in for the projection.

Each record stores the probability, the marginal rate, the dependence, the share, the residual of \eqref{eq:lemma4} and the rate of the most probable type with the bracket of Proposition~\ref{prop:bracket}.
The record of a two-sided tail also stores the rate of its projection.
Where a projection is computed or the most probable type stands in for it, the record also stores the divergence of the marginal from that distribution and the defect $\Delta(\om)$ of \eqref{eq:defect} with that distribution in place of $\Qs$.
On $16$ variance-set configurations of four letters the most probable type has a zero coordinate where the marginal is positive, so the divergence is $+\infty$ and the defect $-\infty$.
The union of faces contains no distribution of full support, so its projection, $P$ conditioned on its most probable face, lacks full support too.
The rate of that face is recorded.
The divergence from the projection and the defect are left empty.
The low-variance set has a projection of full support but not in closed form.
The sweep does not compute it, so its divergence from the projection and its defect are left empty too.
Figures~\ref{fig:regime-dial} and~\ref{fig:regime-signatures} plot the share and the two terms against $-\log_{10}\Prob_P(\Pemp\in\cA)$.

\begin{figure}[t]
\centering
\includegraphics[width=\linewidth]{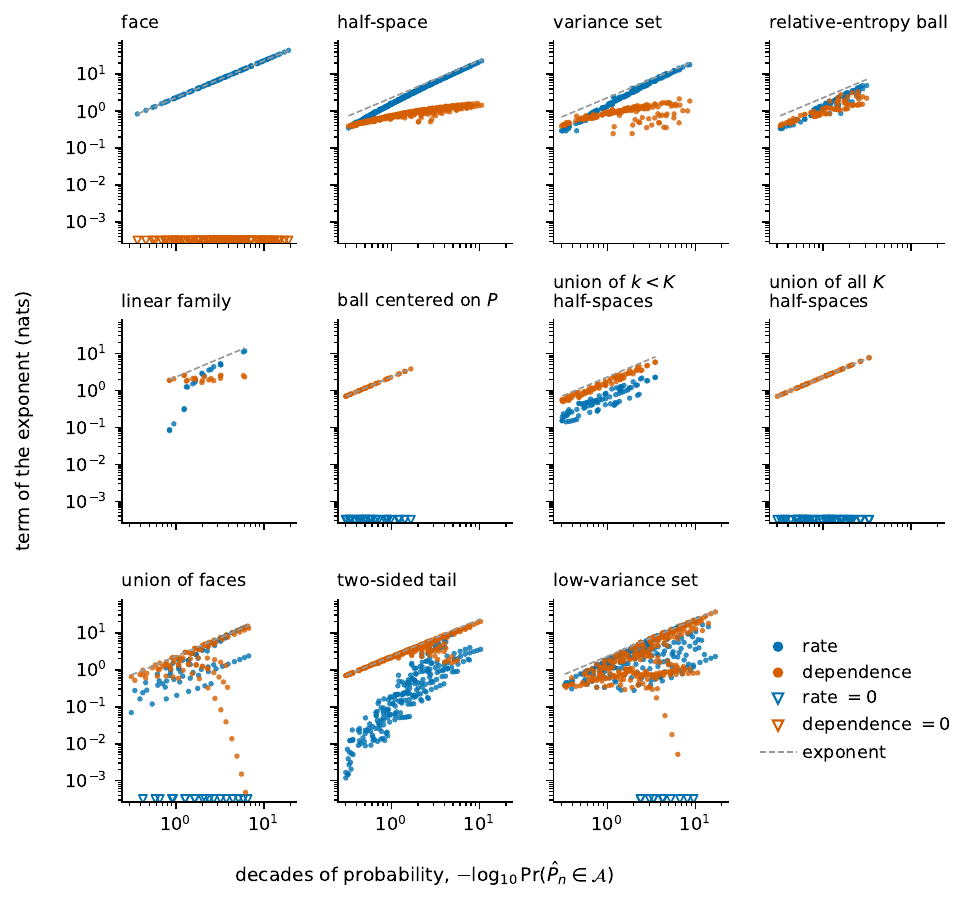}
\caption{\textbf{On a convex set that excludes $P$ the dependence grows on a shallower slope than the exponent, while on a union of separated modes of comparable probability it grows in proportion to the exponent.}
Both terms of \eqref{eq:lemma4} against the rarity of the event, one panel per family of Figure~\ref{fig:regime-dial}, at sample sizes $n \le 40$ on three letters and $n \le 24$ on four; the dashed line is the exponent.
Both axes are logarithmic, and open triangles at the foot of a panel mark a term that vanishes.
On the unions of faces of a skewed population one face comes to dominate, so the dependence eventually falls.}
\label{fig:regime-signatures}
\end{figure}

The two rows of glyphs at the top of Figure~\ref{fig:regime-dial}, the convex families in the first, draw one example set of each family's form on three letters, with $\ell = (0, 1/2, 1)$.
The convex families other than the ball centered on $P$ are drawn at $P = (0.5, 0.3, 0.2)$.
They are the face $\{\supp t \subseteq \{1, 2\}\}$, the half-space $\{\ip{\ell}{t} \ge 1/2\}$, the variance set $\{\mathrm{Var}_t(\ell) \ge 3/16\}$, the ball $\{\dkl{t}{(0.2, 0.3, 0.5)} \le 0.15\}$ and the linear family $\{\ip{\ell}{t} = 5/8\}$.
This ball is one of the balls of Appendix~\ref{sec:eval-curvature}, centered at $P$ with its first and last letters exchanged.
On the skewed population of three letters the balls of the sweep are centered at $(0.2, 0.5, 0.3)$.
The ball centered on $P$ is $\{\dkl{t}{P} \le 0.03\}$ at the uniform $P$.
The unions over $j \le k$ of the half-spaces $\{t_j \ge 1/3 + 1/4\}$ for $k = 2$ and $k = 3$ are drawn at the uniform population.
The union of all three faces, which is the boundary of the simplex, and the two-sided tail $\{|\ip{\ell}{t} - 1/2| \ge 1/4\}$ are also drawn at the uniform population.
The low-variance set $\{\mathrm{Var}_t(\ell) \le 0.05\}$ is drawn at $P = (0.5, 0.3, 0.2)$.
The crosses mark the information projection onto each piece.
On the face, the half-space, the linear family, the two balls, the unions of half-spaces and the union of faces they are the projections of the preceding paragraph.
On the variance set the cross marks the distribution of least relative entropy from $P$ in the set on a grid of $400$ points per side.
On the two-sided tail and the low-variance set the crosses mark the grid distribution of least relative entropy from $P$ in each piece, on the drawing grid of $260$ points per side.

On the shared range the computed dependence is below $10^{-13}$ on the $95$ faces and at least $4.8\times10^{-4}$ on every other configuration.
Off the faces the dependence falls below $0.24$ nats only on $11$ configurations, all on the skewed population of three letters.
They are unions of two faces and low-variance sets, where one piece of the set dominates.
The computed marginal rate is below $10^{-13}$ on $114$ configurations and at least $1.1\times10^{-3}$ on every other.
The $114$ are the $47$ unions of all $K$ half-spaces, the $35$ balls centered on $P$, the $19$ unions of all $K$ faces and the $13$ low-variance sets whose types are the vertices of the simplex, all on a uniform population.
They are exactly the configurations whose types form a set invariant under a transitive group of letter permutations that fixes the population.

On the shared range and a uniform population the dependence share of the unions of fewer than $K$ half-spaces or faces and of the two-sided tails is at least $0.57$.
The dependence share of the two-sided tails alone is at least $0.70$.
On a skewed population the most probable face comes to dominate a union of faces, so the share of the union falls toward zero.
On the union of two faces of three letters it reaches $3.4\times10^{-5}$ at $n = 40$.
The $399$ half-spaces lie below the limiting curve of Proposition~\ref{prop:halfspace-gauss} or at most $0.007$ above it.
Their median distance from it falls from $0.038$ at $n = 6$ to $0.007$ at $n = 40$.
The $20$ more than $0.05$ below it have $n\le 12$ and a threshold within $0.3$ of the largest value of $\ell$, where the Gaussian approximation to the tail of a bounded statistic is poor.

%
%
%
%
%
%
%
%
%
%

On the larger range the computed dependence is below $10^{-12}$ on the $86$ faces and positive on every other configuration.
Its smallest value off the faces is below $10^{-11}$ nats, a share below $10^{-12}$ of the exponent.
It occurs on the union of two faces of the skewed three-letter population at $n = 100$, where one face has come to dominate.
The computed marginal rate is below $10^{-12}$ on $76$ configurations and at least $1.4\times10^{-3}$ on every other.
The $76$ are the $52$ unions of all $K$ half-spaces, the $7$ balls centered on $P$ and the $17$ unions of all $K$ faces, all on a uniform population.
They are exactly the configurations invariant under a transitive group of letter permutations that fixes the population.
No low-variance set is among them, because none reduces to the vertices of the simplex at these sizes.

On the larger range and a uniform population the dependence share is at least $0.62$ on the unions of fewer than $K$ half-spaces, at least $0.78$ on the unions of fewer than $K$ faces and at least $0.76$ on the two-sided tails.
On a skewed population the share of a union of faces keeps falling.
For three faces of three letters it falls from $0.037$ at $n = 40$ to $2.9\times10^{-5}$ at $n = 100$.
For the unions of two, three and four faces of four letters it lies between $8.1\times10^{-5}$ and $9.8\times10^{-4}$ at $n = 80$.
The $340$ half-spaces lie below the limiting curve of Proposition~\ref{prop:halfspace-gauss} or at most $0.007$ above it.
None falls more than $0.02$ below the curve.
Their median distance from it falls from $0.007$ at $n = 28$ to $0.003$ at $n = 100$.
At $n = 100$ on three letters the $18$ configurations of the variance sets, the relative-entropy balls and the linear families have a dependence share between $0.06$ and $0.80$.
Of these, $16$ have a counterpart at $n = 40$ with the same set and population.
All $16$ have a lower share at $n = 100$ than their counterparts.

%
%
%
%
%
%
%
%
%
\subsection{A union's exponent against its number of modes, and the weights of equal-rate modes}
\label{sec:eval-union-tax}

Figure~\ref{fig:eval-unions}(a) takes a six-letter uniform population at $n=16$ and the union of the first $k$ half-spaces $\{t_j\ge 1/6+0.25\}$ for $k=1,\ldots,6$.
The group $S_k\times S_{6-k}$ permutes the first $k$ letters among themselves and the others among themselves and fixes $P$ and the union.
By Proposition~\ref{prop:symmetry} the marginal rate per sample is therefore the relative entropy of the two orbit masses of $\om$ from those of $P$, which are $k/6$ and $1-k/6$.
As $k$ grows the union becomes more probable and its marginal rate shrinks, to zero at $k=6$, where the group acts transitively and the whole exponent is dependence.

In panels (b) and (c) the union has two modes of equal rate and unequal prefactors on three letters: $\{t_1\ge 3/4\}$ and $\{t_2\ge 1/2\}$, whose thresholds are attainable on the lattice at every $n$ divisible by four.
The population is $P = (0.5, p, 0.5-p)$ with the $p\approx 0.2601$ at which the two projections have the same relative entropy, $0.1308$.
The pieces do not overlap at any $n$ enumerated.
The lattice form of the Bahadur--Rao constant, $1/((1-e^{-\lambda_j})\sigma_j)$, predicts the weight ratio $1.123$ and the first-mode weight $0.529$.
The enumerated ratio falls from $1.128$ at $n=12$ to $1.123$ at $n=768$, so the modes do not approach equal weights.
The dependence per sample decreases toward $0.0996$, the Jensen--Shannon divergence of the two projections at the prefactor weights and the limit of Corollary~\ref{cor:mixture-limit}.
The Jensen--Shannon divergence of the piece marginals is within $10^{-3}$ of that limit at $n=768$.

\begin{figure}[htbp]
\centering
\includegraphics[width=\linewidth]{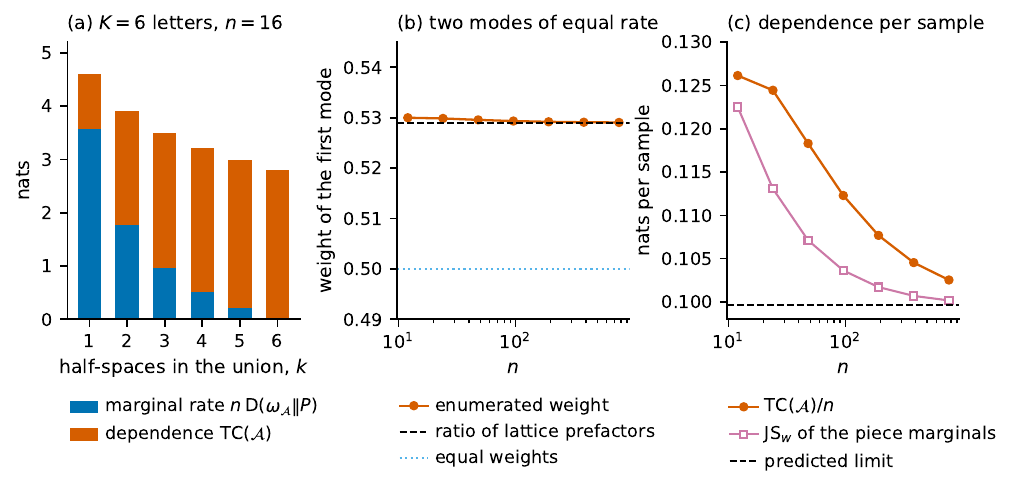}
\caption{\textbf{The dependence share of a union of half-spaces grows as its modes multiply.}
Modes of equal rate weigh in the ratio of their lattice prefactors.
(a) The exponent of the union of the first $k$ half-spaces $\{t_j\ge 1/6+0.25\}$ on six uniform letters at $n = 16$, split into the marginal rate and the dependence.
(b) The conditional weight of the first of two modes of equal rate on three letters, against the ratio of their lattice Bahadur--Rao constants and against equal weights.
(c) The dependence per sample of that two-mode union and the Jensen--Shannon divergence of its piece marginals, against the limit of Corollary~\ref{cor:mixture-limit} at the prefactor weights.}
\label{fig:eval-unions}
\end{figure}

%
%
%
%
%
%
%
%
\subsection{The mixture limit and the two samplers}
\label{sec:eval-mixture-limit}

On the four-letter uniform population the union of the first two half-spaces $\{t_j\ge 1/2\}$ has two modes with projections $\Qs_1 = (0.500, 0.167, 0.167, 0.167)$ and its permutation, of common rate $\dkl{\Qs}{P} = 0.1438$.
Their equal-weight mixture $\bar Q$ has $\dkl{\bar Q}{P} = 0.0566$, so Corollary~\ref{cor:mixture-limit} gives the limiting dependence share $1 - 0.0566/0.1438 = 0.606$.
Figure~\ref{fig:eval-mixture} follows $n$ from $8$ to $100$ against the half-space $\{t_1\ge 1/2\}$ on the same population.
The $\ell_1$ distance from $\om$ to $\bar Q$ falls as $n^{-0.86}$ on a least-squares fit over $n\ge 20$, the exponent at which the conditioned marginal of the half-space approaches its projection, while the distance to a single mode stays near $0.34$.
Under the hypothesis of Corollary~\ref{cor:mixture-limit}, Pinsker's inequality bounds the $\ell_1$ distance from a convex piece's marginal to its projection by $\sqrt{2(c + K\log(n+1))/n}$, a guaranteed rate of one half up to the logarithm, which the fitted $0.86$ exceeds.
The union's dependence share rises to $0.637$ at $n=28$ and is $0.629$ at $n=100$, while the half-space's falls from $0.32$ to $0.10$.

\begin{figure}[htbp]
\centering
\includegraphics[width=\linewidth]{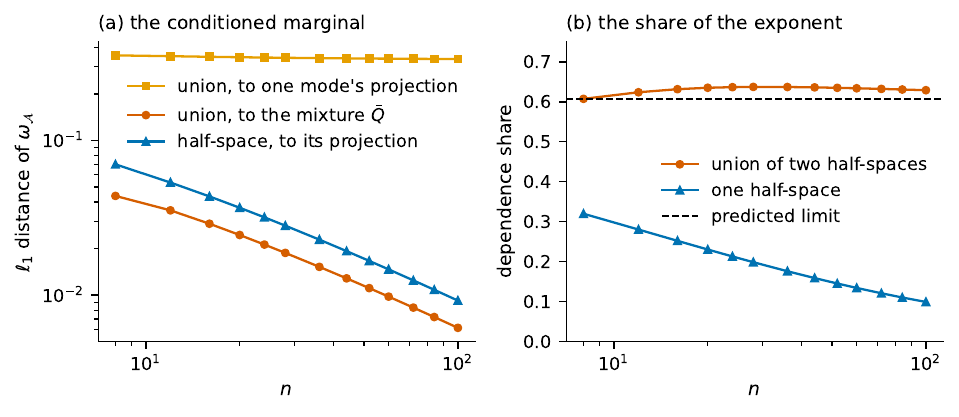}
\caption{\textbf{The conditioned marginal of a two-mode union converges to the mixture of its modes as fast as a half-space's converges to its projection.}
The union's dependence share stays near its limit.
Four-letter uniform population; the union of $\{t_1\ge 1/2\}$ and $\{t_2\ge 1/2\}$ against the half-space $\{t_1\ge 1/2\}$.
(a) The $\ell_1$ distance of the conditioned marginal to one mode's projection and to the mixture $\bar Q$ of the two, and for the half-space to its projection.
(b) The dependence share, with the limit $0.606$ of Corollary~\ref{cor:mixture-limit} for the union.}
\label{fig:eval-mixture}
\end{figure}

The second measurement, Figure~\ref{fig:learning}(d), estimates the probability of the four-mode symmetric union $\bigcup_j\{t_j\ge 1/2\}$ by importance sampling with $40{,}000$ draws per estimate.
One sampler draws from the projection $\Qs$ of the first mode, as the change-of-measure form \eqref{eq:sharp-sanov} prescribes; the other draws from the equal mixture of the four mode projections and weights by the mixture likelihood ratio.
The exact value comes from the lattice for $n\le 60$ and from inclusion--exclusion over the joint multinomial tails beyond.
From $n=40$ on, the single-projection estimate sits at one quarter of the exact value, the probability of the sampler's own mode, with a reported relative standard deviation near two.
At $n=20$ and $n=30$ it lands below and above the truth with relative standard deviations of $27$ and $115$, the signature of an occasional draw from another mode with an enormous weight.
The mixture sampler is within two percent of the exact value at every $n$ up to $160$, where the exact probability is $4\times10^{-11}$, with a relative standard deviation growing slowly from $1.1$ to $2.7$.

%
%
%
%
%
%
%
%
%
%
%
%
%
%
%
%
%
\subsection{The curvature gap}
\label{sec:eval-curvature}

For the balls $\{t : \dkl{t}{\pi}\le r\}$ around $\pi = (0.2, 0.3, 0.5)$ under $P = (0.5, 0.3, 0.2)$, six radii from $0.02$ to $0.25$ are enumerated against their tangent half-spaces $\cH$ at the projection (Proposition~\ref{prop:tangent}).
The whole lattice is enumerated at eight sample sizes from $12$ to $1536$.
The other sample sizes, up to $10^6$, are summed through \eqref{eq:gap-tilted} over the band of types with $0\le n\ip{g}{t-\Qs}\le 45$, whose omitted types change each probability by a relative amount below $10^{-15}$.

The balls fall under Proposition~\ref{prop:curvature-limit}(ii).
Here $\log(\pi/P) = \log(2.5)\,(-1, 0, 1)$ and the projection $\Qs$ is proportional to $P^{1/(1+\beta)}\pi^{\beta/(1+\beta)}$, so the values of $g$ are $g(1) + \eta\,(0, 1, 2)$ with $\eta = \beta\log(2.5)/(1+\beta)$.
The Hessian of the boundary is $M = \beta\,\Sigma_\zeta^{-1}$, so $q(\zeta) = \tfrac{\beta}{2}\chi^2_1$, the limit the gap would have without the lattice is $\tfrac12\log(1+\beta)$, and
\[
  F(x) = -\log\EV\exp\bigl(-\eta\,\lceil\beta\chi^2_1/(2\eta) - x\rceil\bigr)
\]
with $\chi^2_1$ a chi-square variable with one degree of freedom.
For the balls of radius $0.05$ and $0.15$ the range of $F$ has width $\eta$ and contains $\tfrac12\log(1+\beta)$.
\begin{center}
\begin{tabular}{@{}ccccc@{}}
\toprule
$r$ & $\beta$ & $\eta$ & range of $F$ & $\tfrac12\log(1+\beta)$ \\
\midrule
$0.05$ & $1.357$ & $0.528$ & $[0.268, 0.795]$ & $0.4288$ \\
$0.15$ & $0.369$ & $0.247$ & $[0.084, 0.331]$ & $0.1570$ \\
\bottomrule
\end{tabular}
\end{center}
At every radius $|\tau_n - F(x_n)|$ falls from between $6.4\times10^{-3}$ and $1.8\times10^{-2}$ at $n = 10^3$ to between $2.7\times10^{-5}$ and $7.6\times10^{-4}$ at $n = 10^6$.
The distance also depends on the phase, and it need not fall at every step: at $r = 0.05$ it is $2.1\times10^{-4}$ at $n = 10^5$ and $2.8\times10^{-4}$ at $n = 10^6$.
At $n = 10^6$ the gaps of the balls of radius $0.05$ and $0.15$ both lie within $3\times10^{-4}$ of $F(x_n)$.

At $r = 0.05$ the gap itself keeps moving across the range of $F$ as the phase $x_n$ moves.
\begin{center}
\begin{tabular}{@{}lcccc@{}}
\toprule
$n$ & $10^3$ & $10^4$ & $10^5$ & $10^6$ \\
\midrule
$x_n$ & $0.92$ & $0.23$ & $0.27$ & $0.65$ \\
$\tau_n$ & $0.297$ & $0.506$ & $0.487$ & $0.345$ \\
\bottomrule
\end{tabular}
\end{center}
Figure~\ref{fig:eval-curvature}(a) plots the exact gap of that ball against the phase at every $n$ from $2000$ to $4000$, and the points lie within $0.034$ of $F$, with a median distance of $0.005$.

Averaged over the phase, the two leading terms in the proof of Proposition~\ref{prop:curvature-limit} have ratio $\EV e^{-q(\zeta)}$, the limit without the lattice, because $x + \lceil y - x\rceil - y$ is uniform on $[0,1)$ when $x$ is.
In the same sense the half-space's lattice constant oscillates about its value without the lattice \cite{BahadurRao1960}.
The average of the gap over consecutive sample sizes is a different quantity.
Over the thousand sample sizes $n = 20000,\ldots,20999$ the average of the exact gap is close to the average of $F$ at the same phases and to the phase average $\int_0^1 F$.
\begin{center}
\begin{tabular}{@{}cccc@{}}
\toprule
$r$ & average of $\tau_n$ & average of $F(x_n)$ & $\int_0^1 F$ \\
\midrule
$0.05$ & $0.4192$ & $0.4190$ & $0.4185$ \\
$0.15$ & $0.1550$ & $0.1548$ & $0.1548$ \\
\bottomrule
\end{tabular}
\end{center}
At both radii the three averages lie below $\tfrac12\log(1+\beta)$, which is $0.4288$ and $0.1570$.
At every radius the phase average $\int_0^1 F$ lies below $\tfrac12\log(1+\beta)$, by $0.017$ at $r = 0.02$ and by less than $10^{-3}$ at $r = 0.20$ and $0.25$.

A control with incommensurable log-ratios, $\pi' = (0.2, 0.5, 0.3)$, falls under case (i).
The differences of the values of $g$ are in the ratio of $\log(4/15)$ to $\log(10/9)$, which is irrational because no nonzero integer powers of $4/15$ and $10/9$ are equal.
At $r = 0.05$ the limit is $\tfrac12\log(1+\beta') = 0.3419$.
Over windows of $40$ consecutive sample sizes the largest deviation of the exact gap from that limit shrinks as $n$ grows.
\begin{center}
\begin{tabular}{@{}lccccc@{}}
\toprule
window start & $3000$ & $10^4$ & $3\times10^4$ & $10^5$ & $3\times10^5$ \\
\midrule
largest deviation & $0.048$ & $0.013$ & $0.0062$ & $0.0055$ & $0.0048$ \\
\bottomrule
\end{tabular}
\end{center}
From $n = 3\times10^4$ on the window means stay within $2\times10^{-4}$ of the limit (Figure~\ref{fig:eval-curvature}(b)).
The approach is slow: the largest deviation falls by a factor of $10$ between $n = 3000$ and $n = 3\times10^5$.
Panel (b) of Figure~\ref{fig:eval-curvature} follows both balls of radius $0.05$ at $41$ sample sizes from $10^2$ to $10^6$.

\begin{figure}[htbp]
\centering
\includegraphics[width=\linewidth]{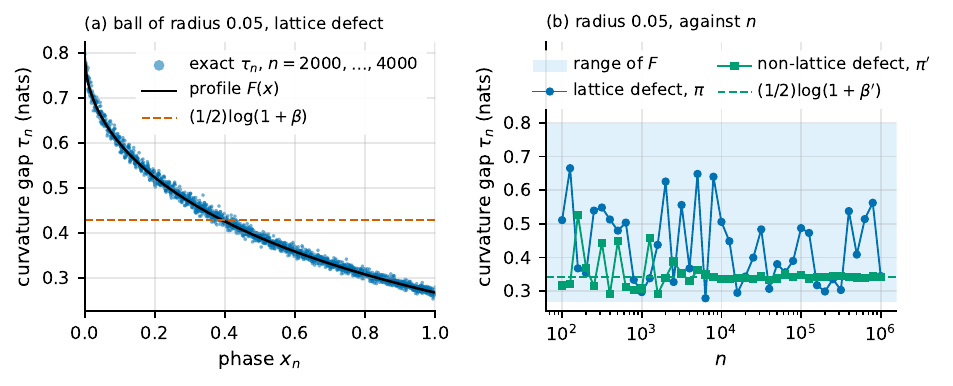}
\caption{\textbf{The curvature gap of a ball whose defect is a lattice statistic lies on a profile of the phase and has no limit, while the ball with incommensurable log-ratios converges.}
(a) The exact gap of the ball of radius $0.05$ at every $n$ from $2000$ to $4000$, against the phase $x_n$; the curve is $F$, and the dashed line is $\tfrac12\log(1+\beta)$.
(b) The exact gap against $n$ from $10^2$ to $10^6$ for the same ball, with the range of $F$ shaded, and for the ball around $\pi' = (0.2, 0.5, 0.3)$, which approaches its dashed limit.}
\label{fig:eval-curvature}
\end{figure}

The smallest defect over the types in the ball of radius $0.05$ falls from $0.019$ at $n=12$ to $0.00024$ at $n=1536$ as the lattice refines toward $\Qs$, so the zero set of the defect locates the exposed face while the gap measures the curvature.
The ball's dependence share falls with $n$ alongside the half-space's, from $0.40$ and $0.30$ at $n=12$ to $0.020$ and $0.019$ at $n=1536$, both logarithmic corrections to an exponent that grows linearly.

%
%
%
%
%
%
%
%
\subsection{Uniform convergence and the empirical-risk minimizer}
\label{sec:eval-classification}

The alphabet of Figure~\ref{fig:learning}(a) is $\cX = \{1,2,3\}\times\{0,1\}$, six letters, with the inputs uniform and $\Prob(Y=1\mid X=x) = 1/2 + s\,\sigma(x)$ for $\sigma = (+1,-1,+1)$ and a signal $s\in\{0, 0.1, 0.2, 0.3\}$.
The class is the $2^3$ labelings of the inputs, with the nested subclasses of the $k\in\{1,2,4,8\}$ labelings that assign label $0$ to the first $3-\log_2 k$ inputs.
The events computed at $\varepsilon = 0.2$ are the uniform-convergence event $U_\varepsilon$ over the subclass and the event that the empirical-risk minimizer over the subclass has population risk at least $\varepsilon$ above its empirical risk.
At zero signal every labeling has population risk one half, and a flip of every label maps each event onto the other while fixing $P$, so the two have the same probability and the same share.
The group of input permutations and per-input label flips then acts transitively on the six letters, fixes $P$ and permutes the full class, so by Proposition~\ref{prop:symmetry} the dependence share of both events over the full class is one.
The panel shows $n=24$; at $n=12$ and $n=18$ each share lies within $0.08$ of its value at $n=24$.
At every signal and at all three sample sizes the dependence share of either event over the full class stays within $0.05$ of one, while over the single labeling of $k = 1$ it is $0.20$ to $0.32$.
Over the full class the minimizer's gap event is rarer than the uniform-convergence event by a factor that grows with the signal, reaching four at $s=0.3$ and $n=24$.

The population of Figure~\ref{fig:learning}(b) is a real dataset: the $150$ iris flowers placed in the nine cells of petal-length tercile by species, whose empirical distribution is $P$.
Four of the nine cells are empty.
The class of all $27$ maps from tercile to species has Bayes risk $0.047$.
Sampling from this $P$ is sampling with replacement from the dataset.
The three events are computed at $n = 10, 12, 14$ and $\varepsilon = 0.15, 0.25$: the uniform-convergence event over the $27$ maps, the minimizer's gap event, and the deviation of the single best map, a half-space.
Over these sample sizes and levels the dependence shares of the three events lie in ranges that do not overlap.
\begin{center}
\begin{tabular}{@{}lcc@{}}
\toprule
event & smallest share & largest share \\
\midrule
union over the $27$ maps & $0.90$ & $0.99$ \\
minimizer's gap event & $0.63$ & $0.81$ \\
best single map & $0.16$ & $0.33$ \\
\bottomrule
\end{tabular}
\end{center}

%
%
%
%
%
%
%
%
\subsection{Calibrating the relative-entropy ball}
\label{sec:eval-dro-ball}

For the coverage event $\{\dkl{\Pemp}{P}\le r\}$ at level $1-\alpha = 0.95$, three radii are computed on uniform populations of three to five letters and on the skewed population $(0.6,0.3,0.1)$, at $n = 10, 20, 40, 80$.
The exact radius is the smallest attainable $r$ whose coverage reaches $0.95$.
The likelihood-ratio radius is $\chi^2_{K-1,0.95}/(2n)$.
The method-of-types radius is $(\log(1/\alpha) + K\log(n+1))/n$.
The likelihood-ratio radius undercovers at small $n$, at $0.92$ on four and on five letters at $n=10$, and is within one percent of the exact coverage by $n=80$.
The method-of-types radius is two and a half to five and a half times the exact one, the ratio growing with $n$ as its $K\log(n+1)$ term does, and covers with probability at least $0.9999$ at every $n$.
Figure~\ref{fig:learning}(c) shows the dependence share of the failure event $\{\dkl{\Pemp}{P} > r\}$ at the exact radius.
It is one on every uniform population, the symmetric-group case of Proposition~\ref{prop:symmetry} with a continuum of modes, while on the skewed population it rises from $0.88$ at $n=10$ to $0.999$ at $n=80$, as the conditioned marginal of the failure event approaches the population.

%
%
%
%
%
%
%
%
%
%
\subsection{The Hoeffding ratio and the two-coordinate dependence}
\label{sec:eval-classical}

\begin{samepage}
For the loss $\ell = (0,\tfrac12,1)$ under $P = (0.5,0.3,0.2)$, whose variance is $0.1525$, the table below gives the ratio of the Chernoff exponent of the half-space $\{\ip{\ell}{t}\ge\ip{\ell}{P}+\varepsilon\}$ to Hoeffding's exponent, $\dkl{\Qs}{P}/(2\varepsilon^2)$, at four thresholds.
\begin{center}
\begin{tabular}{@{}lcccc@{}}
\toprule
$\varepsilon$ & $0.25$ & $0.15$ & $0.10$ & $0.05$ \\
\midrule
$\dkl{\Qs}{P}/(2\varepsilon^2)$ & $1.53$ & $1.55$ & $1.58$ & $1.60$ \\
\bottomrule
\end{tabular}
\end{center}
\end{samepage}
Over these thresholds the ratio rises toward its limit $1/(4\cdot 0.1525)\approx 1.64$ (Lemma~\ref{lem:hoeffding-ratio}).
On the half-space $\{\ip{\ell}{t}\ge\tfrac12\}$ of the same example the total correlation of two coordinates falls from $2.8\times10^{-3}$ at $n=12$ to $1.3\times10^{-5}$ at $n=192$.
Over the same range $\TC(\cA)$ rises from $0.704$ to $1.637$.

\end{document}